\documentclass{article}

\usepackage{blindtext}
\usepackage[abbrvbib, preprint]{style}
\usepackage{lastpage}
\firstpageno{1}
\usepackage[letterpaper,margin=1in]{geometry}
\usepackage[T1]{fontenc}
\usepackage[utf8]{inputenc}

\usepackage{natbib}
\usepackage{caption}
\usepackage{subcaption}

\usepackage{algorithm}
\usepackage{algpseudocode}

\usepackage[percent]{overpic}

\usepackage{newfloat}
\usepackage{listings}
\DeclareCaptionStyle{ruled}{
  labelfont=normalfont,
  labelsep=colon,
  strut=off
}
\floatstyle{ruled}
\newfloat{listing}{tb}{lst}{}
\floatname{listing}{Listing}

\usepackage{booktabs}
\usepackage{tabularx}
\usepackage{array}

\usepackage{amsmath,amssymb,amsthm,mathtools}
\usepackage{multirow}
\usepackage{enumitem}
\usepackage{microtype}
\usepackage{mdframed}
\usepackage{placeins}

\usepackage[export]{adjustbox}
\usepackage[table]{xcolor}

\newtheoremstyle{boxedplain}
{4pt}{4pt}
{\itshape}
{}
{\bfseries}
{.}
{0.5em}
{\thmname{#1}~\thmnumber{#2}\thmnote{ (#3)}}

\theoremstyle{boxedplain}

\newmdtheoremenv[
linecolor=black!45,
linewidth=0.6pt,
backgroundcolor=white,
roundcorner=0pt,
innertopmargin=6pt,
innerbottommargin=6pt,
innerleftmargin=6pt,
innerrightmargin=6pt
]{assumption}{Assumption}

\newmdtheoremenv[
linecolor=black!45,
linewidth=0.6pt,
backgroundcolor=white,
roundcorner=0pt,
innertopmargin=6pt,
innerbottommargin=6pt,
innerleftmargin=6pt,
innerrightmargin=6pt
]{theorem}{Theorem}

\newmdtheoremenv[
linecolor=black!45,
linewidth=0.6pt,
backgroundcolor=white,
roundcorner=0pt,
innertopmargin=6pt,
innerbottommargin=6pt,
innerleftmargin=6pt,
innerrightmargin=6pt
]{proposition}{Proposition}

\newtheorem{lemma}{Lemma}

\definecolor{skyblue}{RGB}{135,206,235}

\newif\ifincludeappendix
\includeappendixtrue

\title{
SurfSpec: Enhancing Off-Target-Agnostic Specificity
by Bounding Pocket-Ligand Geometric Mismatch
}

\author{\name Minyeong Hwang\thanks{Equal Contribution.}\\
\addr Kim Jaechul Graduate School of AI, KAIST
\AND
\name Yoorim Gang\footnotemark[1]\\
\addr Interdisciplinary Program in Artificial Intelligence, Seoul National University
\AND
\name Ziseok Lee\\
\addr Department of Biomedical Sciences, Seoul National University
\AND
\name Wooyeol Lee\\
\addr Department of Biomedical Sciences, Seoul National University
\AND
\name Young Bin Park\\
\addr Calici
\AND
\name Jae-Mun Choi\\
\addr Calici
\AND
\name Kyungsu Kim\footnotemark[2]\\
\addr School of Transdisciplinary Innovations, Interdisciplinary Program in Artificial Intelligence, and Department of Biomedical Sciences, Seoul National University
\AND
\name Eunho Yang\thanks{Correspondence to: Eunho Yang \texttt{eunhoy@kaist.ac.kr}, Kyunsu Kim \texttt{kyskim@snu.ac.kr}.}\\
\addr Kim Jaechul Graduate School of AI, School of Computing, KAIST, AITRICS
}

\date{}

\newcommand{\Spec}{\mathrm{Spec}}
\newcommand{\dgm}{d_{\mathrm{gm}}}
\newcommand{\Tgt}{P_\mathrm{tgt}}
\newcommand{\Off}{\mathcal{O}}
\newcommand{\surf}{\mathcal{S}}

\newcommand{\xanc}{x_{\mathrm{label}}}

\begin{document}

\maketitle

\begin{abstract}
Lead optimization in structure-based drug design aims to improve target
binding while avoiding unintended interactions with off-target pockets.
However, existing affinity-driven methods do not explicitly control
specificity, whereas current specificity-aware approaches commonly
require prior knowledge of off-target structures.
We address off-target-agnostic specificity-aware lead optimization by
analyzing the geometric mismatch between a ligand and the target pocket.
We provide a conservative specificity lower bound for geometrically
separated off-targets without requiring access to off-target structures.
By metricizing pocket--ligand mismatch, the triangle inequality shows
that reducing target--ligand mismatch improves a conservative lower bound
on mismatch to a separated off-target class, which can be translated into
a specificity lower bound through an empirical geometry--affinity
calibration.
Motivated by this analysis, we introduce \textbf{SurfSpec}, an
off-target-agnostic lead optimization framework that iteratively grows
ligands toward under-occupied regions of the target pocket surface.
SurfSpec alternates between linker generation toward selected
target-surface patches, which provides geometric pseudo-labels, and
refinement under a pocket-conditioned ligand prior, which restores these
pseudo-labels into valid ligands.
On the CrossDocked2020 test set, SurfSpec reduces geometric mismatch and
outperforms evaluated off-target-agnostic lead optimization baselines in
empirical specificity, while maintaining competitive target-affinity
improvement.

\end{abstract}


\section{Introduction}
Lead optimization in structure-based drug design (SBDD) seeks to improve
an initial ligand for a target pocket.
Beyond increasing target-pocket affinity, an ideal optimized ligand
should also avoid strong interactions with undesirable off-target pockets,
thereby achieving ligand specificity.
Many existing lead optimization methods
\cite{pmdm,diffleop,delete,decompopt,ace}, however, primarily optimize
affinity to the target pocket.
As highlighted by recent work~\cite{sbe-diff}, high target affinity alone
does not guarantee specificity, because an optimized ligand may also bind
strongly to unintended pockets.
Such off-target interactions can reduce selectivity and cause adverse
effects~\cite{keiser}, motivating lead optimization methods that
explicitly account for ligand specificity.

Recent specificity-aware methods address this limitation using explicit
negative supervision, either through off-target pocket information or
activity labels for undesired targets.
SBE-Diff~\cite{sbe-diff}, for example, learns specific binding energy
from positive and in-batch negative pocket--ligand pairs, whereas
ActivityDiff~\cite{activitydiff} employs separately trained drug--target
activity classifiers for desired target activity and undesired off-target
activity.
Although these methods demonstrate the value of incorporating specificity
into molecular optimization, their applicability depends on the
availability and representativeness of negative target information.
In practice, potential off-target interactions form an open-ended set
that may include unknown, structurally unavailable, or out-of-distribution
targets.
Larger surrogate negative sets and stronger activity predictors may
improve coverage of observed interactions, but they cannot directly
account for off-targets that are absent during training or unavailable at
inference time.
This motivates an off-target-agnostic formulation of specificity-aware
lead optimization, in which only the target pocket is used during
optimization.

Instead of relying on off-target supervision, we seek a target-only
geometric principle that can be analyzed without accessing off-targets
during optimization.
We define geometric mismatch between a ligand and a pocket using the
Jensen--Shannon distance between surface-induced probability measures.
Because this mismatch is metricized, the triangle inequality provides a
conservative lower bound on ligand mismatch to geometrically separated
off-target pockets using only the target pocket and the current ligand.
Together with an empirical geometry--affinity calibration, this yields a
target-only lower-bound analysis for specificity over separated
off-target classes.

Motivated by this analysis, we propose \textbf{SurfSpec}, an
off-target-agnostic lead optimization framework that progressively
expands an initial ligand toward under-occupied regions of the target
pocket surface.
SurfSpec alternates between surface-directed pseudo-label generation and
molecular-prior-based recovery.
At each iteration, it identifies a nearby target-surface patch that is
insufficiently covered by the current ligand and constructs a pseudo-label
by extending the ligand toward the selected region.
A pretrained pocket-conditioned diffusion prior then recovers a valid
ligand that preserves the intended geometric modification.
Through this iterative procedure, SurfSpec improves coverage of
accessible target-pocket regions and encourages target-pocket surface
complementarity while maintaining consistency with the learned molecular
distribution.

To keep pseudo-label recovery simple and stable within the ligand-growth
pipeline, SurfSpec uses a task-specific low-noise recovery procedure. General inverse-problem solvers~\cite{dps,flowchef,fastdips,reddiff} can be applied to this task, but they often rely on either clean-state posterior approximations along the reverse trajectory~\cite{dps,flowchef} or multiple dependent optimization stages~\cite{fastdips,reddiff}.
The former can be unreliable in the high-noise regime, while the latter
increases the orchestration cost of refinement.
SurfSpec is motivated by the localized nature of the anchored recovery
task: the desired ligand is expected to remain close to the geometric
pseudo-label while lying on the highly structured pocket-conditioned
ligand distribution, making the corresponding anchored distribution
locally concentrated.
Based on this observation, SurfSpec performs gradient-flow mode recovery
for a localized pseudo-label-conditioned distribution at a single
intermediate noise level, re-noises around the recovered mode, and
completes sampling with a low-noise guided reverse SDE.
This design avoids high-noise anchor-conditioned estimation and reduces
dependence on repeated correction stages, serving as a practical recovery
component for surface-directed ligand growth rather than a new
general-purpose refinement principle.

\begin{figure}[t]
\centering
\includegraphics[
    width=0.8\columnwidth,
    trim={60 105 60 20},
    clip,
    keepaspectratio
]{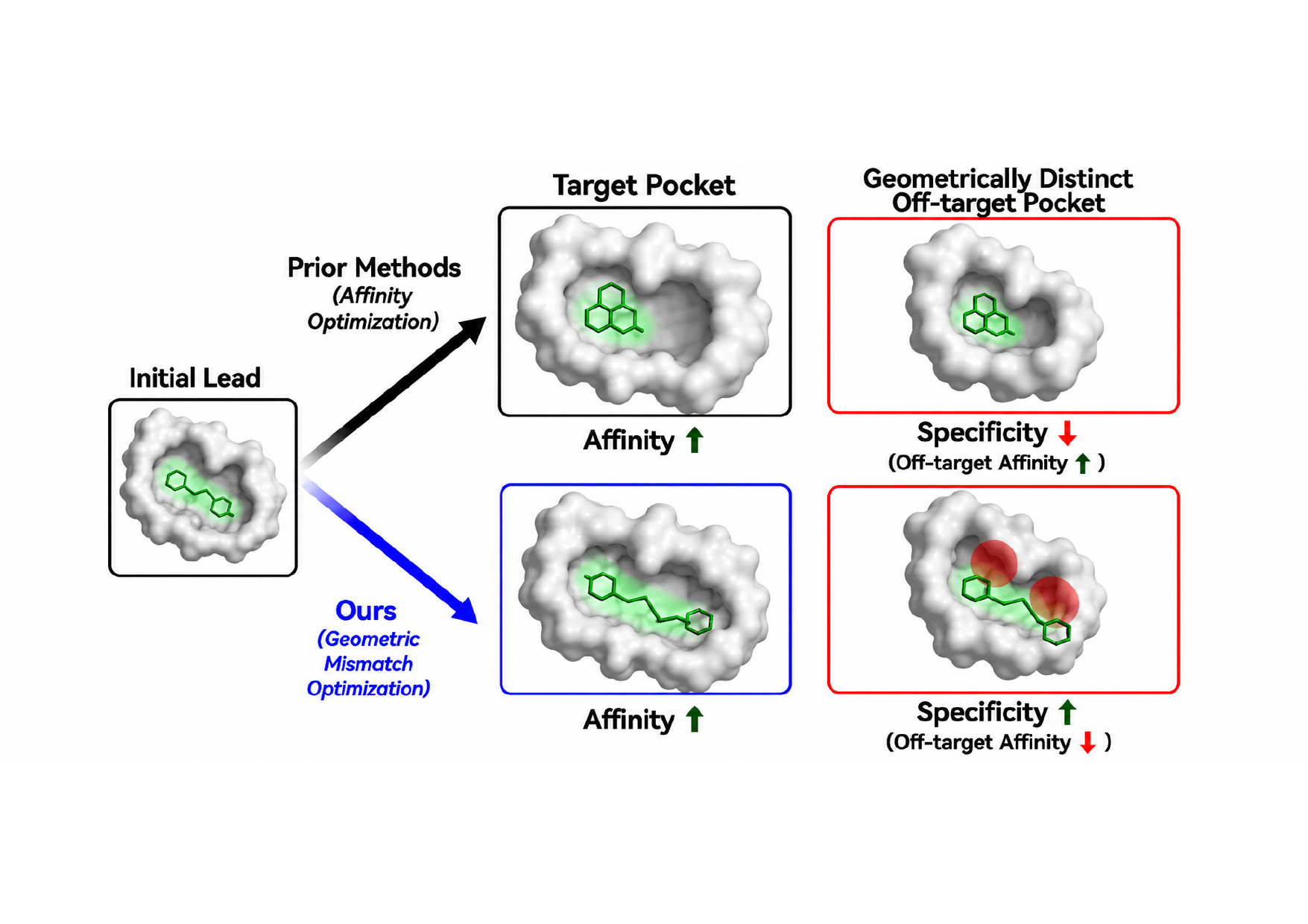}
\caption{
Conceptual comparison of off-target-agnostic lead optimization.
Existing target-only lead optimization methods can improve affinity to the native target pocket, but may also increase compatibility with off-target pockets, resulting in limited specificity improvement.
In contrast, SurfSpec optimizes the geometric fit to the native pocket, improving target affinity while reducing compatibility with geometrically distinct off-targets.
}
\label{fig:concept_specificity}
\end{figure}

On the CrossDocked2020 test set~\cite{crossdock2020}, SurfSpec achieves
higher empirical specificity than evaluated off-target-agnostic lead
optimization baselines while maintaining competitive target-pocket
affinity, as shown in Table~\ref{tab:main}.
It also outperforms a surrogate off-target-aware baseline guided by
randomly selected off-target sets, highlighting the risk of unreliable
specificity guidance from inaccurate off-target surrogates.
Further analyses show that SurfSpec improves pocket coverage and
geometric fit, and recovers surface-directed pseudo-labels with strong
validity and fidelity.

Our main contributions are as follows:
\begin{itemize}
\item We formulate off-target-agnostic specificity-aware lead optimization
through a target-only geometric mismatch metric, defined as the
Jensen--Shannon distance between probability measures induced by the
ligand surface and the ligand-oriented target-pocket surface.
Under a geometric separation condition on the off-target class, the
triangle inequality shows that reducing target--ligand mismatch improves
a conservative lower bound on ligand mismatch to separated off-targets,
which is translated into a specificity lower bound through an empirical
geometry--affinity calibration.

\item We introduce \textbf{SurfSpec}, an off-target-agnostic lead
optimization framework motivated by this target-only geometric principle.
SurfSpec iteratively grows ligands toward under-occupied regions of the
target pocket surface by alternating linker generation toward selected
surface patches and task-specific low-noise anchored recovery under a
pocket-conditioned ligand prior.

\item We evaluate SurfSpec against off-target-agnostic lead optimization
baselines using held-out off-target pockets that are never accessed
during optimization.
On CrossDocked2020, SurfSpec achieves higher empirical specificity while
maintaining competitive target-pocket docking performance, with improved
pocket coverage and geometric fit.
\end{itemize}

\section{Related Work}
\label{sec:related_work}

\paragraph{Lead Optimization and Molecular Specificity.}
Recent SBDD models support pocket-conditioned molecular generation and
lead optimization through autoregressive generation, equivariant
diffusion models, and controllable molecular optimization
\cite{pocket2mol,targetdiff,alidiff,diffsbdd,pmdm,decompopt,delete}.
Methods such as PMDM, DecompOpt, and Delete
\cite{pmdm,decompopt,delete} optimize an initial ligand or ligand
fragment for target-pocket affinity, drug-likeness, and validity, but do
not explicitly optimize molecular specificity.
More recent approaches, including SBE-Diff and ActivityDiff
\cite{sbe-diff,activitydiff}, incorporate negative pockets or activity
signals to improve specificity through off-target supervision.
SurfSpec instead considers a target-only setting that improves specificity
without requiring off-target supervision.

\paragraph{Geometric Complementarity.}
Classical docking methods \cite{docking1,docking2} exploit pocket--ligand
surface complementarity to identify favorable poses.
Protein-pocket-conditioned SBDD models
\cite{pocket2mol,targetdiff,diffsbdd,decompopt,pmdm,d3fg} use target
pocket geometry for molecular generation, elaboration, or optimization.
Other geometry-conditioned methods use reference molecular shapes,
multimodal ligand profiles, or learned geometric representations, as in
Diff-Shape \cite{diffshape}, ShEPhERD \cite{shepherd}, and GeoRCG
\cite{georcg}, while UniGuide \cite{uniguide} applies user-specified
geometric conditions through training-free inference-time guidance.
However, these approaches do not explicitly formulate pocket--ligand
surface complementarity as a metricized mismatch objective linked to
molecular specificity.
SurfSpec is oriented by this mismatch and connects geometric
complementarity to molecular specificity optimization.

\paragraph{Training-Free Sample Refinement.}
Existing training-free sample refinement methods can be grouped into three paradigms.
\emph{Reverse-path} methods, including SDEdit, DDNM, MCG, DPS, \(\Pi\)GDM, FAST-DIPS, FlowChef, and FlowDPS \cite{sdedit,ddnm,mcg,dps,gdm,fastdips,flowchef,flowdps}, initialize refinement from an intermediate noisy state and subsequently follow a reverse SDE/ODE trajectory.
SDEdit \cite{sdedit} constructs this state by directly perturbing an input anchor, whereas conditional reverse-path methods \cite{mcg,dps,gdm,flowdps} incorporate observations or constraints through guidance terms computed from denoised estimates.
\emph{Optimization-based} methods, such as DiffPIR, DDS, DAPS, DCDP, PnP-Flow, FLOWER, FlowLPS, and FAST-DIPS \cite{diffpir,dds,daps,dcdp,pnpflow,flower,flowlps,fastdips}, interleave prior-based denoising with explicit consistency-oriented optimization, projection, or proximal correction.
\emph{Variational posterior} methods, including RED-Diff, RSD, and FLAIR \cite{reddiff,rsd,flair}, instead optimize a parameterized variational distribution toward the target posterior.
These approaches provide general and flexible mechanisms for training-free
refinement across inverse problems and continuous data domains.
In contrast, SurfSpec adopts a simpler task-specific refinement strategy
tailored to geometrically constructed molecular pseudo-labels: it performs
anchored recovery at a low noise level under a pocket-conditioned molecular
prior, reducing reliance on high-noise clean-state posterior approximation
or multi-stage posterior optimization.

\section{Preliminaries}
\subsection{Problem Setup: Specific Lead Optimization without Off-Target Access}
\label{sec:prel_problem_setup}

We consider lead optimization for a target pocket \(\Tgt\) when off-target
information is unavailable.
A ligand is represented as \(L=(X_L,H_L)\), where
\(X_L\in\mathbb{R}^{3\times N_L}\) and
\(H_L\in\{0,1\}^{n_{\mathrm{types}}\times N_L}\) denote its coordinates
and atom-type features, respectively.
A pocket \(P=(X_P,H_P)\) is defined analogously.
For the methodology derivation in Section~\ref{sec:methodology}, we use
a rigid-body abstraction and identify ligands and pockets up to separate
rigid transformations:
\(L\sim\rho_g(L)\) and \(P\sim\rho_h(P)\) for
\(g,h\in\mathrm{SE}(3)\).

Given the target pocket \(\Tgt\) and a set of possible off-target pockets
\(\Off\), we define specificity as the affinity gap between the target
and the strongest off-target:
\begin{equation}
\Spec(L;\Tgt,\Off)
=
A_{\Tgt}(L)
-
\max_{O\in\Off}A_O(L),
\label{eq:spec}
\end{equation}
where \(A_P(L)\) denotes the affinity between ligand \(L\) and pocket
\(P\), with larger values indicating stronger binding.
In the methodology derivation, we instantiate \(A_P(L)\) under the
rigid-body abstraction as
\(\max_{g\in\mathrm{SE}(3)} a(\rho_g(P),L)\), where \(a\) is the
pose-level affinity.
In experiments, however, we evaluate affinity using the standard
AutoDock Vina docking protocol, allowing ligand torsional changes for
practical evaluation.
A ligand is therefore considered specific when it achieves strong
target-pocket affinity without exhibiting similarly strong affinity for
competing off-target pockets.

\paragraph{Specific Lead Optimization without Off-Target Access.}
Directly optimizing Eq.~\eqref{eq:spec} is generally impractical because
the true off-target set \(\Off\) is unavailable during lead optimization.
Existing methods therefore either optimize target-pocket affinity alone
or approximate the off-target set using a predefined collection of
pockets.
Methods that optimize target-pocket affinity alone often yield limited
specificity improvement, because increasing target-pocket affinity can
also increase off-target affinity~\cite{sbe-diff}.
Methods that rely on a predefined off-target set~\cite{sbe-diff,activitydiff}
explicitly account for specificity, but may fail to capture challenging
off-targets outside the constructed set.
We therefore seek a target-only optimization principle that improves
specificity without requiring access to off-target pockets, where the
guarantees apply to off-target pockets that are geometrically
distinguishable from the target pocket.

\begin{figure}[t]
\centering

\includegraphics[width=0.8\columnwidth]{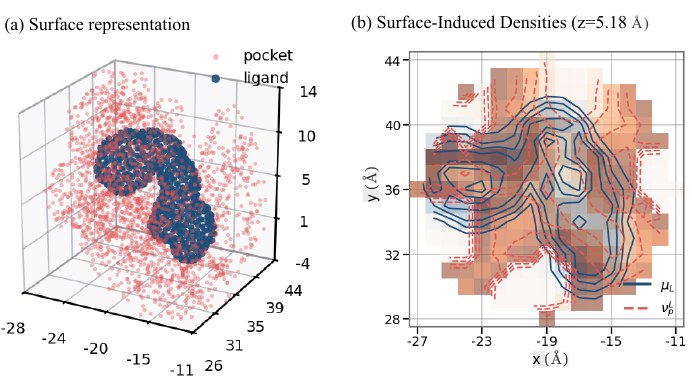}

\caption{
Surface and measure representations used for geometric mismatch.
(a) The ligand surface \(\surf_L\) and the ligand-oriented pocket surface
\(\surf_P^L\), obtained by cropping the aligned pocket surface around the
ligand.
(b) A two-dimensional slice of the signed-distance-based Boltzmann
densities induced by \(\surf_L\) and \(\surf_P^L\), which define the
probability measures \(\mu_L\) and \(\nu_P^L\), respectively.
The geometric mismatch \(\dgm(L,P)\) is computed as the Jensen--Shannon
distance between these induced probability measures.
}
\label{fig:surface_representation}
\end{figure}

\subsection{Jensen--Shannon Distance and Metric Structure}
\label{sec:prel_js}

The Jensen--Shannon distance measures the discrepancy between two
probability distributions through a symmetrized and smoothed KL
divergence.
For probability measures \(\mu\) and \(\nu\) defined on a common domain
\(\Omega\), let
\[
\bar\mu=\frac{1}{2}(\mu+\nu).
\]
The Jensen--Shannon distance is defined as
\begin{equation}
\begin{aligned}
d_{\mathrm{JS}}(\mu,\nu)
=
\Bigg[
\frac{1}{2}
\operatorname{KL}(\mu\|\bar\mu)
+
\frac{1}{2}
\operatorname{KL}(\nu\|\bar\mu)
\Bigg]^{1/2}.
\end{aligned}
\label{eq:js_distance}
\end{equation}
Unlike an arbitrary divergence, the square-root Jensen--Shannon
divergence is a metric on probability distributions and therefore
satisfies the triangle inequality:
\begin{equation}
d_{\mathrm{JS}}(\mu,\rho)
\le
d_{\mathrm{JS}}(\mu,\nu)
+
d_{\mathrm{JS}}(\nu,\rho).
\label{eq:js_triangle}
\end{equation}



\begin{figure}[t]
\centering
\setlength{\abovecaptionskip}{2pt}
\includegraphics[
    width=1.0\columnwidth,
    trim={0.02in 0.05in 0.02in 0.02in},
    clip
]{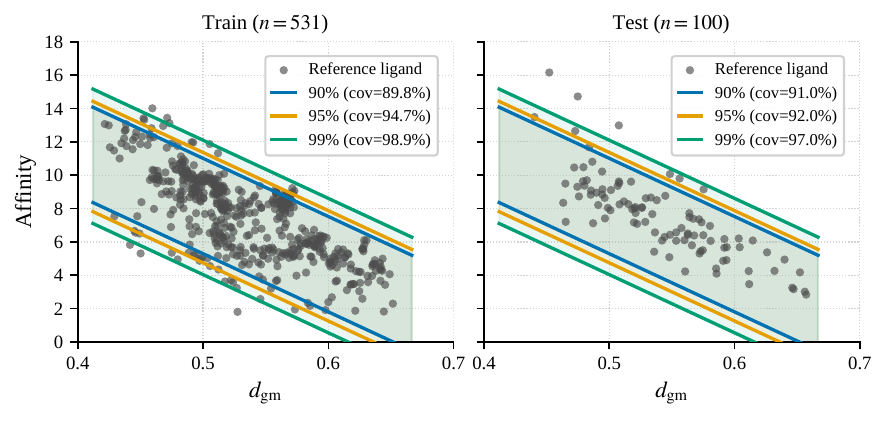}
\caption{
Geometry--affinity calibration on CrossDocked2020.
Gray points denote reference ligand--pocket pairs, where affinity is
measured as the negative AutoDock Vina score and \(d_{\mathrm{gm}}\) is
the proposed geometric mismatch.
Lines show the fitted linear geometry--affinity trend with central
90\%, 95\%, and 99\% empirical quantile envelopes constructed from
training residuals.
}
\label{fig:gm_affinity_quantile}
\end{figure}

\begin{figure*}[t]
\centering
\includegraphics[
    width=1.0\textwidth,
    trim={0 0 0 0},
    clip,
    keepaspectratio
]{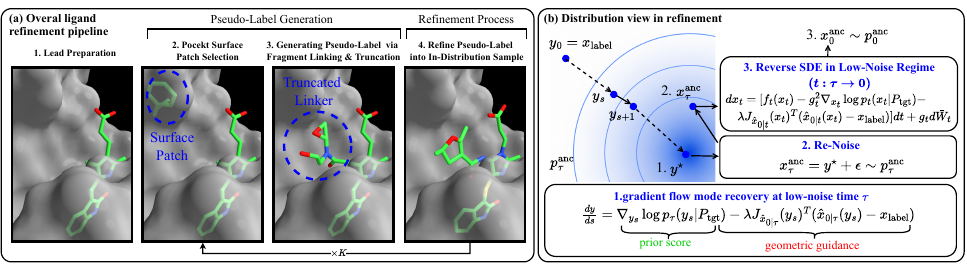}
\caption{
Overview of the ligand refinement pipeline.
The left panel illustrates iterative pocket-surface-targeted ligand growth, while the right panel presents the prior-guided refinement process that maps an out-of-distribution pseudo-label toward an in-distribution ligand sample.
}
\label{fig:ours_pipeline}
\end{figure*}


\section{Methodology}
\label{sec:methodology}
\subsection{Geometric Mismatch Bounds Specificity against Geometrically Distinct Off-Targets}
\label{sec:geo_mismatch}

We first analyze metricized geometric mismatch between the target pocket
and the ligand as a target-only principle for specificity-aware lead
optimization.
The key intuition is that a ligand with small mismatch to the target
pocket is encouraged to fit the native pocket, while its compatibility
with geometrically distinct off-target pockets can be conservatively
bounded through the metric structure.
This objective is not intended to replace affinity optimization, but
rather to optimize a geometric component that contributes to molecular
recognition and favorable binding.
Motivated by classical geometric docking and shape-complementarity
studies~\cite{docking1,docking2}, we formalize this principle by defining
a computable mismatch metric, empirically validating its relationship
with binding affinity, and deriving a specificity lower-bound analysis.

\paragraph{Definition of geometric mismatch.}
We first express each pocket in the ligand-centered affinity-optimal
frame.
For ligand \(L\) and pocket \(P\), define
\(P^L=\rho_{g^\star(L,P)}(P)\), where
\(g^\star(L,P)\in\arg\max_{g\in\mathrm{SE}(3)}a(\rho_g(P),L)\).
Let \(A_L=\{a_i\}_{i=1}^{N_L}\) denote the heavy-atom coordinates of
\(L\), and define the ligand-local domain
\(\Omega_L=\{x\in\mathbb{R}^3:\min_i\|x-a_i\|_2\le 8\,\text{\AA}\}\).
We define the ligand surface \(\surf_L\) as the van der Waals surface of
the ligand heavy atoms.
For the aligned pocket \(P^L\), we define the ligand-oriented pocket
surface \(\surf_P^L\) as the van der Waals surface of \(P^L\) restricted
to surface points within \(8\,\text{\AA}\) of any ligand heavy atom.

To compare these surfaces robustly, we convert \(\surf_L\) and
\(\surf_P^L\) into probability measures on the common domain
\(\Omega_L\).
Let \(\mu_L\) and \(\nu_P^L\) denote the probability measures induced by
\(\surf_L\) and \(\surf_P^L\), respectively, through the
signed-distance-based Boltzmann densities defined in
Appendix~\ref{app:surface_probability_field}.
We define geometric mismatch as the Jensen--Shannon (JS) distance between the
two measures:
\begin{equation}
\begin{aligned}
\dgm(L,P)
&=
d_{\mathrm{JS}}(\mu_L,\nu_P^L)
\\
&=
\left[
\frac{1}{2}\operatorname{KL}(\mu_L\|\bar\mu)
+
\frac{1}{2}\operatorname{KL}(\nu_P^L\|\bar\mu)
\right]^{1/2},
\end{aligned}
\label{eq:js_geometric_mismatch}
\end{equation}
where \(\bar\mu=(\mu_L+\nu_P^L)/2\).
A smaller \(\dgm(L,P)\) indicates stronger agreement between the ligand
surface field and the ligand-oriented pocket surface field. We visualize the surfaces and probablity measures in Figure~\ref{fig:surface_representation}.  

\paragraph{Correlation with affinity.}
We formalize the empirical relationship between geometric mismatch and
binding affinity through a quantile-calibrated geometry--affinity
envelope.
This formulation accounts for residual affinity variation caused by
non-geometric factors, such as electrostatics, hydrogen bonding, and
desolvation, while retaining the principle that geometric mismatch
constrains the attainable affinity range with high probability.

\begin{assumption}[Quantile-calibrated geometry--affinity envelope]
\label{assump:calibrated}
Let \(\mathcal{D}_{\mathrm{CD}}\) denote the empirical ligand--pocket
distribution induced by CrossDocked2020.
For a tail probability \(\eta\in(0,1)\), there exist monotone decreasing
functions \(f_-^{\eta}\) and \(f_+^{\eta}\) such that, for
\((L,P)\sim\mathcal{D}_{\mathrm{CD}}\),
\begin{equation}
\hspace*{-0.45em}
\begin{aligned}
\Pr\!\Big[
& f_-^{\eta}(\dgm(L,P))
\le
A_P(L)
\le
f_+^{\eta}(\dgm(L,P))
\Big]
\ge
1-\eta .
\end{aligned}
\label{eq:quantile_calibration}
\end{equation}
\end{assumption}

Figure~\ref{fig:gm_affinity_quantile} shows this validation.
We fit a linear trend \(\hat{a}(d)=\beta_0+\beta_1 d\) using the
random pairs from the CrossDocked2020 training split
\((n=531)\), and define the central \((1-\eta)\)-quantile envelope by
\(f_-^\eta(d)=\hat{a}(d)+q_{\eta/2}\) and
\(f_+^\eta(d)=\hat{a}(d)+q_{1-\eta/2}\), where \(q_\alpha\) is the
empirical \(\alpha\)-quantile of the training residuals.
The fitted 90\%, 95\%, and 99\% envelopes cover 89.8\%, 94.7\%, and
98.9\% of the training samples, respectively, and cover 91.0\%, 92.0\%,
and 97.0\% of the 100 CrossDocked2020 test samples.

\paragraph{Specificity certificate for geometrically separated off-targets.}
We analyze off-target pockets that remain geometrically distinguishable
from the target pocket under ligand-oriented surface representations.
Let \(\mathcal{L}_{\mathrm{opt}}\) denote the set of ligands reachable by
optimization.
For \(L\in\mathcal{L}_{\mathrm{opt}}\) and \(O\in\Off\), let
\(m_L(O)=d_{\mathrm{JS}}(\nu_{\Tgt}^{L},\nu_O^{L})\) denote the
ligand-oriented geometric separation between the target and off-target
pocket surfaces, where \(\nu_{\Tgt}^{L}\) and \(\nu_O^{L}\) are the
ligand-oriented pocket measures induced by the corresponding cropped
pocket surfaces.
We call \(\Off\) a \(\delta\)-separated off-target class if
\[
\inf_{L\in\mathcal{L}_{\mathrm{opt}}}
\inf_{O\in\Off}
m_L(O)
\ge
\delta .
\]
This condition requires every considered off-target pocket to remain at
least \(\delta\) away from the target pocket in the ligand-oriented metric
throughout optimization.

\begin{theorem}[Specificity certificate for separated off-targets]
\label{thm:spec_bound}
Let \(L\in\mathcal{L}_{\mathrm{opt}}\) and
\(\epsilon=\dgm(L,\Tgt)\).
Suppose that \(\Off\) is a \(\delta\)-separated off-target class.
Under Assumption~\ref{assump:calibrated}, with probability at least
\(1-(|\Off|+1)\eta\),
\begin{equation}
\Spec(L;\Tgt,\Off)
\ge
f_-^\eta(\epsilon)
-
f_+^\eta\!\left([\delta-\epsilon]_+\right).
\label{eq:spec_bound_ligand_invariant}
\end{equation}
Consequently, this high-probability specificity lower bound weakly
increases as the target mismatch \(\epsilon\) decreases.
\end{theorem}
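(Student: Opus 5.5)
The proof combines the triangle inequality for \(d_{\mathrm{JS}}\) with a union bound over the calibration events of Assumption~\ref{assump:calibrated}.

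\textbf{Step 1: geometric lower bound on off-target mismatch.} Fix \(L\) and \(O\in\Off\). I would write \(\dgm(L,O)=d_{\mathrm{JS}}(\mu_L,\nu_O^L)\) and \(\epsilon=d_{\mathrm{JS}}(\mu_L,\nu_{\Tgt}^L)\). All three measures \(\mu_L,\nu_{\Tgt}^L,\nu_O^L\) live on the common domain \(\Omega_L\). This is exactly why the definition is ligand-oriented, and I would check it explicitly. Then Eq.~\eqref{eq:js_triangle} gives
\[
m_L(O)=d_{\mathrm{JS}}(\nu_{\Tgt}^L,\nu_O^L)\le d_{\mathrm{JS}}(\nu_{\Tgt}^L,\mu_L)+d_{\mathrm{JS}}(\mu_L,\nu_O^L)=\epsilon+\dgm(L,O).
\]
By \(\delta\)-separation, \(m_L(O)\ge\delta\), so \(\dgm(L,O)\ge\delta-\epsilon\). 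Since \(\dgm\ge0\), this becomes \(\dgm(L,O)\ge[\delta-\epsilon]_+\).

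\textbf{Step 2: calibration events and union bound.} Let \(E_{\Tgt}\) be the event that \(A_{\Tgt}(L)\ge f_-^\eta(\epsilon)\). For each \(O\in\Off\), let \(E_O\) be the event that \(A_O(L)\le f_+^\eta(\dgm(L,O))\). By Assumption~\ref{assump:calibrated}, each event fails with probability at most \(\eta\), since each is implied by the two-sided envelope. A union bound over the \(|\Off|+1\) events shows they all hold simultaneously with probability at least \(1-(|\Off|+1)\eta\). On this intersection, monotone decrease of \(f_+^\eta\) and Step 1 give \(A_O(L)\le f_+^\eta([\delta-\epsilon]_+)\) for every \(O\in\Off\). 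The right-hand side does not depend on \(O\), so \(\max_{O\in\Off}A_O(L)\le f_+^\eta([\delta-\epsilon]_+)\). Subtracting this from the target bound and using Eq.~\eqref{eq:spec} yields Eq.~\eqref{eq:spec_bound_ligand_invariant}. The \(\max\) is finite because the probability bound is stated in terms of \(|\Off|\).

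\textbf{Step 3: monotonicity.} Suppose \(\epsilon\) decreases. Then \(f_-^\eta(\epsilon)\) weakly increases, because \(f_-^\eta\) is decreasing. Also \([\delta-\epsilon]_+\) weakly increases, so \(f_+^\eta([\delta-\epsilon]_+)\) weakly decreases. Hence the lower bound weakly increases.

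\textbf{Main obstacle.} The delicate point is Step 2's use of Assumption~\ref{assump:calibrated}. That assumption is stated for \((L,P)\) drawn from \(\mathcal{D}_{\mathrm{CD}}\), whereas here \(L\) is an optimized ligand and the pairs \((L,\Tgt)\) and \((L,O)\) are dependent. I would make explicit that the calibration is applied pairwise, with each pair treated as satisfying the envelope with marginal probability at least \(1-\eta\). A union bound needs no independence, so this per-pair interpretation is all that is required, and I would state it as the precise reading of the assumption.
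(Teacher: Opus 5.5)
Your proposal is correct and follows the paper's proof essentially step for step. The paper also proceeds through the triangle-inequality lemma giving \(\dgm(L,O)\ge[\delta-\epsilon]_+\), one-sided calibration events combined with monotonicity of \(f_+^\eta\) and a union bound over the \(|\Off|+1\) pockets, and the same monotonicity argument for the resulting lower bound. Your remark that Assumption~\ref{assump:calibrated} has to be read as a per-pair marginal guarantee for the optimized ligand makes explicit a step the paper takes without comment.
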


The theorem formalizes the target-only role of geometric mismatch.
Although off-target pockets are not accessed during optimization, their
geometric separation from the target pocket provides a specificity
certificate: as the ligand better matches the target surface, the
certified lower bound on its separation from geometrically distinct
off-targets weakly increases.
The proof applies the triangle inequality of \(d_\mathrm{JS}\) to the
ligand-induced measure, the target-pocket measure, and each off-target
pocket measure, yielding
\(\dgm(L,O)\ge[\delta-\epsilon]_+\).
The quantile-calibrated geometry--affinity envelope then converts this
geometric certificate into an affinity-gap certificate, where the target
affinity is lower bounded by \(f_-^\eta(\epsilon)\) and each off-target
affinity is upper bounded by
\(f_+^\eta([\delta-\epsilon]_+)\).
A union bound over the target and off-target pockets gives the stated
high-probability lower bound.
The complete proof is provided in Appendix~\ref{app:proof_specificity}.

\subsection{Ligand Growth via Geometric Pseudo-Labeling and Refinement}
\label{sec:growth_refinement}

Motivated by the target-only specificity analysis, we design SurfSpec as
an iterative ligand-growth pipeline that improves target-pocket surface
fit without accessing off-target pockets during optimization.
SurfSpec uses geometric mismatch as a practical design principle and
operationalizes it through surface-directed ligand growth toward
under-occupied regions of the target pocket.

As illustrated in Figure~\ref{fig:ours_pipeline}, SurfSpec alternates
between two stages.
First, it generates a geometric pseudo-label by extending the current
ligand toward a nearby under-occupied target-pocket surface region.
Second, it refines this pseudo-label into a valid ligand using a
pretrained diffusion prior over the pocket-conditioned ligand
distribution.
Repeating these stages encourages coverage of accessible target-pocket
regions while maintaining consistency with the learned ligand
distribution.
The full procedure is described in Algorithm~\ref{alg:full_ligand_growth}.

\begin{table*}[t]
    \centering
    \small
    \setlength{\tabcolsep}{2.2pt}
    \renewcommand{\arraystretch}{1.1}
    \resizebox{\textwidth}{!}{%
    \begin{tabular}{lccccccccccccccccc}
    \toprule
    \multirow{2}{*}{Method}
    & \multicolumn{1}{c}{Off-Target}
    & \multicolumn{5}{c}{$\widehat{\mathrm{Spec}}$ $\uparrow$}
    & \multicolumn{3}{c}{Geometric Mismatch $\downarrow$}
    & \multicolumn{3}{c}{Occupancy $\uparrow$}
    & \multicolumn{3}{c}{Vina Dock $\downarrow$}
    & \multirow{2}{*}{\shortstack{Clash\\Rate $\downarrow$}}
    & \multirow{2}{*}{\#\,Atoms} \\
    \cmidrule(lr){3-7}
    \cmidrule(lr){8-10}
    \cmidrule(lr){11-13}
    \cmidrule(lr){14-16}
    & Agnostic
    & Avg. & Std. & $>{0.2}$ & $>{0.4}$ & $>{0.6}$
    & Avg. & Std. & Med.
    & Avg. & Q1 & Q3
    & Avg. & Q1 & Q3
    &  &  \\
    \midrule
    Initial Lead
    & O
    & -0.83 & 1.39 & \underline{0.15} & 0.13 & 0.11
    & 0.540 & 0.051 & 0.536
    & 0.26 & 0.22 & 0.30
    & -7.17 & -8.67 & -5.73
    & \textbf{0.00} & 22.75 \\
    \hline
    Delete
    & O
    & -0.84 & 1.35 & \underline{0.15} & 0.12 & 0.11
    & 0.539 & 0.052 & 0.535
    & 0.26 & 0.22 & 0.30
    & -7.23 & -8.65 & -5.73
    & \textbf{0.00} & 22.85 \\
    PMDM
    & O
    & -1.16 & 2.48 & 0.12 & 0.09 & 0.08
    & 0.501 & 0.056 & 0.489
    & \underline{0.33} & 0.27 & 0.38
    & -6.47 & -8.87 & -5.10
    & 0.16 & 31.00 \\
    DecompOpt
    & O
    & -0.97 & 1.79 & 0.12 & 0.09 & 0.07
    & 0.542 & 0.053 & 0.536
    & 0.25 & 0.22 & 0.29
    & -6.08 & -8.26 & -4.49
    & 0.02 & 22.75 \\
    DiffShape
    & O
    & -0.97 & 1.33 & 0.13 & 0.09 & 0.06
    & 0.585 & 0.115 & 0.555
    & 0.14 & 0.00 & 0.25
    & -6.90 & -8.33 & -5.20
    & 0.11 & 21.91 \\
    ActivityDiff
    & X
    & -0.89 & 1.42 & \underline{0.15} & 0.13 & 0.12
    & 0.538 & 0.054 & 0.529
    & 0.27 & 0.23 & 0.31
    & -7.07 & -8.35 & -5.39
    & 0.11 & 22.89 \\
    \hline
    PMDM--MS2
    & O
    & -1.15 & 2.65 & 0.12 & 0.11 & 0.11
    & 0.499 & 0.063 & 0.482
    & 0.31 & 0.26 & 0.37
    & \textbf{-8.95} & -10.04 & \textbf{-7.01}
    & 0.25 & 33.42 \\
    PMDM--MS3
    & O
    & -2.11 & 2.66 & 0.07 & 0.05 & 0.05
    & 0.496 & 0.069 & 0.480
    & 0.31 & 0.26 & 0.37
    & -8.38 & \textbf{-10.51} & -6.43
    & 0.31 & 35.83 \\
    PMDM--MS+ES
    & O
    & -0.90 & 1.98 & \underline{0.15} & \underline{0.15} & \underline{0.14}
    & 0.500 & 0.057 & 0.489
    & \underline{0.33} & \underline{0.28} & 0.38
    & -8.61 & -10.14 & \underline{-6.56}
    & 0.16 & 31.37 \\
    DecompOpt--MS2
    & O
    & -1.00 & 1.26 & 0.11 & 0.09 & 0.07
    & 0.542 & 0.053 & 0.538
    & 0.25 & 0.22 & 0.28
    & -7.08 & -8.79 & -5.62
    & 0.03 & 22.75 \\
    DecompOpt--MS3
    & O
    & -0.91 & 1.41 & 0.12 & 0.10 & 0.09
    & 0.541 & 0.053 & 0.537
    & 0.25 & 0.21 & 0.28
    & -7.00 & -8.57 & -5.72
    & 0.04 & 22.76 \\
    DecompOpt--MS+ES
    & O
    & \underline{-0.80} & 1.39 & 0.14 & 0.11 & 0.09
    & 0.541 & 0.052 & 0.534
    & 0.25 & 0.22 & 0.28
    & -7.40 & -8.91 & -5.95
    & 0.03 & 22.75 \\
    \hline
    DiffSBDD--SizeExt(+20)
    & O
    & -1.71 & 1.84 & 0.05 & 0.05 & 0.05
    & \underline{0.480} & 0.037 & 0.476
    & 0.28 & 0.19 & 0.38
    & -7.57 & -8.25 & -6.37
    & 0.45 & 35.88 \\
    DiffSBDD--SizeExt(+30)
    & O
    & -1.81 & 1.97 & 0.03 & 0.03 & 0.02
    & \textbf{0.468} & 0.046 & \textbf{0.461}
    & 0.32 & 0.23 & \textbf{0.42}
    & -7.14 & -8.40 & -5.44
    & 0.62 & 41.58 \\
    \hline
    \rowcolor{blue!8}
    Ours
    & O
    & \textbf{-0.71} & 1.84 & \textbf{0.18} & \textbf{0.16} & \textbf{0.15}
    & \underline{0.480} & 0.050 & \underline{0.467}
    & \textbf{0.36} & \textbf{0.30} & \underline{0.41}
    & \underline{-8.63} & \underline{-10.43} & -6.31
    & \textbf{0.00} & 34.99 \\
    \bottomrule
    \end{tabular}%
    }
    \caption{
    Off-target-agnostic lead optimization analysis.
    Bold indicates the unique best; underline marks all second-best values when the best is unique.
    }
    \label{tab:main}
    \end{table*}

\paragraph{Geometric pseudo-label generation.}
We generate a pseudo-label by exploiting a pretrained linker generation model \(p_{\mathrm{linker}}\). At iteration \(k\), we select the closest unoccupied surface patch \(r^{(k)}\) whose distance from the current ligand \(x^{(k)}\) falls within \(4\)--\(10\) \AA. The detailed filtering rule for unoccupied surface patches is provided in Appendix \ref{app:surface_patch_selection}. Given \(r^{(k)}\), we sample a linker that connects the current ligand and the selected surface patch. The sampled linker is truncated to exclude pocket-clashing atoms falling within the \(2\) \AA{} clashing threshold, forming a geometric pseudo-label \(\xanc^{(k)}\). The resulting \(\xanc^{(k)}\) serves as a directed anchor for reducing the target pocket mismatch.

\paragraph{Pseudo-label refinement.}
Given the pseudo-label \(\xanc\), the refinement task is to recover a valid
ligand with the same number of atoms that remains close to \(\xanc\) while
staying likely under the pocket-conditioned ligand prior.
Formally, we target the anchored clean density
\begin{equation}
p_0^{\mathrm{anc}}(x\mid\Tgt,\xanc)
\propto
p_0(x\mid\Tgt)
\exp\!\left(
-\frac{\lambda}{2}
\|x-\xanc\|^2
\right),
\label{eq:anchored_density}
\end{equation}
where \(p_0(x\mid\Tgt)\) is the pretrained ligand prior and
\(\lambda>0\) controls anchor fidelity.
Let \(p_t^{\mathrm{anc}}\) denote the distribution obtained by diffusing
\(p_0^{\mathrm{anc}}\) to time \(t\).

General-purpose inverse-problem solvers discussed in
Section~\ref{sec:related_work} can be applied to this refinement task, but
their design choices are not tailored to geometrically constructed
molecular pseudo-labels.
Optimization-based and variational posterior methods often introduce
multiple coupled correction or optimization stages, which require careful
coordination of interaction hyperparameters in this molecular setting.
Reverse-path methods avoid such multi-stage optimization, but typically
depend on either denoised clean-state approximations, such as the DPS
plug-in approximation
\(
p(x_0\mid y,\Tgt,\tau)
\approx
\delta_{\hat{x}_{0\mid\tau}(y)}(x_0),
\)
or direct noising of the pseudo-label as in SDEdit.
These approximations can become unreliable when applied at high noise
levels or when the geometrically constructed pseudo-label lies outside
the typical support of the pretrained molecular prior.

We therefore use a simpler task-specific low-noise recovery procedure
that avoids multi-stage posterior optimization and avoids applying
clean-state posterior approximation in the high-noise regime.
The procedure is motivated by the local concentration of the anchored
distribution under the combined constraints of pocket-conditioned
validity and similarity to the pseudo-label.
In Appendix~\ref{app:proof_refinement}, we formalize this intuition by
assuming that the anchored clean distribution is concentrated in a small
ball around its mean.
Under this condition, at a low-noise time \(\tau\) where the local DPS
score approximation in Assumption~\ref{assump:local_dps_score_approx}
holds, Proposition~\ref{prop:mode_recovery} shows that the anchored score
flow converges to a neighborhood of the mode of
\(p_\tau^{\mathrm{anc}}\), with residual error controlled by the
implemented score approximation error.
We therefore estimate this low-noise anchored point using the anchored
score flow
\begin{equation}
\begin{aligned}
\frac{d y_s}{d s}
={}&
\nabla_{y_s}\log p_\tau(y_s\mid\Tgt)
\\
&-
\lambda
J_{\hat{x}_{0\mid\tau}}(y_s)^\top
\left(
\hat{x}_{0\mid\tau}(y_s)-\xanc
\right).
\end{aligned}
\label{eq:low_noise_mode_recovery}
\end{equation}

Let \(y^\star\) denote the recovered point.
Proposition~\ref{prop:renoising_recovered_mode} further shows that
re-noising around this recovered point,
\begin{equation}
x_\tau^{\mathrm{anc}}
=
y^\star+\sigma_\tau\varepsilon,
\qquad
\varepsilon\sim\mathcal{N}(0,I),
\label{eq:renoise_anchor}
\end{equation}
approximates sampling from the low-noise anchored marginal
\(p_\tau^{\mathrm{anc}}\).
Starting from \(x_\tau^{\mathrm{anc}}\), we complete sampling with a
low-noise guided reverse SDE:
\begin{equation}
\begin{aligned}
d x_t
={}&
\Big[
f_t(x_t)
-
g_t^2
\{
\nabla_{x_t}\log p_t(x_t\mid\Tgt)
\\
&\qquad
-
\lambda
J_{\hat{x}_{0\mid t}}(x_t)^\top
\left(
\hat{x}_{0\mid t}(x_t)-\xanc
\right)
\}
\Big]dt
+
g_t\,d\bar{W}_t ,
\end{aligned}
\label{eq:normal_reverse_sde}
\end{equation}
where \(f_t\) and \(g_t\) denote the drift and diffusion coefficients of
the forward SDE defining the pretrained pocket-conditioned ligand prior
\(p_t(\cdot\mid\Tgt)\), and \(\bar{W}_t\) is the reverse-time Brownian
motion. This low-noise reverse process approximately samples from the anchored
clean distribution \(p_0^{\mathrm{anc}}\).

The refined ligand is passed to the next growth iteration, and the
pipeline alternates between geometric expansion and refinement until the
maximum number of iterations is reached or the target-pocket Vina score
after local optimization increases relative to the previous iteration.

\begin{table*}[t]
\centering
\small
\renewcommand{\arraystretch}{0.90}
\setlength{\abovecaptionskip}{3pt}
\setlength{\belowcaptionskip}{0pt}
\setlength{\tabcolsep}{3.2pt}
\begin{adjustbox}{max width=\textwidth}
\begin{tabular}{lccccccccc|cccccc}
\toprule
\multirow{3}{*}{Method} & \multicolumn{9}{c|}{\textbf{Prior Consistency}} & \multicolumn{6}{c}{\textbf{Pseudo-Label Faithfulness}} \\
\cmidrule(lr){2-10}\cmidrule(lr){11-16}
 & \multicolumn{3}{c}{Pocket Clash $\downarrow$} & \multicolumn{2}{c}{Target Affinity $\uparrow$} & \multirow{2}{*}{Validity $\uparrow$} & \multicolumn{3}{c|}{Bond MMD $\downarrow$} & \multicolumn{3}{c}{RMSD $\downarrow$} & \multicolumn{3}{c}{Topo. Sim. $\uparrow$} \\
\cmidrule(lr){2-4}\cmidrule(lr){5-6}\cmidrule(lr){8-10}\cmidrule(lr){11-13}\cmidrule(lr){14-16}
 & \%<2\AA & Local Rep. & Global Rep. & Local & Global &  & C-C & C-N & \multicolumn{1}{c|}{C-O} & Avg. & Q1 & Q3 & Avg. & Q1 & Q3 \\
\midrule
Pseudo label & 0.00 & 1.46 & 7.61 & 0.23 & 6.75 & 0.84 & 0.06 & 0.01 & 0.09 & 0.00 & 0.00 & 0.00 & 1.00 & 1.00 & 1.00 \\
\hline
\addlinespace[2pt]
DCDP & \underline{0.02} & 0.83 & 5.38 & \underline{0.43} & \underline{7.69} & \textbf{1.00} & \textbf{0.08} & \textbf{0.03} & \underline{0.10} & \underline{0.62} & \underline{0.55} & \underline{0.67} & 0.48 & 0.35 & 0.62 \\
DPS & 0.03 & \textbf{0.20} & \underline{4.88} & 0.39 & 6.49 & \textbf{1.00} & 0.11 & 0.04 & 0.11 & 6.17 & 5.44 & 6.79 & 0.35 & 0.25 & 0.46 \\
SDEdit & 0.03 & 1.29 & 6.31 & 0.28 & 7.13 & \textbf{1.00} & 0.09 & \textbf{0.03} & 0.13 & \textbf{0.31} & \textbf{0.28} & \textbf{0.35} & 0.53 & 0.36 & 0.68 \\
Red-Diff & 0.51 & 5.04 & 42.39 & 0.14 & 1.67 & \textbf{1.00} & 0.12 & 0.05 & 0.16 & 1.47 & 0.97 & 2.00 & \textbf{0.95} & \textbf{0.99} & \textbf{1.00} \\
RSD & 0.77 & 9.54 & 106.75 & 0.12 & 0.69 & \textbf{1.00} & 0.11 & 0.04 & 0.15 & 2.77 & 1.98 & 3.62 & \underline{0.93} & \underline{0.91} & \textbf{1.00} \\
\rowcolor{blue!8}
Ours & \textbf{0.01} & \underline{0.33} & \textbf{4.40} & \textbf{0.46} & \textbf{8.00} & \textbf{1.00} & \textbf{0.08} & \textbf{0.03} & \textbf{0.08} & 0.78 & 0.66 & 0.90 & 0.48 & 0.34 & 0.64 \\
\bottomrule
\end{tabular}
\end{adjustbox}
\caption{Refinement results. Bold indicates best; underline indicates second-best among methods (only when best is unique).}
\label{tab:refinement}
\end{table*}

\section{Experiments}
\label{sec:experiments}

\subsection{Experimental Setup}
\label{sec:experimental_setup}

\paragraph{Benchmark and Implementation.}
We evaluate SurfSpec on the 100 test complexes from
CrossDocked2020~\cite{crossdock2020}.
For each complex, the reference ligand paired with the target pocket is
used as the initial lead, and each method generates one optimized
molecule.
SurfSpec runs at most \(K=3\) ligand-growth iterations, using
DiffLinker~\cite{e3quivariant} as the linker generation model
\(p_{\mathrm{linker}}\) for surface-directed pseudo-label generation and
DiffSBDD~\cite{diffsbdd} as the pretrained pocket-conditioned ligand
prior \(p(\cdot\mid P_{\mathrm{tgt}})\) for refinement.
The low-noise recovery time is fixed to \(\tau=0.15\).
Off-target pockets are never accessed by off-target-agnostic methods
during optimization; for evaluation, each target is paired with an
off-target set \(\Off_{\mathrm{eval}}\) of 10 randomly sampled pockets
from the remaining CrossDocked2020 test pockets.
For refinement validation, we keep the pseudo-label and pretrained prior
fixed, replace only the pseudo-label refinement step with existing inverse-problem
solvers, and set the outer ligand-growth budget to \(K=1\) so that each
method performs a single optimization step for each CrossDocked2020 test
complex.

\paragraph{Baselines and Metrics.}
We compare SurfSpec with representative lead optimization and editing
baselines, including Delete, PMDM, DecompOpt, DiffShape, ActivityDiff,
multi-step PMDM/DecompOpt variants, and DiffSBDD-based size-extension
baselines~\cite{delete,pmdm,decompopt,diffshape,activitydiff}.
We evaluate empirical specificity using target and off-target AutoDock
Vina scores, and additionally report target-pocket Vina score,
target-pocket geometric mismatch, pocket occupancy, clash rate, and
ligand size.
For refinement validation, we report metrics for prior consistency and
pseudo-label faithfulness.
Detailed baseline construction, hyperparameters, and metric definitions
are provided in Appendices~\ref{app:refinement_details}
and~\ref{app:eval_metrics}

\subsection{Performance on Specificity Enhancement}
\label{subsec:exp2}

Table~\ref{tab:main} reports lead optimization performance on
CrossDocked2020.
SurfSpec achieves the best empirical specificity among all evaluated
methods, including the highest average score and the highest thresholded
success rates.
It also achieves the best pocket occupancy and a low target-pocket
geometric mismatch, indicating that target-surface-directed growth
effectively increases coverage of the target-pocket region.
At the same time, SurfSpec maintains a strong target-pocket Vina docking
score and zero clash rate.
These results support that SurfSpec improves the empirical
target-over-off-target preference while preserving favorable target-pocket
docking behavior, without accessing off-targets during
optimization.

The augmented PMDM and DecompOpt variants with repeated optimization or
early stopping do not match SurfSpec in empirical specificity, suggesting
that the improvement is not explained solely by repeated baseline
application or by the stopping criterion.
The DiffSBDD-based size-extension baselines achieve low geometric
mismatch, but they also produce high clash rates and weak empirical
specificity.
This indicates that reducing geometric mismatch through naive size growth
is insufficient and can lead to invalid pocket occupation.
In contrast, SurfSpec improves target-pocket occupancy and empirical
specificity while maintaining zero clash rate, suggesting that
surface-directed growth provides a more controlled way to exploit target
geometry.

\paragraph{Non-vacuity of the specificity certificate.}
In Appendix~\ref{app:certificate_nonvacuity}, we evaluate the certificate
margin
\(M_L(O)=d_{\mathrm{JS}}(\nu_{\Tgt}^{L},\nu_O^{L})-\dgm(L,\Tgt)\), where
\(d_{\mathrm{JS}}(\nu_{\Tgt}^{L},\nu_O^{L})\) is the ligand-oriented
target--off-target pocket separation.
When \(M_L(O)>0\), the triangle-inequality lower bound on
ligand--off-target mismatch is nonzero.
Empirically, \(65.96\%\) of the randomly sampled target--off-target pairs
in the CrossDocked2020 test set have positive margin, indicating that the
certificate is non-vacuous for a substantial fraction of evaluated pairs
while remaining conservative.

\subsection{Validation of Pseudo-Label Refinement}
\label{subsec:refinement_validation}

Table~\ref{tab:refinement} evaluates refinement quality under a shared
pseudo-label setting. SurfSpec achieves strong prior consistency, including the lowest clash
rate among refinement methods, lowest global repulsion, best
target-affinity diagnostics, perfect valence validity, and best or
tied-best bond MMD values.
At the same time, it preserves reasonable pseudo-label faithfulness, with
RMSD substantially lower than DPS, Red-Diff, and RSD and topological
similarity comparable to DCDP.
These results support the proposed low-noise anchored recovery as a
simple task-specific refinement method for recovering geometrically
constructed molecular pseudo-labels under a pocket-conditioned ligand
prior.

\section{Conclusion}
\label{sec:conclusion}

We studied off-target-agnostic specificity-aware lead optimization, where
only the target pocket is available during optimization.
We introduced a target-only geometric analysis based on metricized
pocket--ligand surface mismatch, showing that reducing target mismatch
weakly improves a conservative specificity lower bound for geometrically
separated off-targets.
Motivated by this analysis, we proposed \textbf{SurfSpec}, a
surface-directed ligand-growth framework that expands ligands toward
under-occupied target-pocket regions.
On CrossDocked2020, SurfSpec achieves higher empirical specificity,
improves target-pocket occupancy, and maintains competitive docking
performance with low geometric mismatch and zero clash rate.
Additional analyses are provided in Appendix~\ref{app:additional_results}.

\section{Acknowledgement}
This work was partly supported by the KHIDI grant funded by the Korean government (MOHW) [No.RS-2025-02307233, No.RS-2026-25613012], the NRF or IITP grants funded by the Korean government (MSIT) [No.05-26-04-0094, No.RS-2026-25472075, No.RS-2025-02305581, No.RS-2025-25442338, and No.RS-2021-II211343], the ITIP grant funded by the Korean government (MOTIR) [No.RS-2026-25549946], the Research grant from SNU, Strategic Hub grant for International Research Collaboration of SNU, and Institute for Information \& communications Technology Planning \& Evaluation(IITP) grant funded by the Korea government(MSIT)
(RS-2019-II190075, Artificial Intelligence Graduate School
Program(KAIST)). Kyungsu Kim is affiliated with the School of Transdisciplinary Innovations, Department of Biomedical Science, Interdisciplinary Program in Artificial Intelligence (IPAI), Medical Research Center, and AI Institute at SNU.

\bibliography{bibliograpy}

\begin{thebibliography}{39}
\providecommand{\natexlab}[1]{#1}
\providecommand{\url}[1]{\texttt{#1}}
\expandafter\ifx\csname urlstyle\endcsname\relax
  \providecommand{\doi}[1]{doi: #1}\else
  \providecommand{\doi}{doi: \begingroup \urlstyle{rm}\Url}\fi

\bibitem[Adams et~al.(2025)Adams, Abeywardane, Fromer, and Coley]{shepherd}
K.~Adams, K.~Abeywardane, J.~Fromer, and C.~W. Coley.
\newblock {ShEPhERD}: Diffusing shape, electrostatics, and pharmacophores for bioisosteric drug design.
\newblock In \emph{The Thirteenth International Conference on Learning Representations}, 2025.
\newblock URL \url{https://openreview.net/forum?id=KSLkFYHlYg}.

\bibitem[Ayadi et~al.(2024)Ayadi, Hetzel, Sommer, Theis, and G\"{u}nnemann]{uniguide}
S.~Ayadi, L.~Hetzel, J.~Sommer, F.~Theis, and S.~G\"{u}nnemann.
\newblock Unified guidance for geometry-conditioned molecular generation.
\newblock In A.~Globerson, L.~Mackey, D.~Belgrave, A.~Fan, U.~Paquet, J.~Tomczak, and C.~Zhang, editors, \emph{Advances in Neural Information Processing Systems}, volume~37, pages 138891--138924. Curran Associates, Inc., 2024.
\newblock \doi{10.52202/079017-4407}.
\newblock URL \url{https://proceedings.neurips.cc/paper_files/paper/2024/file/faa6276ea12d7afeb3e42b210c86f688-Paper-Conference.pdf}.

\bibitem[Chen et~al.(2025)Chen, Zhang, Jiang, Zhao, Zhang, Chen, Liu, Su, Wu, Wang, Qu, Ye, Chai, Wang, Wang, An, Wu, Yang, Chen, Xie, Lin, Li, Hsieh, Huang, Kang, Hou, and Pan]{delete}
S.~Chen, O.~Zhang, C.~Jiang, H.~Zhao, X.~Zhang, M.~Chen, Y.~Liu, Q.~Su, Z.~Wu, X.~Wang, W.~Qu, Y.~Ye, X.~Chai, N.~Wang, T.~Wang, Y.~An, G.~Wu, Q.~Yang, J.~Chen, W.~Xie, H.~Lin, D.~Li, C.-Y. Hsieh, Y.~Huang, Y.~Kang, T.~Hou, and P.~Pan.
\newblock Deep lead optimization enveloped in protein pocket and its application in designing potent and selective ligands targeting ltk protein.
\newblock \emph{Nature Machine Intelligence}, 7\penalty0 (3):\penalty0 448--458, 2025.
\newblock \doi{10.1038/s42256-025-00997-w}.
\newblock URL \url{https://doi.org/10.1038/s42256-025-00997-w}.

\bibitem[Chung et~al.(2022)Chung, Sim, Ryu, and Ye]{mcg}
H.~Chung, B.~Sim, D.~Ryu, and J.~C. Ye.
\newblock Improving diffusion models for inverse problems using manifold constraints.
\newblock In S.~Koyejo, S.~Mohamed, A.~Agarwal, D.~Belgrave, K.~Cho, and A.~Oh, editors, \emph{Advances in Neural Information Processing Systems}, volume~35, pages 25683--25696. Curran Associates, Inc., 2022.
\newblock URL \url{https://proceedings.neurips.cc/paper_files/paper/2022/file/a48e5877c7bf86a513950ab23b360498-Paper-Conference.pdf}.

\bibitem[Chung et~al.(2023)Chung, Kim, McCann, Klasky, and Ye]{dps}
H.~Chung, J.~Kim, M.~T. McCann, M.~L. Klasky, and J.~C. Ye.
\newblock Diffusion posterior sampling for general noisy inverse problems.
\newblock In \emph{International Conference on Learning Representations}, 2023.
\newblock URL \url{https://openreview.net/forum?id=OnD9zGAGT0k}.

\bibitem[Erbach et~al.(2025)Erbach, Narnhofer, Dombos, Schiele, Lenssen, and Schindler]{flair}
J.~Erbach, D.~Narnhofer, A.~Dombos, B.~Schiele, J.~E. Lenssen, and K.~Schindler.
\newblock Solving inverse problems with flair.
\newblock In D.~Belgrave, C.~Zhang, H.~Lin, R.~Pascanu, P.~Koniusz, M.~Ghassemi, and N.~Chen, editors, \emph{Advances in Neural Information Processing Systems}, volume~38, pages 136667--136704. Curran Associates, Inc., 2025.
\newblock URL \url{https://proceedings.neurips.cc/paper_files/paper/2025/file/c7ae6e9659f0c99582c2e8214ba0b413-Paper-Conference.pdf}.

\bibitem[Francoeur et~al.(2020)Francoeur, Masuda, Sunseri, Jia, Iovanisci, Snyder, and Koes]{crossdock2020}
P.~G. Francoeur, T.~Masuda, J.~Sunseri, A.~Jia, R.~B. Iovanisci, I.~Snyder, and D.~R. Koes.
\newblock Three-dimensional convolutional neural networks and a cross-docked data set for structure-based drug design.
\newblock \emph{J. Chem. Inf. Model.}, 60\penalty0 (9):\penalty0 4200--4215, 2020.
\newblock \doi{10.1021/acs.jcim.0c00411}.
\newblock URL \url{https://doi.org/10.1021/acs.jcim.0c00411}.

\bibitem[Gao et~al.(2024)Gao, Ren, Ni, Huang, Qiang, Ma, Ma, and Lan]{sbe-diff}
B.~Gao, M.~Ren, Y.~Ni, Y.~Huang, B.~Qiang, Z.-M. Ma, W.-Y. Ma, and Y.~Lan.
\newblock Rethinking specificity in {SBDD}: Leveraging delta score and energy-guided diffusion.
\newblock In \emph{Proceedings of the 41st International Conference on Machine Learning}, volume 235 of \emph{Proceedings of Machine Learning Research}, pages 14811--14825. PMLR, 2024.
\newblock URL \url{https://proceedings.mlr.press/v235/gao24k.html}.

\bibitem[Gu et~al.(2024)Gu, Xu, Powers, Nie, Geffner, Kreis, Leskovec, Vahdat, and Ermon]{alidiff}
S.~Gu, M.~Xu, A.~Powers, W.~Nie, T.~Geffner, K.~Kreis, J.~Leskovec, A.~Vahdat, and S.~Ermon.
\newblock Aligning target-aware molecule diffusion models with exact energy optimization.
\newblock In A.~Globerson, L.~Mackey, D.~Belgrave, A.~Fan, U.~Paquet, J.~Tomczak, and C.~Zhang, editors, \emph{Advances in Neural Information Processing Systems}, volume~37, pages 44040--44063. Curran Associates, Inc., 2024.
\newblock \doi{10.52202/079017-1398}.
\newblock URL \url{https://proceedings.neurips.cc/paper_files/paper/2024/file/4ddfe69f164eae70abc86f0f9cbed7e8-Paper-Conference.pdf}.

\bibitem[Guan et~al.(2023)Guan, Qian, Peng, Su, Peng, and Ma]{targetdiff}
J.~Guan, W.~W. Qian, X.~Peng, Y.~Su, J.~Peng, and J.~Ma.
\newblock 3d equivariant diffusion for target-aware molecule generation and affinity prediction.
\newblock In \emph{International Conference on Learning Representations}, 2023.

\bibitem[Huang et~al.(2024)Huang, Xu, Yu, Zhao, Chen, Han, Xie, Li, Zhong, Wong, and Zhang]{pmdm}
L.~Huang, T.~Xu, Y.~Yu, P.~Zhao, X.~Chen, J.~Han, Z.~Xie, H.~Li, W.~Zhong, K.-C. Wong, and H.~Zhang.
\newblock A dual diffusion model enables 3d molecule generation and lead optimization based on target pockets.
\newblock \emph{Nature Communications}, 15\penalty0 (1):\penalty0 2657, 2024.
\newblock \doi{10.1038/s41467-024-46569-1}.
\newblock URL \url{https://doi.org/10.1038/s41467-024-46569-1}.

\bibitem[Igashov et~al.(2024)Igashov, Stärk, Vignac, Schneuing, Satorras, Frossard, Welling, Bronstein, and Correia]{e3quivariant}
I.~Igashov, H.~Stärk, C.~Vignac, A.~Schneuing, V.~G. Satorras, P.~Frossard, M.~Welling, M.~Bronstein, and B.~Correia.
\newblock Equivariant 3d-conditional diffusion model for molecular linker design.
\newblock \emph{Nature Machine Intelligence}, 6\penalty0 (4):\penalty0 417--427, 2024.
\newblock \doi{10.1038/s42256-024-00815-9}.
\newblock URL \url{https://doi.org/10.1038/s42256-024-00815-9}.

\bibitem[Jia et~al.(2025)Jia, Huang, Wang, Garcia-Cardona, Bertozzi, and Wang]{pnpflow}
F.~Jia, Y.~Huang, S.-H. Wang, C.~Garcia-Cardona, A.~L. Bertozzi, and B.~Wang.
\newblock Plug-and-play image restoration with flow matching: A continuous viewpoint, 2025.
\newblock URL \url{https://arxiv.org/abs/2512.04283}.

\bibitem[Keiser et~al.(2009)Keiser, Setola, Irwin, Laggner, Abbas, Hufeisen, Jensen, Kuijer, Matos, Tran, Whaley, Glennon, Hert, Thomas, Edwards, Shoichet, and Roth]{keiser}
M.~J. Keiser, V.~Setola, J.~J. Irwin, C.~Laggner, A.~I. Abbas, S.~J. Hufeisen, N.~H. Jensen, M.~B. Kuijer, R.~C. Matos, T.~B. Tran, R.~Whaley, R.~A. Glennon, J.~Hert, K.~L.~H. Thomas, D.~D. Edwards, B.~K. Shoichet, and B.~L. Roth.
\newblock Predicting new molecular targets for known drugs.
\newblock \emph{Nature}, 462\penalty0 (7270):\penalty0 175--181, 2009.
\newblock \doi{10.1038/nature08506}.
\newblock URL \url{https://doi.org/10.1038/nature08506}.

\bibitem[Kim et~al.(2025)Kim, Kim, and Ye]{flowdps}
J.~Kim, B.~S. Kim, and J.~C. Ye.
\newblock Flowdps : Flow-driven posterior sampling for inverse problems.
\newblock In \emph{Proceedings of the IEEE/CVF International Conference on Computer Vision (ICCV)}, pages 12328--12337, 2025.

\bibitem[Kim et~al.(2026)Kim, Shin, and Lim]{fastdips}
M.~Kim, S.~Shin, and H.~Lim.
\newblock {FAST-DIPS}: Adjoint-free analytic steps and hard-constrained likelihood correction for diffusion-prior inverse problems.
\newblock In \emph{The Fourteenth International Conference on Learning Representations}, 2026.
\newblock URL \url{https://openreview.net/forum?id=voMeZVAkKL}.

\bibitem[Kuntz et~al.(1982)Kuntz, Blaney, Oatley, Langridge, and Ferrin]{docking1}
I.~D. Kuntz, J.~M. Blaney, S.~J. Oatley, R.~Langridge, and T.~E. Ferrin.
\newblock A geometric approach to macromolecule-ligand interactions.
\newblock \emph{Journal of Molecular Biology}, 161\penalty0 (2):\penalty0 269--288, 1982.
\newblock \doi{https://doi.org/10.1016/0022-2836(82)90153-X}.
\newblock URL \url{https://www.sciencedirect.com/science/article/pii/002228368290153X}.

\bibitem[Lee et~al.(2026)Lee, Hwang, Lee, Jo, Ko, Park, Choi, Yang, and Kim]{ace}
Z.~Lee, M.~Hwang, W.~Lee, S.~Jo, J.~Ko, Y.~B. Park, J.-M. Choi, E.~Yang, and K.~Kim.
\newblock On the collapse of generative paths: A criterion and correction for diffusion steering.
\newblock In \emph{Proceedings of the 43rd International Conference on Machine Learning}, 2026.
\newblock URL \url{https://openreview.net/forum?id=emv2qsi3TG}.

\bibitem[Li et~al.(2026)Li, Wu, Cao, Lin, Zhong, Chen, Lu, Tang, Lei, Ran, Yang, Xu, Shang, and Chen]{diffshape}
B.~Li, X.~Wu, Y.~Cao, J.~Lin, J.~Zhong, H.~Chen, Y.~Lu, M.~Tang, J.~Lei, T.~Ran, Y.~Yang, M.~Xu, J.~Shang, and H.~Chen.
\newblock De novo molecular design via shape-constrained diffusion models.
\newblock \emph{J. Chem. Inf. Model.}, 66\penalty0 (8):\penalty0 4592--4606, 2026.
\newblock \doi{10.1021/acs.jcim.6c00044}.
\newblock URL \url{https://doi.org/10.1021/acs.jcim.6c00044}.

\bibitem[Li et~al.(2024)Li, Kwon, Liang, Alkhouri, Ravishankar, and Qu]{dcdp}
X.~Li, S.~M. Kwon, S.~Liang, I.~R. Alkhouri, S.~Ravishankar, and Q.~Qu.
\newblock {DCDP}: Decoupled data consistency with diffusion purification for image restoration, 2024.
\newblock URL \url{https://arxiv.org/abs/2403.06054}.

\bibitem[Li et~al.(2025)Li, Zhou, Wang, Peng, and Zhang]{georcg}
Z.~Li, C.~Zhou, X.~Wang, X.~Peng, and M.~Zhang.
\newblock Geometric representation condition improves equivariant molecule generation.
\newblock In A.~Singh, M.~Fazel, D.~Hsu, S.~Lacoste-Julien, F.~Berkenkamp, T.~Maharaj, K.~Wagstaff, and J.~Zhu, editors, \emph{Proceedings of the 42nd International Conference on Machine Learning}, volume 267 of \emph{Proceedings of Machine Learning Research}, pages 36921--36953. PMLR, 2025.
\newblock URL \url{https://proceedings.mlr.press/v267/li25dz.html}.

\bibitem[Lin et~al.(2023)Lin, Huang, Zhang, Liu, Wu, Li, Chen, and Li]{d3fg}
H.~Lin, Y.~Huang, O.~Zhang, Y.~Liu, L.~Wu, S.~Li, Z.~Chen, and S.~Z. Li.
\newblock Functional-group-based diffusion for pocket-specific molecule generation and elaboration.
\newblock In A.~Oh, T.~Naumann, A.~Globerson, K.~Saenko, M.~Hardt, and S.~Levine, editors, \emph{Advances in Neural Information Processing Systems}, volume~36, pages 34603--34626. Curran Associates, Inc., 2023.
\newblock URL \url{https://proceedings.neurips.cc/paper_files/paper/2023/file/6cdd4ce9330025967dd1ed0bed3010f5-Paper-Conference.pdf}.

\bibitem[Mardani et~al.(2024)Mardani, Song, Kautz, and Vahdat]{reddiff}
M.~Mardani, J.~Song, J.~Kautz, and A.~Vahdat.
\newblock A variational perspective on solving inverse problems with diffusion models.
\newblock In B.~Kim, Y.~Yue, S.~Chaudhuri, K.~Fragkiadaki, M.~Khan, and Y.~Sun, editors, \emph{International Conference on Learning Representations}, volume 2024, pages 28027--28053, 2024.
\newblock URL \url{https://proceedings.iclr.cc/paper_files/paper/2024/file/7711026951f3dc3afb2be427bf98bc73-Paper-Conference.pdf}.

\bibitem[Meng et~al.(2022)Meng, He, Song, Song, Wu, Zhu, and Ermon]{sdedit}
C.~Meng, Y.~He, Y.~Song, J.~Song, J.~Wu, J.-Y. Zhu, and S.~Ermon.
\newblock {SDEdit}: Guided image synthesis and editing with stochastic differential equations.
\newblock In \emph{International Conference on Learning Representations}, 2022.
\newblock URL \url{https://openreview.net/forum?id=aBsCjcPu_tE}.

\bibitem[Park and Ye(2025)]{flowlps}
J.~Park and J.~C. Ye.
\newblock {FlowLPS}: Langevin-proximal sampling for flow-based inverse problem solvers, 2025.
\newblock URL \url{https://arxiv.org/abs/2512.07150}.

\bibitem[Patel et~al.(2025)Patel, Wen, Metaxas, and Yang]{flowchef}
M.~Patel, S.~Wen, D.~N. Metaxas, and Y.~Yang.
\newblock Flowchef: Steering of rectified flow models for controlled generations.
\newblock In \emph{Proceedings of the IEEE/CVF International Conference on Computer Vision (ICCV)}, pages 15308--15318, 2025.

\bibitem[Peng et~al.(2022)Peng, Luo, Guan, Xie, Peng, and Ma]{pocket2mol}
X.~Peng, S.~Luo, J.~Guan, Q.~Xie, J.~Peng, and J.~Ma.
\newblock {P}ocket2{M}ol: Efficient molecular sampling based on 3{D} protein pockets.
\newblock In K.~Chaudhuri, S.~Jegelka, L.~Song, C.~Szepesvari, G.~Niu, and S.~Sabato, editors, \emph{Proceedings of the 39th International Conference on Machine Learning}, volume 162 of \emph{Proceedings of Machine Learning Research}, pages 17644--17655. PMLR, 2022.
\newblock URL \url{https://proceedings.mlr.press/v162/peng22b.html}.

\bibitem[Pourya et~al.(2026)Pourya, El~Rawas, and Unser]{flower}
M.~Pourya, B.~El~Rawas, and M.~Unser.
\newblock {FLOWER}: A flow-matching solver for inverse problems.
\newblock In \emph{The Fourteenth International Conference on Learning Representations}, 2026.
\newblock URL \url{https://openreview.net/forum?id=QGd34p02mI}.

\bibitem[Qiao et~al.(2025)Qiao, Chen, Xie, Huang, Zhang, Deng, Rao, Deng, Meng, Wang, Xu, Chen, Xie, Zheng, Yang, Li, and Lei]{diffleop}
A.~Qiao, Y.~Chen, J.~Xie, W.~Huang, H.~Zhang, Q.~Deng, J.~Rao, J.~Deng, F.~Meng, Z.~Wang, M.~Xu, H.~Chen, J.~Xie, S.~Zheng, Y.~Yang, G.-B. Li, and J.~Lei.
\newblock A 3d pocket-aware lead optimization model with knowledge guidance and its application for discovery of new glutaminyl cyclase inhibitors.
\newblock \emph{Briefings in Bioinformatics}, 26\penalty0 (4):\penalty0 bbaf345, 2025.
\newblock \doi{10.1093/bib/bbaf345}.
\newblock URL \url{https://doi.org/10.1093/bib/bbaf345}.

\bibitem[Schneuing et~al.(2024)Schneuing, Harris, Du, Didi, Jamasb, Igashov, Du, Gomes, Blundell, Lio, Welling, Bronstein, and Correia]{diffsbdd}
A.~Schneuing, C.~Harris, Y.~Du, K.~Didi, A.~Jamasb, I.~Igashov, W.~Du, C.~Gomes, T.~L. Blundell, P.~Lio, M.~Welling, M.~Bronstein, and B.~Correia.
\newblock Structure-based drug design with equivariant diffusion models.
\newblock \emph{Nature Computational Science}, 4\penalty0 (12):\penalty0 899--909, 2024.
\newblock \doi{10.1038/s43588-024-00737-x}.

\bibitem[Shoichet and Kuntz(1993)]{docking2}
B.~K. Shoichet and I.~D. Kuntz.
\newblock Matching chemistry and shape in molecular docking.
\newblock \emph{Protein Engineering}, 6\penalty0 (7):\penalty0 723--732, 1993.
\newblock \doi{10.1093/protein/6.7.723}.
\newblock URL \url{https://doi.org/10.1093/protein/6.7.723}.

\bibitem[Song et~al.(2023)Song, Vahdat, Mardani, and Kautz]{gdm}
J.~Song, A.~Vahdat, M.~Mardani, and J.~Kautz.
\newblock Pseudoinverse-guided diffusion models for inverse problems.
\newblock In \emph{International Conference on Learning Representations}, 2023.
\newblock URL \url{https://openreview.net/forum?id=9_gsMA8MRKQ}.

\bibitem[Vargas et~al.(2023)Vargas, Grathwohl, and Doucet]{dds}
F.~Vargas, W.~S. Grathwohl, and A.~Doucet.
\newblock Denoising diffusion samplers.
\newblock In \emph{The Eleventh International Conference on Learning Representations}, 2023.
\newblock URL \url{https://openreview.net/forum?id=8pvnfTAbu1f}.

\bibitem[Wang et~al.(2023)Wang, Yu, and Zhang]{ddnm}
Y.~Wang, J.~Yu, and J.~Zhang.
\newblock Zero-shot image restoration using denoising diffusion null-space model.
\newblock In \emph{International Conference on Learning Representations}, 2023.
\newblock URL \url{https://openreview.net/forum?id=mRieQgMtNTQ}.

\bibitem[Zhang et~al.(2025)Zhang, Chu, Berner, Meng, Anandkumar, and Song]{daps}
B.~Zhang, W.~Chu, J.~Berner, C.~Meng, A.~Anandkumar, and Y.~Song.
\newblock Improving diffusion inverse problem solving with decoupled noise annealing.
\newblock In \emph{Proceedings of the IEEE/CVF Conference on Computer Vision and Pattern Recognition (CVPR)}, pages 20895--20905, 2025.

\bibitem[Zhou et~al.(2025)Zhou, Zhu, Tang, and Li]{activitydiff}
R.~Zhou, H.~Zhu, J.~Tang, and M.~Li.
\newblock {ActivityDiff}: A diffusion model with positive and negative activity guidance for de novo drug design, 2025.
\newblock URL \url{https://arxiv.org/abs/2508.06364}.

\bibitem[Zhou et~al.(2024)Zhou, Cheng, Yang, Bao, Wang, and Gu]{decompopt}
X.~Zhou, X.~Cheng, Y.~Yang, Y.~Bao, L.~Wang, and Q.~Gu.
\newblock {DecompOpt}: Controllable and decomposed diffusion models for structure-based molecular optimization.
\newblock In \emph{International Conference on Learning Representations}, 2024.
\newblock URL \url{https://openreview.net/forum?id=Y3BbxvAQS9}.

\bibitem[Zhu et~al.(2023)Zhu, Zhang, Liang, Cao, Wen, Timofte, and Van~Gool]{diffpir}
Y.~Zhu, K.~Zhang, J.~Liang, J.~Cao, B.~Wen, R.~Timofte, and L.~Van~Gool.
\newblock Denoising diffusion models for plug-and-play image restoration.
\newblock In \emph{Proceedings of the IEEE/CVF Conference on Computer Vision and Pattern Recognition (CVPR) Workshops}, pages 1219--1229, 2023.

\bibitem[Zilberstein et~al.(2025)Zilberstein, Mardani, and Segarra]{rsd}
N.~Zilberstein, M.~Mardani, and S.~Segarra.
\newblock Repulsive latent score distillation for solving inverse problems.
\newblock In Y.~Yue, A.~Garg, N.~Peng, F.~Sha, and R.~Yu, editors, \emph{International Conference on Learning Representations}, volume 2025, pages 60181--60224, 2025.
\newblock URL \url{https://proceedings.iclr.cc/paper_files/paper/2025/file/96d328a1f6d8396d8c8a62f2beee252a-Paper-Conference.pdf}.

\end{thebibliography}

\appendix
\clearpage
\begin{table*}[!t]

\centering
\caption{Comparison of related molecular design and sample-refinement methods.}
\label{tab:related_method_comparison}

\footnotesize

\colorlet{oursrow}{blue!8}

\newcommand{\markcell}[1]{%
  \raisebox{0pt}[1.1em][0.35em]{%
    \makebox[1.45em][c]{#1}%
  }%
}
\newcommand{\yes}{\markcell{$\checkmark$}}
\newcommand{\no}{\markcell{$\times$}}
\newcommand{\partialyes}{\markcell{\raisebox{-0.03em}{$\triangle$}}}

\newcommand{\groupcell}[1]{%
  \rotatebox[origin=c]{90}{%
    \fontsize{6.8}{7.3}\selectfont\bfseries
    \shortstack[c]{#1}%
  }%
}

\newcommand{\methodhead}[1]{%
  \parbox[c][3.35em][c]{\linewidth}{%
    \raggedright
    \fontsize{7.8}{8.6}\selectfont\bfseries
    #1%
  }%
}

\newcommand{\caphead}[1]{%
  \parbox[c][3.35em][c]{\linewidth}{%
    \centering
    \fontsize{6.8}{7.5}\selectfont\bfseries
    #1%
  }%
}

\newcolumntype{G}{>{\centering\arraybackslash}m{0.032\textwidth}}
\newcolumntype{F}{>{\raggedright\arraybackslash\bfseries}m{0.250\textwidth}}
\newcolumntype{R}{>{\raggedright\arraybackslash\fontsize{7.4}{8.3}\selectfont}m{0.210\textwidth}}
\newcolumntype{C}{>{\centering\arraybackslash}m{0.074\textwidth}}

\colorlet{subrulegray}{gray!25}
\setlength{\cmidrulewidth}{0.08pt}
\setlength{\tabcolsep}{2pt}
\renewcommand{\arraystretch}{1.12}

\begin{tabular*}{\textwidth}{
  @{\extracolsep{\fill}}
  G
  F
  R
  C C C C C C
  @{}
}
\toprule
&
\methodhead{Method family}
&
\methodhead{Representative methods}
&
\caphead{No objective-\\specific\\retraining}
&
\caphead{Inference-time\\sample control}
&
\caphead{Explicit\\geometric\\control}
&
\caphead{Pocket--ligand\\surface\\complementarity}
&
\caphead{Target-only\\specificity}
&
\caphead{OOD pseudo-label\\refinement}
\\

\midrule

\cellcolor{white}\multirow[c]{8}{*}{\groupcell{Lead\\Optimization}}
& \cellcolor{oursrow}\textbf{Off-target-agnostic specificity-aware optimization}
& \cellcolor{oursrow}\textbf{SurfSpec (Ours)}
& \cellcolor{oursrow}\yes
& \cellcolor{oursrow}\yes
& \cellcolor{oursrow}\yes
& \cellcolor{oursrow}\yes
& \cellcolor{oursrow}\yes
& \cellcolor{oursrow}\yes \\

\cmidrule(lr){2-9}
& \multirow[c]{3}{*}{Off-target-guided specificity}
& SBE-Diff, ActivityDiff
& \no & \yes & \no & \no & \no & \no \\

&
& MolSculptor
& \partialyes & \yes & \no & \no & \no & \no \\

\cmidrule(lr){2-9}
& \multirow[c]{4}{*}{Target-only optimization}
& DecompOpt
& \yes & \yes & \no & \no & \no & \no \\

&
& Delete
& \no & \no & \no & \no & \no & \no \\

&
& PMDM
& \partialyes & \partialyes & \no & \no & \no & \no \\

&
& Diffleop, MolJO
&  \no & \yes & \partialyes & \no & \no & \no \\


\midrule

\multirow[c]{4}{*}{\groupcell{Sample refinement}}
& Reverse-path refinement
& SDEdit, DDNM, MCG, DPS, $\Pi$GDM, FlowChef, FlowDPS
& \yes & \yes & \no & \no & \no & \no \\
\cmidrule(lr){2-9}
& Optimization-based refinement
& DiffPIR, DDS, DAPS, DCDP, PnP-Flow, FLOWER, FlowLPS
& \yes & \yes & \no & \no & \no & \no \\
\cmidrule(lr){2-9}
& Variational refinement
& RED-Diff, RSD, FLAIR
& \yes & \yes & \no & \no & \no & \no \\
\cmidrule(lr){2-9}
& Constrained sample refinement
& FAST-DIPS
& \yes & \yes & \no & \no & \no & \no \\

\midrule

\multirow[c]{4}{*}{\groupcell{Geometric control}}
& Shape-conditioned generation
& Diff-Shape
& \no & \no & \yes & \no & \no & \no \\
\cmidrule(lr){2-9}
& Multimodal geometric generation
& ShEPhERD
& \no & \no & \yes & \no & \no & \no \\
\cmidrule(lr){2-9}
& Representation-conditioned generation
& GeoRCG
& \no & \no & \yes & \no & \no & \no \\
\cmidrule(lr){2-9}
& Training-free geometric guidance
& UniGuide
& \yes & \yes & \yes & \no & \no & \no \\

\midrule

\multirow[c]{5}{*}{\groupcell{Molecular\\design}}
& Pocket-conditioned generation
& Pocket2Mol, TargetDiff, AliDiff
& \no & \no & \no & \no & \no & \no \\
\cmidrule(lr){2-9}
& \multirow[c]{2}{*}{Multi-task SBDD generation}

& DiffSBDD
& \yes & \yes & \partialyes & \no & \no & \no \\

&
& DiffGui, PocketXMol, 
& \no & \yes & \partialyes & \no & \no & \no \\

\cmidrule(lr){2-9}
& Molecular elaboration
& D3FG
& \no & \no & \partialyes & \no & \no & \no \\

\bottomrule
\end{tabular*}
\end{table*}

\FloatBarrier

\section{Literature Context of the Work}

Table~\ref{tab:related_method_comparison} summarizes related methods across lead optimization, sample refinement, geometric control, and structure-based molecular design. SurfSpec is distinguished by jointly supporting inference-time geometric control, pocket--ligand surface complementarity, target-only specificity, and OOD pseudo-label refinement without objective-specific retraining.

\section{Proofs and Additional Theoretical Details}
\label{app:proof}

\subsection{Surface-Oriented Probability Fields}
\label{app:surface_probability_field}

We construct the surface-induced probability measures used in
Eq.~\eqref{eq:js_geometric_mismatch} from signed-distance functions.
Let \(A_L=\{a_i\}_{i=1}^{N_L}\) denote the heavy-atom coordinates of the
ligand \(L\).
As in the main text, we define the continuous ligand-local domain
\begin{equation}
\Omega_L
=
\left\{
x\in\mathbb{R}^3:
\min_i \|x-a_i\|_2 \le 8\,\text{\AA}
\right\}.
\label{eq:ligand_local_domain_appendix}
\end{equation}
Let \(\surf_L\) denote the van der Waals surface of the ligand heavy
atoms, and let \(\surf_P^L\) denote the ligand-oriented pocket surface,
obtained by aligning the pocket to the ligand frame and restricting the
pocket van der Waals surface to points within \(8\,\text{\AA}\) of any
ligand heavy atom.

We define signed-distance functions on \(\Omega_L\).
For the ligand, let \(B_L\) be the ligand van der Waals solid, whose
boundary is \(\surf_L\).
The signed distance to the ligand surface is
\begin{equation}
\phi_L(x)
=
\begin{cases}
\phantom{-}\inf_{z\in\surf_L}\|x-z\|_2,
& x\notin B_L,\\
-\inf_{z\in\surf_L}\|x-z\|_2,
& x\in B_L.
\end{cases}
\label{eq:ligand_signed_distance}
\end{equation}
For the pocket, let \(B_P^L\) be the aligned protein van der Waals solid.
We define
\begin{equation}
\phi_P(x)
=
\begin{cases}
\phantom{-}\inf_{z\in\surf_P^L}\|x-z\|_2,
& x\notin B_P^L,\\
-\inf_{z\in\surf_P^L}\|x-z\|_2,
& x\in B_P^L.
\end{cases}
\label{eq:pocket_signed_distance}
\end{equation}
Thus, \(\phi_P(x)<0\) indicates that \(x\) lies inside the protein
van der Waals volume, corresponding to a clash region.

We define the ligand and pocket surface energies as
\begin{equation}
\begin{aligned}
E_L(x)
&=
\frac{\phi_L(x)^2}{2\sigma_L^2},
\\
E_P(x)
&=
\frac{\phi_P(x)^2}{2\sigma_P^2}
+
\lambda_{\mathrm{clash}}
[-\phi_P(x)]_+,
\end{aligned}
\label{eq:surface_field_energy_appendix}
\end{equation}
where \([u]_+=\max(u,0)\).
The clash term discourages probability mass inside the protein side of
the pocket surface.

We convert these energies into normalized Boltzmann densities on the
common continuous domain \(\Omega_L\):
\begin{align}
q_L(x)
=
\frac{
\exp(-\beta E_L(x))
}{
\int_{\Omega_L}\exp(-\beta E_L(x'))\,dx'
},\\
\qquad
q_P(x)
=
\frac{
\exp(-\beta E_P(x))
}{
\int_{\Omega_L}\exp(-\beta E_P(x'))\,dx'
}.
\label{eq:surface_probability_densities_appendix}
\end{align}
We set \(\beta=2\) in all experiments.
The corresponding probability measures are
\begin{equation}
d\mu_L(x)=q_L(x)\,dx,
\qquad
d\nu_P^L(x)=q_P(x)\,dx.
\label{eq:surface_probability_measures_appendix}
\end{equation}
The geometric mismatch in Eq.~\eqref{eq:js_geometric_mismatch} is then
computed as the Jensen--Shannon distance between
\(\mu_L\) and \(\nu_P^L\), which are defined on the same ligand-local
domain.

In implementation, the integrals in
Eq.~\eqref{eq:surface_probability_densities_appendix} are approximated
by evaluating the unnormalized Boltzmann densities on a ligand-centered
voxel grid over \(\Omega_L\) and normalizing the resulting discrete
weights.

\subsection{Proof of the Specificity Certificate}
\label{app:proof_specificity}

We first restate the quantile-calibrated geometry--affinity assumption
used in the main text. This assumption provides an empirical bridge from
geometric mismatch to affinity, while allowing residual variation due to
non-geometric factors such as electrostatics, hydrogen bonding, and
desolvation.

\addtocounter{assumption}{-1}
\begin{assumption}[Quantile-calibrated geometry--affinity envelope]
\label{assump:calibrated_appendix}
Let \(\mathcal{D}_{\mathrm{CD}}\) denote the empirical ligand--pocket
distribution induced by CrossDocked2020. For a tail probability
\(\eta\in(0,1)\), there exist monotone decreasing functions
\(f_-^{\eta}\) and \(f_+^{\eta}\) such that, for
\((L,P)\sim\mathcal{D}_{\mathrm{CD}}\),
\begin{equation}
\Pr\!\left[
f_-^{\eta}(\dgm(L,P))
\le
A_P(L)
\le
f_+^{\eta}(\dgm(L,P))
\right]
\ge
1-\eta .
\label{eq:percentile_calibration_appendix}
\end{equation}
Here, \(A_P(L)\) is an affinity-like score for ligand \(L\) against pocket
\(P\), where larger values indicate stronger binding.
\end{assumption}

This assumption does not require geometry to fully determine affinity.
Rather, it states that geometric mismatch constrains the attainable affinity
range with high probability. The percentile envelope absorbs affinity
variation caused by non-geometric interactions.

We next formalize the geometric part of the certificate. Let
\(\mathcal{L}_{\mathrm{opt}}\) denote the set of ligands reachable by
optimization. For \(L\in\mathcal{L}_{\mathrm{opt}}\) and \(O\in\Off\), let
\[
m_L(O)=d_\mathrm{JS}(\nu_{\Tgt}^{L},\nu_O^{L})
\]
denote the ligand-oriented geometric separation between the target and
off-target pocket surfaces, where \(\nu_{\Tgt}^{L}\) and \(\nu_O^{L}\) are
the corresponding ligand-oriented pocket surface measures. We say that
\(\Off\) is a \(\delta\)-separated off-target class if
\[
\inf_{L\in\mathcal{L}_{\mathrm{opt}}}
\inf_{O\in\Off}
m_L(O)
\ge
\delta .
\]
This means that every considered off-target pocket remains at least
\(\delta\) away from the target pocket under the ligand-oriented surface
metric throughout optimization.

\begin{lemma}[Triangle lower bound for off-target mismatch]
\label{lem:triangle_offtarget_mismatch}
Let \(L\in\mathcal{L}_{\mathrm{opt}}\), and let
\[
\epsilon=\dgm(L,\Tgt)
\]
be the target mismatch. Then, for each off-target pocket \(O\in\Off\),
\begin{equation}
\dgm(L,O)
\ge
[m_L(O)-\epsilon]_+ .
\label{eq:offtarget_mismatch_lower_bound}
\end{equation}
In particular, if \(\Off\) is a \(\delta\)-separated off-target class, then
\begin{equation}
\dgm(L,O)
\ge
[\delta-\epsilon]_+
\qquad
\text{for all } O\in\Off .
\label{eq:offtarget_mismatch_delta_lower_bound}
\end{equation}
\end{lemma}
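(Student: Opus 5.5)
The plan is to apply the triangle inequality \eqref{eq:js_triangle} of \(d_{\mathrm{JS}}\) to three measures on the common ligand-local domain \(\Omega_L\): the ligand measure \(\mu_L\), the target measure \(\nu_{\Tgt}^{L}\), and the off-target measure \(\nu_O^{L}\). The first step is to check that all three really live on the same space, so that the triangle inequality applies. By the construction in Appendix~\ref{app:surface_probability_field}, each of \(\mu_L\), \(\nu_{\Tgt}^{L}\) and \(\nu_O^{L}\) is a normalized Boltzmann density on \(\Omega_L\), which depends only on the ligand \(L\). Each pocket is placed in its ligand-oriented frame \(P^L=\rho_{g^\star(L,P)}(P)\) before its surface is cropped.

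The next step is to match the three distances to these measures. I need \(\dgm(L,\Tgt)=d_{\mathrm{JS}}(\mu_L,\nu_{\Tgt}^{L})\), \(\dgm(L,O)=d_{\mathrm{JS}}(\mu_L,\nu_O^{L})\), and \(m_L(O)=d_{\mathrm{JS}}(\nu_{\Tgt}^{L},\nu_O^{L})\). This requires that \(m_L(O)\) uses the same aligned pockets \(\Tgt^L\) and \(O^L\), with the same choice of maximizer \(g^\star\), as the ones entering \(\dgm\). If \(g^\star\) is not unique, I fix one selection per pair \((L,P)\) once and use it throughout. I expect this bookkeeping to be the only real obstacle, since with a mismatched alignment the chain would not close.

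With these identifications, the triangle inequality for the path \(\nu_{\Tgt}^{L}\to\mu_L\to\nu_O^{L}\), together with symmetry of \(d_{\mathrm{JS}}\), gives
\[
m_L(O)\le d_{\mathrm{JS}}(\nu_{\Tgt}^{L},\mu_L)+d_{\mathrm{JS}}(\mu_L,\nu_O^{L})=\epsilon+\dgm(L,O).
\]
Hence \(\dgm(L,O)\ge m_L(O)-\epsilon\). Since \(\dgm(L,O)\ge 0\) because it is a metric value, taking the maximum with \(0\) yields \eqref{eq:offtarget_mismatch_lower_bound}.

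For \eqref{eq:offtarget_mismatch_delta_lower_bound}, the \(\delta\)-separation assumption gives \(m_L(O)\ge\delta\) for every \(L\in\mathcal{L}_{\mathrm{opt}}\) and every \(O\in\Off\). The map \(u\mapsto[u-\epsilon]_+\) is nondecreasing, so \([m_L(O)-\epsilon]_+\ge[\delta-\epsilon]_+\), which completes the argument.
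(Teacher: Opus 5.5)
Your proposal is correct and follows essentially the same route as the paper: you identify \(\epsilon\), \(\dgm(L,O)\), and \(m_L(O)\) as JS distances among \(\mu_L\), \(\nu_{\Tgt}^{L}\), and \(\nu_O^{L}\), apply the triangle inequality, and use nonnegativity to pass to \([\cdot]_+\). You then invoke \(\delta\)-separation together with monotonicity of \(u\mapsto[u-\epsilon]_+\), and your explicit check that all three measures live on \(\Omega_L\) and share the same alignment \(g^\star(L,P)\) in both \(\dgm\) and \(m_L(O)\) makes precise a point the paper leaves implicit.
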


\begin{proof}
By definition of the ligand-oriented mismatch,
\[
\dgm(L,\Tgt)=d_\mathrm{JS}(\mu_L,\nu_{\Tgt}^{L})=\epsilon,
\]
and
\[
\dgm(L,O)=d_\mathrm{JS}(\mu_L,\nu_O^{L}).
\]
Since \(d_\mathrm{JS}\) is a metric on probability measures, it satisfies the triangle
inequality:
\[
d_\mathrm{JS}(\nu_{\Tgt}^{L},\nu_O^{L})
\le
d_\mathrm{JS}(\nu_{\Tgt}^{L},\mu_L)
+
d_\mathrm{JS}(\mu_L,\nu_O^{L}).
\]
Therefore,
\[
m_L(O)
\le
\epsilon+\dgm(L,O).
\]
Rearranging gives
\[
\dgm(L,O)\ge m_L(O)-\epsilon.
\]
Since \(\dgm(L,O)\ge 0\), we obtain
\[
\dgm(L,O)
\ge
[m_L(O)-\epsilon]_+ .
\]

If \(\Off\) is \(\delta\)-separated, then \(m_L(O)\ge\delta\) for every
\(L\in\mathcal{L}_{\mathrm{opt}}\) and \(O\in\Off\). Hence
\[
[m_L(O)-\epsilon]_+
\ge
[\delta-\epsilon]_+,
\]
which yields
\[
\dgm(L,O)
\ge
[\delta-\epsilon]_+ .
\]
\end{proof}

\addtocounter{theorem}{-1}
\begin{theorem}[Specificity certificate for separated off-targets]
\label{thm:spec_bound_appendix}
Let \(L\in\mathcal{L}_{\mathrm{opt}}\), and let
\(\epsilon=\dgm(L,\Tgt)\). Suppose that \(\Off\) is a
\(\delta\)-separated off-target class. Under
Assumption~\ref{assump:calibrated_appendix}, with probability at least
\(1-(|\Off|+1)\eta\),
\begin{equation}
\Spec(L;\Tgt,\Off)
\ge
f_-^\eta(\epsilon)
-
f_+^\eta\!\left([\delta-\epsilon]_+\right).
\label{eq:spec_bound_ligand_invariant_appendix}
\end{equation}
Consequently, reducing the target mismatch \(\epsilon\) weakly improves
this high-probability specificity lower bound.
\end{theorem}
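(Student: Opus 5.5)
The plan is to combine Lemma~\ref{lem:triangle_offtarget_mismatch} with Assumption~\ref{assump:calibrated_appendix}, applied once to the target pair and once to each off-target pair, and then take a union bound.

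First, apply the assumption to the pair \((L,\Tgt)\). Outside an event \(E_0\) of probability at most \(\eta\), the lower envelope gives
\[
A_{\Tgt}(L)\ge f_-^\eta(\dgm(L,\Tgt))=f_-^\eta(\epsilon).
\]

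Second, fix \(O\in\Off\) and apply the assumption to \((L,O)\). Outside an event \(E_O\) of probability at most \(\eta\), the upper envelope gives
\[
A_O(L)\le f_+^\eta(\dgm(L,O)).
\]
By Lemma~\ref{lem:triangle_offtarget_mismatch} and \(\delta\)-separation, \(\dgm(L,O)\ge[\delta-\epsilon]_+\). Since \(f_+^\eta\) is monotone decreasing, this yields
\[
A_O(L)\le f_+^\eta([\delta-\epsilon]_+).
\]
The right-hand side does not depend on \(O\).

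Third, take the union bound over \(E_0\) and the events \(E_O\), \(O\in\Off\). With probability at least \(1-(|\Off|+1)\eta\), all the bounds above hold simultaneously, so
\[
\max_{O\in\Off}A_O(L)\le f_+^\eta([\delta-\epsilon]_+).
\]
Substituting this and the target bound into Eq.~\eqref{eq:spec} gives the certificate. No independence between the events is needed, since the argument uses only the union bound.

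For the monotonicity clause, view the bound as \(B(\epsilon)=f_-^\eta(\epsilon)-f_+^\eta([\delta-\epsilon]_+)\).
\begin{itemize}
\item As \(\epsilon\) decreases, \(f_-^\eta(\epsilon)\) weakly increases, because \(f_-^\eta\) is decreasing.
\item As \(\epsilon\) decreases, \([\delta-\epsilon]_+\) weakly increases, so \(f_+^\eta([\delta-\epsilon]_+)\) weakly decreases.
\end{itemize}
Both effects move \(B\) in the same direction, so \(B\) is weakly decreasing in \(\epsilon\).

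The main obstacle is interpreting the probability statement. The assumption is stated for \((L,P)\) drawn from \(\mathcal{D}_{\mathrm{CD}}\), whereas here \(L\) is a fixed optimized ligand and the pockets are specific. I would state explicitly that each pair \((L,\Tgt)\) and \((L,O)\) is treated as satisfying the calibrated envelope with the same per-pair failure probability \(\eta\). Equivalently, the events are read as envelope-coverage events for each evaluated pair. Under this reading the union bound applies directly, and the remaining steps are routine monotonicity arguments.
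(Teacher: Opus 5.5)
Your proposal is correct and follows the paper's proof essentially step for step: Lemma~\ref{lem:triangle_offtarget_mismatch} with $\delta$-separation gives $\dgm(L,O)\ge[\delta-\epsilon]_+$. The calibrated envelopes then bound the target affinity from below and each off-target affinity from above (using that $f_+^\eta$ is decreasing), a union bound over the $|\Off|+1$ events gives the probability, and the same two-term monotonicity argument handles $B(\epsilon)$. Your remark that the assumption is stated for pairs drawn from $\mathcal{D}_{\mathrm{CD}}$, and so must be read as a per-pair coverage guarantee for the fixed optimized ligand, is a caveat the paper's proof relies on without stating it.
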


\begin{proof}
The proof has two steps. First, we obtain a geometric lower bound on the
mismatch between the ligand and each off-target pocket. Second, we convert
this geometric lower bound into an affinity-level specificity certificate
using the percentile-calibrated geometry--affinity envelope.

By Lemma~\ref{lem:triangle_offtarget_mismatch}, for every off-target pocket
\(O\in\Off\),
\begin{equation}
\dgm(L,O)
\ge
[\delta-\epsilon]_+ .
\label{eq:proof_offtarget_mismatch_delta}
\end{equation}

Now consider the calibrated affinity events. For the target pocket, the
lower envelope in Assumption~\ref{assump:calibrated_appendix} gives, with
probability at least \(1-\eta\),
\[
A_{\Tgt}(L)
\ge
f_-^\eta(\dgm(L,\Tgt))
=
f_-^\eta(\epsilon).
\]
For each off-target pocket \(O\in\Off\), the upper envelope gives, with
probability at least \(1-\eta\),
\[
A_O(L)
\le
f_+^\eta(\dgm(L,O)).
\]
Because \(f_+^\eta\) is monotone decreasing and
\(\dgm(L,O)\ge[\delta-\epsilon]_+\), we have
\[
f_+^\eta(\dgm(L,O))
\le
f_+^\eta\!\left([\delta-\epsilon]_+\right).
\]
Therefore, for every \(O\in\Off\),
\[
A_O(L)
\le
f_+^\eta\!\left([\delta-\epsilon]_+\right).
\]
Taking the maximum over off-target pockets yields
\[
\max_{O\in\Off} A_O(L)
\le
f_+^\eta\!\left([\delta-\epsilon]_+\right).
\]

Combining the target lower bound and the off-target upper bound gives
\[
\begin{aligned}
\Spec(L;\Tgt,\Off)
&=
A_{\Tgt}(L)
-
\max_{O\in\Off} A_O(L) \\
&\ge
f_-^\eta(\epsilon)
-
f_+^\eta\!\left([\delta-\epsilon]_+\right).
\end{aligned}
\]

It remains to account for the probability of the calibrated events. The
target lower-envelope event fails with probability at most \(\eta\). For
each of the \(|\Off|\) off-target pockets, the upper-envelope event fails
with probability at most \(\eta\). By the union bound, all calibrated events
hold simultaneously with probability at least
\[
1-(|\Off|+1)\eta .
\]
Thus Eq.~\eqref{eq:spec_bound_ligand_invariant_appendix} holds with the
stated probability.

Finally, define the certified lower bound as
\[
B(\epsilon)
=
f_-^\eta(\epsilon)
-
f_+^\eta\!\left([\delta-\epsilon]_+\right).
\]
Since \(f_-^\eta\) is monotone decreasing, reducing \(\epsilon\) weakly
increases \(f_-^\eta(\epsilon)\). Also, reducing \(\epsilon\) weakly
increases \([\delta-\epsilon]_+\), and since \(f_+^\eta\) is monotone
decreasing, this weakly decreases
\(f_+^\eta([\delta-\epsilon]_+)\). Therefore, reducing the target mismatch
\(\epsilon\) weakly improves the high-probability lower bound
\(B(\epsilon)\).
\end{proof}

\subsection{Low-Noise Analysis of Anchored Refinement}
\label{app:proof_refinement}

We analyze the low-noise anchored recovery step for a pseudo-label
\(a=\xanc\).
Let
\[
V_a(x)
=
\frac{\lambda}{2}\|x-a\|_2^2
\]
denote the anchor potential.
The anchored clean distribution is defined as
\begin{equation}
p_0^{\mathrm{anc}}(x\mid\Tgt,a)
=
\frac{1}{Z_a}
p_0(x\mid\Tgt)\exp(-V_a(x)),
\label{eq:anchored_clean_distribution_appendix}
\end{equation}
where \(Z_a\) is the normalizing constant.
Let \(\mu_0^{\mathrm{anc}}\) and
\(\Sigma_0^{\mathrm{anc}}\) denote the mean and covariance of
\(p_0^{\mathrm{anc}}(\cdot\mid\Tgt,a)\), respectively.

For diffusion time \(\tau\), consider the Gaussian forward perturbation
\begin{equation}
Y_\tau
=
\alpha_\tau X_0+\sigma_\tau Z,
\qquad
Z\sim\mathcal{N}(0,I_d),
\label{eq:anchored_forward_process_appendix}
\end{equation}
where \(X_0\sim p_0^{\mathrm{anc}}(\cdot\mid\Tgt,a)\) and \(Z\) is
independent of \(X_0\).
We denote the distribution of \(Y_\tau\) by \(p_\tau^{\mathrm{anc}}\).

\begin{assumption}[Anchored concentration]
\label{assump:anchored_concentration}
The anchored clean distribution is concentrated in a small ball around
its mean:
\begin{equation}
\operatorname{supp}
\left(
p_0^{\mathrm{anc}}(\cdot\mid\Tgt,a)
\right)
\subset
B(\mu_0^{\mathrm{anc}},r_0).
\label{eq:anchored_concentration_assumption}
\end{equation}
\end{assumption}

This assumption formalizes the intuition that imposing both
pocket-conditioned validity and similarity to the pseudo-label leaves a
locally concentrated set of feasible clean ligands.
It also implies the covariance bound
\begin{equation}
\operatorname{tr}(\Sigma_0^{\mathrm{anc}})
=
\mathbb{E}
\left[
\|X_0-\mu_0^{\mathrm{anc}}\|_2^2
\right]
\le
r_0^2 .
\label{eq:anchored_covariance_from_support}
\end{equation}
This covariance bound is a consequence of
Assumption~\ref{assump:anchored_concentration}, not an additional
assumption.

\begin{proposition}[Mode recovery under anchored concentration]
\label{prop:mode_recovery}
Suppose Assumption~\ref{assump:anchored_concentration} holds.
Let \(y_\tau^\star\) be an interior mode of
\(p_\tau^{\mathrm{anc}}\).
Then
\begin{equation}
\left\|
y_\tau^\star
-
\alpha_\tau\mu_0^{\mathrm{anc}}
\right\|_2
\le
|\alpha_\tau|r_0 .
\label{eq:mode_to_forward_mean_gap}
\end{equation}
Consequently, if the low-noise anchored score flow recovers the mode
\(y_\tau^\star\), then its convergence point lies within
\(|\alpha_\tau|r_0\) of the forward mean
\(\alpha_\tau\mu_0^{\mathrm{anc}}\).
\end{proposition}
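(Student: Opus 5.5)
The plan is to characterize the interior mode through its first-order stationarity condition and then use Tweedie's formula to express the mode as a scaled posterior mean of the clean variable. Since Assumption~\ref{assump:anchored_concentration} confines the clean variable to a ball, that posterior mean is confined to the same ball, which yields Eq.~\eqref{eq:mode_to_forward_mean_gap}.

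\emph{Step 1 (regularity).} Assume \(\sigma_\tau>0\). By Eq.~\eqref{eq:anchored_forward_process_appendix}, \(p_\tau^{\mathrm{anc}}\) is the law of \(\alpha_\tau X_0\) convolved with the Gaussian \(\mathcal{N}(0,\sigma_\tau^2 I_d)\), so
\[
p_\tau^{\mathrm{anc}}(y)=\int \varphi_{\sigma_\tau}(y-\alpha_\tau x)\,p_0^{\mathrm{anc}}(dx).
\]
By Assumption~\ref{assump:anchored_concentration}, \(p_0^{\mathrm{anc}}\) has compact support. Hence differentiation under the integral is justified by dominated convergence, and \(p_\tau^{\mathrm{anc}}\) is smooth and strictly positive on \(\mathbb{R}^d\). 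Consequently \(\log p_\tau^{\mathrm{anc}}\) is differentiable, and at an interior mode \(y_\tau^\star\) we have \(\nabla\log p_\tau^{\mathrm{anc}}(y_\tau^\star)=0\).

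\emph{Step 2 (Tweedie).} Differentiating the Gaussian kernel gives
\[
\nabla_y\log p_\tau^{\mathrm{anc}}(y)=\frac{\alpha_\tau\,\mathbb{E}[X_0\mid Y_\tau=y]-y}{\sigma_\tau^2}.
\]
Here the conditional expectation is taken under the posterior with density proportional to \(\varphi_{\sigma_\tau}(y-\alpha_\tau x)\,p_0^{\mathrm{anc}}(dx)\). Setting the gradient to zero at the mode yields the fixed-point identity
\[
y_\tau^\star=\alpha_\tau\,\mathbb{E}[X_0\mid Y_\tau=y_\tau^\star].
\]

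\emph{Step 3 (convexity).} The posterior is absolutely continuous with respect to \(p_0^{\mathrm{anc}}\), so it is supported in \(B(\mu_0^{\mathrm{anc}},r_0)\). Its mean lies in the closed convex hull of that support, and the closed ball is convex, so \(\|\mathbb{E}[X_0\mid Y_\tau=y_\tau^\star]-\mu_0^{\mathrm{anc}}\|_2\le r_0\). Combining with Step 2,
\[
\|y_\tau^\star-\alpha_\tau\mu_0^{\mathrm{anc}}\|_2=|\alpha_\tau|\,\|\mathbb{E}[X_0\mid Y_\tau=y_\tau^\star]-\mu_0^{\mathrm{anc}}\|_2\le|\alpha_\tau|r_0 .
\]
The \enquote{consequently} clause follows immediately: any convergence point of the score flow that coincides with \(y_\tau^\star\) inherits this bound.

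The main obstacle is Step 1, namely making the stationarity condition rigorous. This needs \(\sigma_\tau>0\) and enough integrability to differentiate under the integral and to ensure positivity of the density. Compact support of \(p_0^{\mathrm{anc}}\) supplies both at once, so I expect this step to be routine rather than delicate. The only subtlety is interpreting \enquote{interior mode} as a local maximizer in the open domain, where the gradient necessarily vanishes.
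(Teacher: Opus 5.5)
Your proposal is correct and follows essentially the same route as the paper: the score vanishes at the interior mode, the Gaussian (Tweedie) score identity gives \(y_\tau^\star=\alpha_\tau\mathbb{E}[X_0\mid Y_\tau=y_\tau^\star]\), and the posterior mean stays in the convex ball that contains the posterior support. Your Step~1 regularity justification (smoothness and positivity of the Gaussian convolution) is a detail the paper leaves implicit, but it does not change the argument.
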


\begin{proof}
For the Gaussian forward perturbation in
Eq.~\eqref{eq:anchored_forward_process_appendix}, the score identity is
\begin{equation}
\nabla_y\log p_\tau^{\mathrm{anc}}(y)
=
-\frac{1}{\sigma_\tau^2}
\left(
y
-
\alpha_\tau
\mathbb{E}
[
X_0\mid Y_\tau=y,\Tgt,a
]
\right).
\label{eq:anchored_score_identity}
\end{equation}
Since \(y_\tau^\star\) is an interior mode,
\(\nabla_y\log p_\tau^{\mathrm{anc}}(y_\tau^\star)=0\).
Therefore,
\[
y_\tau^\star
=
\alpha_\tau
\mathbb{E}
[
X_0\mid Y_\tau=y_\tau^\star,\Tgt,a
].
\]
Conditioning on \(Y_\tau=y_\tau^\star\) only reweights the clean
distribution and does not enlarge its support.
Hence the posterior distribution of \(X_0\mid Y_\tau=y_\tau^\star\)
remains supported in
\(B(\mu_0^{\mathrm{anc}},r_0)\).
Since this ball is convex, the posterior mean also lies in the same ball:
\[
\left\|
\mathbb{E}
[
X_0\mid Y_\tau=y_\tau^\star,\Tgt,a
]
-
\mu_0^{\mathrm{anc}}
\right\|_2
\le
r_0 .
\]
Multiplying by \(|\alpha_\tau|\) proves
Eq.~\eqref{eq:mode_to_forward_mean_gap}.
\end{proof}

\begin{assumption}[Local DPS score approximation]
\label{assump:local_dps_score_approx}
Let
\[
\widetilde{s}^{\mathrm{anc}}_\tau(y)
=
\nabla_y\log p_\tau(y\mid\Tgt)
-
\lambda
J_{\hat{x}_{0\mid\tau}}(y)^\top
\left(
\hat{x}_{0\mid\tau}(y)-\xanc
\right)
\]
denote the DPS-style anchored score estimator used in
Eq.~\eqref{eq:low_noise_mode_recovery}.
On the region \(B\) visited by the low-noise anchored score flow, we
assume that this estimator is locally close to the exact anchored score:
\begin{equation}
\sup_{y\in B}
\left\|
\widetilde{s}^{\mathrm{anc}}_\tau(y)
-
\nabla_y\log p_\tau^{\mathrm{anc}}(y)
\right\|_2
\le
\xi_\tau .
\label{eq:local_dps_score_approx}
\end{equation}
\end{assumption}

\begin{proposition}[Mode recovery under anchored concentration]
\label{prop:mode_recovery}
Suppose Assumption~\ref{assump:anchored_concentration} holds and define
\[
\kappa_\tau
=
\frac{\alpha_\tau^2 r_0^2}{\sigma_\tau^2}.
\]
If \(\kappa_\tau<1\), then
\(\log p_\tau^{\mathrm{anc}}\) is strongly concave with constant
\[
m_\tau
=
\frac{1-\kappa_\tau}{\sigma_\tau^2}.
\]
Consequently, \(p_\tau^{\mathrm{anc}}\) has a unique mode
\(y_\tau^\star\), and
\begin{equation}
\left\|
y_\tau^\star
-
\alpha_\tau\mu_0^{\mathrm{anc}}
\right\|_2
\le
|\alpha_\tau|r_0.
\label{eq:mode_to_forward_mean_gap}
\end{equation}

Assume further that the approximate flow
\[
\frac{d y_s}{ds}
=
\widetilde{s}^{\mathrm{anc}}_\tau(y_s)
\]
remains in a region \(B\), and that the implemented score satisfies
\[
\sup_{y\in B}
\left\|
\widetilde{s}^{\mathrm{anc}}_\tau(y)
-
\nabla_y\log p_\tau^{\mathrm{anc}}(y)
\right\|_2
\le
\xi_\tau .
\]
Then
\begin{equation}
\|y_s-y_\tau^\star\|_2
\le
e^{-m_\tau s}\|y_0-y_\tau^\star\|_2
+
\frac{\xi_\tau}{m_\tau}
\left(
1-e^{-m_\tau s}
\right).
\label{eq:approx_mode_recovery_bound}
\end{equation}
\end{proposition}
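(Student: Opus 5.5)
The plan has three parts: a second-order Tweedie identity for strong concavity, the first-order Tweedie argument for the mode location, and a Grönwall-type differential inequality for the approximate flow.

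\emph{Strong concavity.} For the Gaussian perturbation in Eq.~\eqref{eq:anchored_forward_process_appendix} with \(\sigma_\tau>0\), I would differentiate the first-order score identity Eq.~\eqref{eq:anchored_score_identity} once more. This gives the second-order Tweedie formula
\[
\nabla_y^2\log p_\tau^{\mathrm{anc}}(y)
=
-\frac{1}{\sigma_\tau^2}I
+
\frac{\alpha_\tau^2}{\sigma_\tau^4}\operatorname{Cov}\!\left[X_0\mid Y_\tau=y,\Tgt,a\right].
\]
Next I bound the posterior covariance. As in the preceding proof, conditioning only reweights \(p_0^{\mathrm{anc}}\), so the posterior is supported in \(B(\mu_0^{\mathrm{anc}},r_0)\). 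For any unit vector \(v\),
\[
v^\top\operatorname{Cov}[X_0\mid y]\,v
\le
\mathbb{E}\!\left[\big(v^\top(X_0-\mu_0^{\mathrm{anc}})\big)^2\mid y\right]
\le r_0^2,
\]
because the covariance is dominated by the second moment about any fixed centre. Hence
\[
\nabla^2\log p_\tau^{\mathrm{anc}}
\preceq
-\sigma_\tau^{-2}\,(1-\kappa_\tau)\,I
=
-m_\tau I .
\]
When \(\kappa_\tau<1\) this is uniform strong concavity on \(\mathbb{R}^d\). Then \(\log p_\tau^{\mathrm{anc}}\) is coercive, so a maximizer exists; it is unique and is an interior critical point. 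The bound Eq.~\eqref{eq:mode_to_forward_mean_gap} then follows verbatim from the earlier argument: set the score to zero, use the Tweedie identity, and use convexity of the ball.

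\emph{Approximate flow.} Let \(e(s)=\|y_s-y_\tau^\star\|_2\). Writing \(\widetilde{s}=\nabla\log p_\tau^{\mathrm{anc}}+r\) with \(\|r\|\le\xi_\tau\) on \(B\), and using \(\nabla\log p_\tau^{\mathrm{anc}}(y_\tau^\star)=0\), I would compute
\[
\tfrac{1}{2}\tfrac{d}{ds}e^2
=
\langle \nabla\log p_\tau^{\mathrm{anc}}(y_s)-\nabla\log p_\tau^{\mathrm{anc}}(y_\tau^\star),\,y_s-y_\tau^\star\rangle
+
\langle r(y_s),\,y_s-y_\tau^\star\rangle .
\]
By strong monotonicity of the gradient of an \(m_\tau\)-strongly concave function, the first term is at most \(-m_\tau e^2\), and the second is at most \(\xi_\tau e\). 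So \(e\,\dot e\le -m_\tau e^2+\xi_\tau e\), which gives \(\dot e\le -m_\tau e+\xi_\tau\) wherever \(e>0\). Grönwall's inequality, i.e.\ comparison with the linear ODE \(\dot u=-m_\tau u+\xi_\tau\), then yields Eq.~\eqref{eq:approx_mode_recovery_bound}.

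\emph{Main obstacle.} The technical care is needed in two places. The first is justifying the Hessian identity, which requires differentiating under the integral; this is easy here because the posterior has compact support. The second is the nondifferentiability of \(e\) at \(e=0\). I would handle this by working with \(e_\delta=\sqrt{e^2+\delta^2}\), which satisfies \(\dot e_\delta\le -m_\tau e_\delta+\xi_\tau+m_\tau\delta\), then applying Grönwall and letting \(\delta\to0\). Alternatively one can use upper Dini derivatives.
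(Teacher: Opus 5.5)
Your proposal is correct and follows the paper's proof essentially step for step. Both use the second-order Tweedie Hessian identity with the posterior covariance bounded by $r_0^2$ via the support assumption, then the zero-score/posterior-mean-in-the-ball argument for the mode location, then strong monotonicity of the score plus Gronwall for the approximate flow. Your extra details (coercivity for existence of the maximizer, the variance-about-a-fixed-centre bound, and the $\sqrt{e^2+\delta^2}$ regularization at $e=0$) rigorously fill steps the paper leaves implicit.
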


\begin{proof}
For a Gaussian forward perturbation, the Hessian identity gives
\[
\nabla_y^2\log p_\tau^{\mathrm{anc}}(y)
=
-\frac{1}{\sigma_\tau^2}I_d
+
\frac{\alpha_\tau^2}{\sigma_\tau^4}
\operatorname{Cov}
\left(
X_0\mid Y_\tau=y,\Tgt,a
\right).
\]
Under Assumption~\ref{assump:anchored_concentration}, the conditional
support of \(X_0\mid Y_\tau=y\) remains inside
\(B(\mu_0^{\mathrm{anc}},r_0)\).
Therefore,
\[
\left\|
\operatorname{Cov}
\left(
X_0\mid Y_\tau=y,\Tgt,a
\right)
\right\|_{\mathrm{op}}
\le
r_0^2 .
\]
Hence the largest eigenvalue of
\(\nabla_y^2\log p_\tau^{\mathrm{anc}}(y)\) is at most
\[
-\frac{1}{\sigma_\tau^2}
+
\frac{\alpha_\tau^2 r_0^2}{\sigma_\tau^4}
=
-\frac{1-\kappa_\tau}{\sigma_\tau^2}
=
-m_\tau,
\]
so \(\log p_\tau^{\mathrm{anc}}\) is \(m_\tau\)-strongly concave.

The score identity for Gaussian perturbations is
\[
\nabla_y\log p_\tau^{\mathrm{anc}}(y)
=
-\frac{1}{\sigma_\tau^2}
\left(
y-\alpha_\tau
\mathbb{E}
[
X_0\mid Y_\tau=y,\Tgt,a
]
\right).
\]
At the mode \(y_\tau^\star\), the score vanishes, so
\[
y_\tau^\star
=
\alpha_\tau
\mathbb{E}
[
X_0\mid Y_\tau=y_\tau^\star,\Tgt,a
].
\]
The posterior mean lies in the same convex ball
\(B(\mu_0^{\mathrm{anc}},r_0)\), which gives
Eq.~\eqref{eq:mode_to_forward_mean_gap}.

Let
\[
s_\tau^{\mathrm{anc}}(y)
=
\nabla_y\log p_\tau^{\mathrm{anc}}(y),
\qquad
\widetilde{s}^{\mathrm{anc}}_\tau(y)
=
s_\tau^{\mathrm{anc}}(y)+\Delta_\tau(y),
\]
with \(\|\Delta_\tau(y)\|_2\le\xi_\tau\) on \(B\).
Since \(s_\tau^{\mathrm{anc}}(y_\tau^\star)=0\), strong concavity implies
\[
\left\langle
y_s-y_\tau^\star,
s_\tau^{\mathrm{anc}}(y_s)
-
s_\tau^{\mathrm{anc}}(y_\tau^\star)
\right\rangle
\le
-m_\tau
\|y_s-y_\tau^\star\|_2^2.
\]
Therefore,
\[
\frac{d}{ds}
\|y_s-y_\tau^\star\|_2
\le
-m_\tau\|y_s-y_\tau^\star\|_2+\xi_\tau.
\]
Gronwall's inequality gives
Eq.~\eqref{eq:approx_mode_recovery_bound}.
\end{proof}

\begin{proposition}[Accuracy of the low-noise Gaussian surrogate]
\label{prop:gaussian_surrogate_accuracy}
Define the Gaussian surrogate
\begin{equation}
\widetilde p_\tau^{\mathrm{anc}}
=
\mathcal{N}
\left(
\alpha_\tau\mu_0^{\mathrm{anc}},
\sigma_\tau^2 I_d
\right),
\label{eq:gaussian_surrogate_distribution}
\end{equation}
and the residual-dispersion ratio
\begin{equation}
\rho_\tau
=
\frac{
\alpha_\tau^2
\operatorname{tr}(\Sigma_0^{\mathrm{anc}})
}{
\sigma_\tau^2
}.
\label{eq:residual_dispersion_ratio}
\end{equation}
Then
\begin{equation}
D_{\mathrm{KL}}
\left(
p_\tau^{\mathrm{anc}}
\middle\|
\widetilde p_\tau^{\mathrm{anc}}
\right)
\le
\frac{\rho_\tau}{2},
\label{eq:gaussian_surrogate_kl}
\end{equation}
\begin{equation}
W_2^2
\left(
p_\tau^{\mathrm{anc}},
\widetilde p_\tau^{\mathrm{anc}}
\right)
\le
\alpha_\tau^2
\operatorname{tr}(\Sigma_0^{\mathrm{anc}}),
\label{eq:gaussian_surrogate_w2}
\end{equation}
and
\begin{equation}
\operatorname{TV}
\left(
p_\tau^{\mathrm{anc}},
\widetilde p_\tau^{\mathrm{anc}}
\right)
\le
\frac{1}{2}\sqrt{\rho_\tau}.
\label{eq:gaussian_surrogate_tv}
\end{equation}
Under Assumption~\ref{assump:anchored_concentration},
\begin{equation}
\rho_\tau
\le
\frac{\alpha_\tau^2 r_0^2}{\sigma_\tau^2}.
\label{eq:rho_tau_r0_bound}
\end{equation}
Thus, when \(|\alpha_\tau|r_0\ll\sigma_\tau\),
\(p_\tau^{\mathrm{anc}}\) is well approximated by the Gaussian surrogate
centered at \(\alpha_\tau\mu_0^{\mathrm{anc}}\).
\end{proposition}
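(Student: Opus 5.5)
The plan is to view \(p_\tau^{\mathrm{anc}}\) as a Gaussian location mixture and compare it componentwise with the surrogate. By Eq.~\eqref{eq:anchored_forward_process_appendix}, conditional on \(X_0=x\), the variable \(Y_\tau\) is distributed as \(\mathcal{N}(\alpha_\tau x,\sigma_\tau^2 I_d)\). Hence
\[
p_\tau^{\mathrm{anc}}=\mathbb{E}_{X_0\sim p_0^{\mathrm{anc}}}\big[\mathcal{N}(\alpha_\tau X_0,\sigma_\tau^2 I_d)\big].
\]
The surrogate \(\widetilde p_\tau^{\mathrm{anc}}\) is the single component obtained by replacing \(X_0\) with its mean \(\mu_0^{\mathrm{anc}}\). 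Each bound then follows from one standard tool applied to this mixture structure.

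\emph{KL bound.} I will use the joint convexity of KL divergence. Since the second argument is the fixed measure \(\widetilde p_\tau^{\mathrm{anc}}\), convexity in the first argument suffices (Jensen's inequality over the mixing variable \(X_0\)):
\[
D_{\mathrm{KL}}(p_\tau^{\mathrm{anc}}\|\widetilde p_\tau^{\mathrm{anc}})\le \mathbb{E}_{X_0}\,D_{\mathrm{KL}}\big(\mathcal{N}(\alpha_\tau X_0,\sigma_\tau^2 I)\,\big\|\,\mathcal{N}(\alpha_\tau\mu_0^{\mathrm{anc}},\sigma_\tau^2 I)\big).
\]
For two Gaussians with equal isotropic covariance, the KL divergence equals \(\|\alpha_\tau(X_0-\mu_0^{\mathrm{anc}})\|_2^2/(2\sigma_\tau^2)\). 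Taking the expectation gives \(\alpha_\tau^2\operatorname{tr}(\Sigma_0^{\mathrm{anc}})/(2\sigma_\tau^2)=\rho_\tau/2\), which is Eq.~\eqref{eq:gaussian_surrogate_kl}.

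\emph{\(W_2\) bound.} I will use the synchronous coupling. Draw \(X_0\sim p_0^{\mathrm{anc}}\) and an independent \(Z\sim\mathcal{N}(0,I_d)\), and set
\[
Y=\alpha_\tau X_0+\sigma_\tau Z,\qquad \widetilde Y=\alpha_\tau\mu_0^{\mathrm{anc}}+\sigma_\tau Z.
\]
Then \(Y\) has law \(p_\tau^{\mathrm{anc}}\) and \(\widetilde Y\) has law \(\widetilde p_\tau^{\mathrm{anc}}\). Because the noise terms cancel, \(\mathbb{E}\|Y-\widetilde Y\|_2^2=\alpha_\tau^2\,\mathbb{E}\|X_0-\mu_0^{\mathrm{anc}}\|_2^2=\alpha_\tau^2\operatorname{tr}(\Sigma_0^{\mathrm{anc}})\). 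Since \(W_2^2\) is the infimum over couplings, this yields Eq.~\eqref{eq:gaussian_surrogate_w2}.

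\emph{TV bound.} Pinsker's inequality gives \(\operatorname{TV}\le\sqrt{D_{\mathrm{KL}}/2}\). Combined with the KL bound, this is at most \(\sqrt{\rho_\tau/4}=\tfrac12\sqrt{\rho_\tau}\), which is Eq.~\eqref{eq:gaussian_surrogate_tv}.

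\emph{Bound on \(\rho_\tau\).} Eq.~\eqref{eq:rho_tau_r0_bound} follows immediately from the covariance bound \(\operatorname{tr}(\Sigma_0^{\mathrm{anc}})\le r_0^2\) in Eq.~\eqref{eq:anchored_covariance_from_support}, which is implied by Assumption~\ref{assump:anchored_concentration}.

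No step is a real obstacle. The only points needing care are justifying convexity of KL over a continuous mixture, for which I would cite joint convexity or the chain rule for the KL divergence of the joint law of \((X_0,Y)\), and checking that the finite second moment implied by the bounded support makes every quantity above well defined.
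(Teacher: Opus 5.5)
Your proposal is correct and follows essentially the same route as the paper's proof: convexity of KL in its first argument over the Gaussian location mixture, the synchronous coupling with shared noise \(Z\) for \(W_2\), Pinsker's inequality for TV, and the support-implied bound \(\operatorname{tr}(\Sigma_0^{\mathrm{anc}})\le r_0^2\) for \(\rho_\tau\). Your added remark that the first three bounds need only a finite second moment, not the full concentration assumption, is accurate.
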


\begin{proof}
Conditioned on \(X_0=x\), the forward distribution is
\[
Y_\tau\mid X_0=x
\sim
\mathcal{N}(\alpha_\tau x,\sigma_\tau^2I_d).
\]
By convexity of KL divergence in its first argument,
\[
\begin{aligned}
&D_{\mathrm{KL}}
\left(
p_\tau^{\mathrm{anc}}
\middle\|
\widetilde p_\tau^{\mathrm{anc}}
\right)
\\
&\le
\mathbb{E}_{X_0}
\left[
D_{\mathrm{KL}}
\left(
\mathcal{N}(\alpha_\tau X_0,\sigma_\tau^2I_d)
\middle\|
\mathcal{N}(\alpha_\tau\mu_0^{\mathrm{anc}},\sigma_\tau^2I_d)
\right)
\right]
\\
&=
\frac{\alpha_\tau^2}{2\sigma_\tau^2}
\mathbb{E}
\left[
\|X_0-\mu_0^{\mathrm{anc}}\|_2^2
\right]
=
\frac{\rho_\tau}{2}.
\end{aligned}
\]
For the Wasserstein bound, couple
\[
Y_\tau
=
\alpha_\tau X_0+\sigma_\tau Z,
\qquad
\widetilde Y_\tau
=
\alpha_\tau\mu_0^{\mathrm{anc}}+\sigma_\tau Z
\]
using the same Gaussian noise \(Z\).
Then
\[
\mathbb{E}
\left[
\|Y_\tau-\widetilde Y_\tau\|_2^2
\right]
=
\alpha_\tau^2
\operatorname{tr}(\Sigma_0^{\mathrm{anc}}),
\]
which proves Eq.~\eqref{eq:gaussian_surrogate_w2}.
The total-variation bound follows from Pinsker's inequality.
Finally, Eq.~\eqref{eq:rho_tau_r0_bound} follows from
Eq.~\eqref{eq:anchored_covariance_from_support}.
\end{proof}

\begin{proposition}[Re-noising around the recovered point]
\label{prop:renoising_recovered_mode}
Suppose Assumption~\ref{assump:anchored_concentration} holds, and suppose
the conditions of Proposition~\ref{prop:mode_recovery} hold.
Let \(y_s\) be the point obtained by the approximate anchored score flow
at flow time \(s\), and define
\[
\widehat X_\tau
=
y_s+\sigma_\tau\varepsilon,
\qquad
\varepsilon\sim\mathcal{N}(0,I_d).
\]
Equivalently,
\[
\widehat X_\tau
\sim
\mathcal{N}(y_s,\sigma_\tau^2I_d).
\]
Then
\begin{equation}
\begin{aligned}
&W_2
\left(
p_\tau^{\mathrm{anc}},
\mathcal{N}(y_s,\sigma_\tau^2I_d)
\right)
\\
&\le
2|\alpha_\tau|r_0
+
e^{-m_\tau s}\|y_0-y_\tau^\star\|_2
+
\frac{\xi_\tau}{m_\tau}
\left(
1-e^{-m_\tau s}
\right).
\end{aligned}
\label{eq:renoising_recovered_point_w2_bound}
\end{equation}
In particular, if the recovered point is the exact mode
\(y_s=y_\tau^\star\), then
\begin{equation}
W_2
\left(
p_\tau^{\mathrm{anc}},
\mathcal{N}(y_\tau^\star,\sigma_\tau^2I_d)
\right)
\le
2|\alpha_\tau|r_0 .
\label{eq:renoising_mode_w2_bound}
\end{equation}
\end{proposition}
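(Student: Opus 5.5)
We bound $W_2\bigl(p_\tau^{\mathrm{anc}},\mathcal{N}(y_s,\sigma_\tau^2I_d)\bigr)$ with the triangle inequality for $W_2$, passing through two intermediate Gaussians. These are the surrogate $\widetilde p_\tau^{\mathrm{anc}}=\mathcal{N}(\alpha_\tau\mu_0^{\mathrm{anc}},\sigma_\tau^2I_d)$ from Proposition~\ref{prop:gaussian_surrogate_accuracy} and the mode-centered Gaussian $\mathcal{N}(y_\tau^\star,\sigma_\tau^2I_d)$. This gives
\[
W_2\bigl(p_\tau^{\mathrm{anc}},\mathcal{N}(y_s,\sigma_\tau^2I_d)\bigr)
\le
W_2\bigl(p_\tau^{\mathrm{anc}},\widetilde p_\tau^{\mathrm{anc}}\bigr)
+W_2\bigl(\widetilde p_\tau^{\mathrm{anc}},\mathcal{N}(y_\tau^\star,\sigma_\tau^2I_d)\bigr)
+W_2\bigl(\mathcal{N}(y_\tau^\star,\sigma_\tau^2I_d),\mathcal{N}(y_s,\sigma_\tau^2I_d)\bigr).
\]
We then bound the three terms separately.

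For the first term, we invoke Eq.~\eqref{eq:gaussian_surrogate_w2} together with the covariance bound Eq.~\eqref{eq:anchored_covariance_from_support}. This gives $W_2^2\le\alpha_\tau^2\operatorname{tr}(\Sigma_0^{\mathrm{anc}})\le\alpha_\tau^2r_0^2$, so the first term is at most $|\alpha_\tau|r_0$.

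The second and third terms compare Gaussians with the same covariance $\sigma_\tau^2I_d$. For such Gaussians, the synchronous coupling (shared noise $\varepsilon$) shows that $W_2$ is at most the distance between the means; in fact it equals it. The second term is therefore at most $\|y_\tau^\star-\alpha_\tau\mu_0^{\mathrm{anc}}\|_2$, which is at most $|\alpha_\tau|r_0$ by Eq.~\eqref{eq:mode_to_forward_mean_gap} in Proposition~\ref{prop:mode_recovery}. The third term is at most $\|y_s-y_\tau^\star\|_2$. Under the stated hypotheses (flow stays in $B$, score error at most $\xi_\tau$, $\kappa_\tau<1$), this is bounded by the Gronwall estimate Eq.~\eqref{eq:approx_mode_recovery_bound}.

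Summing the three bounds gives Eq.~\eqref{eq:renoising_recovered_point_w2_bound}. The special case Eq.~\eqref{eq:renoising_mode_w2_bound} follows by setting $y_s=y_\tau^\star$, so that the third term vanishes. Equivalently, it is the $s\to\infty$ limit with $\xi_\tau=0$.

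The main point needing care is the existence and uniqueness of $y_\tau^\star$, and making sure the mode bound is applied to the correct object. For this we rely on the strong-concavity part of Proposition~\ref{prop:mode_recovery}, which is assumed via ``the conditions of Proposition~\ref{prop:mode_recovery} hold.'' Everything else is a direct chaining of previously established bounds.
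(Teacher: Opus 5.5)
Your proof is correct and is essentially the paper's argument. The paper applies the $W_2$ triangle inequality once through $\widetilde p_\tau^{\mathrm{anc}}$, then splits $\|y_s-\alpha_\tau\mu_0^{\mathrm{anc}}\|_2$ through $y_\tau^\star$ by the Euclidean triangle inequality; this matches your extra intermediate Gaussian $\mathcal{N}(y_\tau^\star,\sigma_\tau^2I_d)$, since $W_2$ between equal-covariance Gaussians is the distance between their means, and your three component bounds (synchronous coupling with $\operatorname{tr}(\Sigma_0^{\mathrm{anc}})\le r_0^2$, the mode-to-forward-mean bound, and the Gronwall estimate) are exactly the paper's.
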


\begin{proof}
By the triangle inequality,
\[
\begin{aligned}
&W_2
\left(
p_\tau^{\mathrm{anc}},
\mathcal{N}(y_s,\sigma_\tau^2I_d)
\right)
\\
&\le
W_2
\left(
p_\tau^{\mathrm{anc}},
\widetilde p_\tau^{\mathrm{anc}}
\right)
+
W_2
\left(
\widetilde p_\tau^{\mathrm{anc}},
\mathcal{N}(y_s,\sigma_\tau^2I_d)
\right).
\end{aligned}
\]
The first term is bounded by
\[
W_2
\left(
p_\tau^{\mathrm{anc}},
\widetilde p_\tau^{\mathrm{anc}}
\right)
\le
|\alpha_\tau|
\sqrt{\operatorname{tr}(\Sigma_0^{\mathrm{anc}})}
\le
|\alpha_\tau|r_0,
\]
where the first inequality follows from
Proposition~\ref{prop:gaussian_surrogate_accuracy}, and the second
follows from Assumption~\ref{assump:anchored_concentration}.

For the second term, recall that
\[
\widetilde p_\tau^{\mathrm{anc}}
=
\mathcal{N}
\left(
\alpha_\tau\mu_0^{\mathrm{anc}},
\sigma_\tau^2I_d
\right).
\]
Since the two Gaussian distributions have the same covariance,
\[
W_2
\left(
\widetilde p_\tau^{\mathrm{anc}},
\mathcal{N}(y_s,\sigma_\tau^2I_d)
\right)
=
\|y_s-\alpha_\tau\mu_0^{\mathrm{anc}}\|_2.
\]
By the triangle inequality,
\[
\|y_s-\alpha_\tau\mu_0^{\mathrm{anc}}\|_2
\le
\|y_s-y_\tau^\star\|_2
+
\|y_\tau^\star-\alpha_\tau\mu_0^{\mathrm{anc}}\|_2.
\]
Proposition~\ref{prop:mode_recovery} gives
\[
\|y_s-y_\tau^\star\|_2
\le
e^{-m_\tau s}\|y_0-y_\tau^\star\|_2
+
\frac{\xi_\tau}{m_\tau}
\left(
1-e^{-m_\tau s}
\right),
\]
and also
\[
\|y_\tau^\star-\alpha_\tau\mu_0^{\mathrm{anc}}\|_2
\le
|\alpha_\tau|r_0.
\]
Combining these bounds proves
Eq.~\eqref{eq:renoising_recovered_point_w2_bound}.
If \(y_s=y_\tau^\star\), the flow-error term vanishes, yielding
Eq.~\eqref{eq:renoising_mode_w2_bound}.
\end{proof}

\section{Experimental Details}
\label{app:exp_details}
\subsection{Implementation Details for Ligand Growth and Refinement}
\label{app:refinement_details}

\paragraph{Benchmark protocol.}
All ligand-growth experiments are conducted on the 100 test complexes
from CrossDocked2020~\cite{crossdock2020}.
For each complex, the reference ligand paired with the target pocket is
used as the initial lead, and each method generates one optimized
molecule.
For each target pocket, we construct the evaluated off-target set
\(\Off_{\mathrm{eval}}\) by randomly sampling 10 pockets from the
remaining 99 CrossDocked2020 test pockets.
For off-target-agnostic methods, these off-target pockets are never
accessed during optimization and are used only for evaluation.
For ActivityDiff, which requires negative targets, we randomly assign 10
pockets as surrogate off-targets, reflecting the assumed inaccessibility
of the true off-target set.

\paragraph{Docking timeout.} Each AutoDock Vina docking run is limited to 300 seconds.
Docking attempts that exceed this budget are treated as docking failures
and excluded from the corresponding Vina-score aggregation.
For empirical specificity metrics, a valid target-pocket docking score is
required to compute the target-over-off-target gap; therefore, samples
whose target-pocket docking fails are excluded from the specificity
aggregation.

\paragraph{Shared outer ligand-growth loop.}
Unless otherwise stated, all methods use at most \(K=3\) outer ligand-growth iterations.
At each iteration, we select the closest feasible unoccupied surface patch within \(4\text{--}10\,\text{\AA}\), generate a linker of size 10 using DiffLinker, and refine the resulting pseudo-label using the specified refinement method.
The procedure terminates early if no feasible surface patch exists or if the target-pocket Vina score after local optimization worsens relative to the previous iteration.

\paragraph{Baseline variants.}
We evaluate PMDM and DecompOpt both in their standard forms and in augmented variants.
\(\mathrm{MS2}\) and \(\mathrm{MS3}\) denote two-step and three-step applications of the corresponding baseline.
\(\mathrm{MS+ES}\) denotes the multi-step variant with early stopping, where the stopping rule uses the same target-pocket Vina-score criterion as SurfSpec.
These variants test whether SurfSpec's performance can be explained by repeated application of existing lead-optimization baselines or by early stopping alone.

\paragraph{Naive size extension.}
We include \(\mathrm{DiffSBDD\text{-}SizeExt}\) as a naive size-extension baseline because DiffSBDD is the pocket-conditioned diffusion prior used inside SurfSpec.
This baseline increases the generated molecule size using DiffSBDD without selecting under-occupied target-surface regions.
It therefore tests whether improved geometric mismatch can be obtained simply by adding more atoms, rather than through controlled surface-directed growth.

\paragraph{Surrogate off-target-aware baseline.}
ActivityDiff is evaluated as a surrogate off-target-aware specificity baseline.
Because the true off-target set is unavailable during optimization, we construct its negative guidance using randomly selected negative targets.
This setting evaluates whether approximate off-target information provides reliable specificity guidance.

\paragraph{Ours.}
SurfSpec uses the same DiffLinker proposal mechanism as the refinement baselines, but applies a low-noise anchored recovery procedure.
After DiffLinker proposes a candidate pseudo-label, we first remove atoms whose distance to any pocket atom is less than \(2\,\text{\AA}\).
This clash-truncation step produces the anchor used for refinement.
We then run the low-noise mode-recovery update at \(\tau=0.15\) for 100 steps with step size \(\gamma_{\mathrm{proj}}=0.3\).
The anchor-guidance coefficient is set to \(0.001\).
The anchor target is the truncated pseudo-label, and the guidance score is evaluated at the current denoising state.
After mode recovery, we perform 50 reverse-SDE steps from \(t=\tau\) to
\(t=0\) to obtain the refined ligand. In our experiments, this simpler SDEdit-style reverse process was sufficient for refinement, so we omit additional guidance along the reverse trajectory.
Finally, the ligand is locally optimized using AutoDock Vina before proceeding to the next outer iteration. The complete procedure is described in Algorithm~\ref{alg:full_ligand_growth}, and Figure~\ref{fig:ligand_growth.pdf} illustrates the ligand-growth steps.


\begin{algorithm}[t]
\caption{Full Ligand Growth Pipeline with Geometric Pseudo-Label Generation and Low-Noise Refinement}
\label{alg:full_ligand_growth}
\begin{algorithmic}[1]
\Require native target pocket \(\Tgt\), initial ligand \(x^{(0)}\), pretrained linker generation model \(p_{\mathrm{linker}}\), pretrained pocket-conditioned ligand prior \(p(x\mid\Tgt)\), maximum number of growth iterations \(K\), low noise level \(\tau\), anchor strength \(\lambda\)
\Ensure optimized ligand \(x^{\mathrm{final}}\)

\For{\(k=0,\ldots,K-1\)}

    \State Construct the current ligand-oriented pocket surface for \(x^{(k)}\)
    \State Identify unoccupied pocket surface surface patches around the current ligand
    \State Select the closest unoccupied surface patch \(r^{(k)}\) whose distance from \(x^{(k)}\) lies within \(4\)--\(10\) \AA

    \If{no feasible surface patch \(r^{(k)}\) exists}
        \State \textbf{break}
    \EndIf

    \State Sample a linker from the pretrained linker generation model:
    \[
    z^{(k)} \sim p_{\mathrm{linker}}\!\left(z\mid x^{(k)}, r^{(k)}, \Tgt\right)
    \]

    \State Attach the sampled linker to the current ligand
    \State Remove linker atoms that clash with the pocket under the \(2\) \AA{} clash threshold
    \State Obtain the geometric pseudo-label \(\xanc^{(k)}\)

    \State Define the anchored density around the geometric pseudo-label:
    \[
    p^{\mathrm{anc}}(x\mid \Tgt,\xanc^{(k)})
    \propto
    p(x\mid\Tgt)
    \exp\!\left(
    -\frac{\lambda}{2}
    \|x-\xanc^{(k)}\|^2
    \right)
    \]

    \State Recover the low-noise mode \(y^{\star(k)}\) of \(p_\tau^{\mathrm{anc}}\) by solving the gradient flow:
    \begin{align*}
    \frac{d y_s}{ds}
    =
    &\nabla_{y_s}\log p_\tau(y_s\mid\Tgt)\nonumber\\
    &-
    \lambda
    J_{\hat{x}_{0|\tau}}(y_s)^\top
    \left(
    \hat{x}_{0|\tau}(y_s)-\xanc^{(k)}
    \right)
    \end{align*}

    \State Re-noise the recovered low-noise mode:
    \[
    x_\tau^{\mathrm{anc},(k)}
    =
    y^{\star(k)}
    +
    \sigma_\tau \varepsilon,
    \qquad
    \varepsilon\sim\mathcal{N}(0,I)
    \]

    \State Starting from \(x_\tau^{\mathrm{anc},(k)}\), run the reverse stochastic differential equation from $t=\tau$ to $t=0$:
    \[
    d x_t
    =
    \left[
    f_t(x_t)
    -
    g_t^2\nabla_{x_t}\log p_t(x_t\mid\Tgt)
    \right]dt
    +
    g_t\,d\bar{W}_t
    \]

    \State Set the refined ligand as the next ligand \(x^{(k+1)}\)
    \State Locally optimize the refined ligand \(x^{(k+1)}\) with AutoDock Vina (Vina Min)
    \If{\(k > 0\) and affinity drop is observed}
        \State \textbf{break}
    \EndIf
    \State \(x^\mathrm{final} = x^{(k + 1)}\)

\EndFor

\State \Return the last refined ligand as \(x^{\mathrm{final}}\)

\end{algorithmic}
\end{algorithm}

\paragraph{Refinement baselines.}
For Table~\ref{tab:refinement}, all refinement baselines are evaluated
under a shared one-step pseudo-label recovery setting.
They consume the same DiffLinker pseudo-labels and use the same DiffSBDD
pocket-conditioned ligand prior, and each method generates one refined
sample for each CrossDocked2020 test complex.
Thus, the comparison isolates the effect of the recovery procedure rather
than the pseudo-label generator, diffusion prior, or outer growth policy.
All refinement baselines use the same linker size, closest-patch selection rule, and Vina-based early stopping criterion described above.
The baseline-specific refinement hyperparameters are summarized in Table~\ref{tab:refinement_hyperparams}.

\begin{table*}[t]
\centering
\small
\setlength{\tabcolsep}{5pt}
\renewcommand{\arraystretch}{1.15}
\begin{tabularx}{\textwidth}{
    l
    >{\raggedright\arraybackslash}p{0.30\textwidth}
    >{\raggedright\arraybackslash}X
}
\toprule
Method & Key steps & Other hyperparameters \\
\midrule
SDEdit
& 50 reverse steps
& \(\tau=0.15\) \\

DPS
& 1000 reverse steps from \(t=1\)
& Anchor weight \(=0.001\) \\
DCDP
& 10 purification iterations; 100 inner consistency steps; 50 reverse steps per purification
& Consistency LR \(=0.01\); consistency weight \(=1.0\); purification \(\tau:0.15\rightarrow0.05\); raw anchor gradients \\

RED-Diff
& 50 optimization steps
& LR \(=0.001\); anchor weight \(=0.001\); prior weight \(=0.5\); raw anchor gradients \\

RSD
& 50 optimization steps; 4 particles
& LR \(=0.001\); anchor weight \(=0.001\); prior weight \(=0.5\); repulsion weight \(=0.001\); kernel bandwidth \(=1.0\); raw anchor gradients \\
\bottomrule
\end{tabularx}
\caption{
Hyperparameters for refinement baselines.
All methods share the same outer ligand-growth loop, DiffLinker proposal, and Vina-based early stopping rule.
}
\label{tab:refinement_hyperparams}
\end{table*}

\subsection{Surface Patch Selection Policy}
\label{app:surface_patch_selection}

At each ligand-growth iteration, we select a protein surface patch as the target pseudo-fragment for linker generation. As the selection policy, we first construct visibility-filtered surface candidates and then selects the candidate closest to the current ligand.
The selected candidate is used as the surface-side fragment for DiffLinker with fixed linker size 10.

\paragraph{Candidate construction.}
Let \(\mathcal L\) be the set of ligand heavy-atom coordinates and let \(\mathcal R\) be the receptor heavy atoms.
We first define the local pocket atom set
\[
\mathcal P
=
\left\{
x\in\mathcal R:
\min_{\ell\in\mathcal L}
\|x-\ell\|_2
\le
10\,\text{\AA}
\right\}.
\]
From \(\mathcal P\), we extract an outer surface shell \(\mathcal S\) by retaining the outer \(20\%\) of atoms with respect to distance from the pocket centroid.
Specifically, for \(c_{\mathcal P}=|\mathcal P|^{-1}\sum_{x\in\mathcal P}x\) and \(\rho(x)=\|x-c_{\mathcal P}\|_2\), we set
\[
\mathcal S
=
\left\{
x\in\mathcal P:
\rho(x)
\ge
Q_{0.80}
\left(
\{\rho(y):y\in\mathcal P\}
\right)
\right\}.
\]

We first construct a residue pool by selecting protein residues whose closest atom lies 4–10 Å from the current ligand. Each candidate fragment \(r\) is represented by the connected carbon atoms forming the backbone of the corresponding residue.

We compute its distance to the extracted surface shell as
\[
d_{\mathrm{surf}}(r)
=
\min_{x\in r,\ s\in\mathcal S}
\|x-s\|_2 .
\]
If at least one candidate satisfies \(d_{\mathrm{surf}}(r)\le 2.5\,\text{\AA}\), we keep only those surface-proximal candidates.

\paragraph{Visibility filtering.}
We further require each candidate to have line-of-sight to at least one ligand atom.
For a candidate fragment \(r\), a ligand atom \(\ell_i\), and a representative candidate atom \(x\in r\), we sample the segment
\[
p(t)=(1-t)\ell_i+tx,
\qquad
t\in\{0.1,0.2,\ldots,0.9\}.
\]
Blockers are receptor heavy atoms excluding atoms from the same  as \(r\).
With clash cutoff \(\delta=2.5\,\text{\AA}\), we define
\[
\begin{aligned}
f_{\mathrm{block}}(r,\ell_i)
&=
\frac{1}{|\mathcal T|}
\sum_{t\in\mathcal T}
\mathbf 1
\left[
\min_b
\|p(t)-b\|_2
<
\delta
\right], \\
\mathcal T
&=
\{0.1,0.2,\ldots,0.9\}.
\end{aligned}
\]
A candidate is retained if there exists a ligand atom \(\ell_i\) such that
\[
f_{\mathrm{block}}(r,\ell_i)\le 0.05.
\]
Let \(\mathcal C\) denote the resulting visibility-filtered candidate set.

\paragraph{Closest surface policy.}
Among the final candidates, we select the fragment with minimum ligand--fragment distance:
\[
r^\star
=
\arg\min_{r\in\mathcal C}
d_{\mathrm{lig}}(r),
\qquad
d_{\mathrm{lig}}(r)
=
\min_{\ell\in\mathcal L,\ x\in r}
\|x-\ell\|_2 .
\]
If \(\mathcal C\) is empty, the growth iteration stops.
Otherwise, \(r^\star\) is used as the surface patch, and DiffLinker grows a linker of size 10 toward \(r^\star\).

\subsection{Evaluation Metrics}
\label{app:eval_metrics}

\paragraph{Off-target agnostic flag.}
For ligand-growth experiments, we indicate whether each method is \emph{off-target agnostic}.
A method is marked as off-target agnostic if it does not use off-target structures, scores, or activity labels during generation.
This is a method-level property rather than a per-sample metric.
Off-target pockets are used only for evaluation unless explicitly stated otherwise.

\paragraph{Clash rate.}
We measure protein--ligand clashes using a distance-based criterion.
A generated ligand is considered clashing if any ligand atom is within \(2\,\text{\AA}\) of any protein atom.
The reported clash rate is the fraction of generated ligands that satisfy this condition:
\[
\mathrm{ClashRate}
=
\frac{1}{N}
\sum_{i=1}^{N}
\mathbf{1}
\left[
\min_{a\in L_i,\; b\in P_i}
\|a-b\|_2
<
2\,\text{\AA}
\right].
\]
Lower clash rate indicates fewer physically implausible protein--ligand overlaps.

\paragraph{Ligand size.}
We report the mean number of heavy atoms in the generated ligand.
Atom counts are computed using RDKit from the generated SDF files, with hydrogens excluded.

\paragraph{Empirical specificity.}
We evaluate cross-pocket specificity using AutoDock Vina docking scores.
For each generated ligand, we dock it to the target pocket and to a finite evaluated off-target set \(\Off_{\mathrm{eval}}\).
Let \(S_{\mathrm{tgt}}(L)\) denote the Vina docking score on the target pocket, and let \(S_O(L)\) denote the score on off-target pocket \(O\).
Because lower Vina scores indicate stronger predicted binding, we define the empirical specificity estimate as
\[
\widehat{\Spec}(L)
=
-\left(
S_{\mathrm{tgt}}(L)
-
\min_{O\in\Off_{\mathrm{eval}}} S_O(L)
\right).
\]
This metric compares the target-pocket score against the strongest predicted off-target binder among the evaluated off-targets. In addition to the sample average, we report thresholded success rates for
\(\widehat{\Spec}>0.2\), \(\widehat{\Spec}>0.4\), and
\(\widehat{\Spec}_{\mathrm{avg}}>0.6\).

\paragraph{Geometric mismatch.}
Geometric mismatch measures ligand--pocket surface complementarity using
the Jensen--Shannon distance defined in
Eq.~\eqref{eq:js_geometric_mismatch}.
For each ligand--pocket pair, we construct the ligand surface and the
ligand-oriented target-pocket surface restricted to points within
\(8\,\text{\AA}\) of any ligand heavy atom.
These surfaces induce probability measures on the common ligand-local
domain through the signed-distance-based Boltzmann construction in
Appendix~\ref{app:surface_probability_field}.
Lower geometric mismatch indicates better agreement between the ligand
surface-induced measure and the ligand-oriented target-pocket measure.

\paragraph{Pocket occupancy.}
Pocket occupancy measures how much of the target pocket is covered by the generated ligand.
We define occupancy as the fraction of target-pocket grid points that are occupied by the ligand, where a grid point is counted as occupied if it lies within the predefined ligand fill radius of any ligand atom.
We report the average, first quartile, and third quartile across samples.
Higher occupancy indicates broader coverage of the target-pocket region.

\paragraph{Target-pocket Vina docking.}
We report the AutoDock Vina docking score of each generated ligand on
the target pocket.
This metric is distinguished from the Vina-derived score used during
optimization, as it is computed by re-docking the final generated ligand
with AutoDock Vina under the evaluation protocol.
Because lower Vina scores indicate stronger predicted binding, lower
values correspond to better target-pocket docking performance.
We report the average, first quartile, and third quartile across
generated samples.

\paragraph{Refinement prior consistency.}
For refinement experiments, we evaluate whether a refined ligand remains consistent with the pocket-conditioned ligand prior.
We report protein clash rate using the same \(2\,\text{\AA}\) criterion as above, RDKit valence validity, Vina score-only repulsion diagnostics, and bond-length MMD.
The repulsion diagnostics use the unweighted repulsion term from Vina \texttt{--score\_only}.
We compute this diagnostic globally on the full receptor and locally on receptor atoms within \(4\,\text{\AA}\) of the fragment neighborhood.
Larger repulsion indicates stronger steric conflict and is not interpreted as docking affinity.

\paragraph{Valence validity.}
Valence validity is the fraction of expected samples that pass RDKit valence and sanitization checks.
Missing ligands and invalid molecules are counted as failures.
Higher validity indicates better chemical consistency.

\paragraph{Bond-length MMD.}
Bond-length MMD measures the discrepancy between the bond-length distribution of generated ligands and the reference distribution from the dataset.
We compute maximum mean discrepancy separately for C--C, C--N, and C--O bonds.
Lower values indicate that the generated bond geometry is closer to the reference molecular geometry.

\paragraph{Pseudo-label faithfulness.}
We evaluate whether refinement preserves the intended pseudo-label using RMSD and topological similarity.
RMSD is computed against the pseudo-label SDF using the same atom indexing, without MCS remapping:
\[
\mathrm{RMSD}
=
\sqrt{
\frac{1}{n}
\sum_{j=1}^{n}
\|x_j-\tilde{x}_j\|_2^2
},
\]
where \(x_j\) and \(\tilde{x}_j\) denote corresponding atoms in the refined ligand and pseudo-label, respectively.
Lower RMSD indicates better preservation of pseudo-label geometry.
Topological similarity is computed as the Tanimoto similarity between RDKit path fingerprints of the refined ligand and pseudo-label using \texttt{RDKFingerprint} with \texttt{maxPath=7}.
Higher topological similarity indicates better preservation of pseudo-label topology.

\section{Additional Results}
\label{app:additional_results}

\subsection{Empirical Activation of the Geometric Specificity Certificate}
\label{app:certificate_nonvacuity}

The specificity certificate in Theorem~\ref{thm:spec_bound} is based on a
geometric condition: the target--off-target pocket separation should
exceed the target--ligand mismatch.
We empirically evaluate how often this condition is activated on randomly
sampled target--off-target pocket pairs from the CrossDocked2020 test
complexes.
For each pair, we use the reference ligand \(L\) associated with the
target pocket to construct the ligand-oriented pocket measures
\(\nu_{\Tgt}^{L}\) and \(\nu_O^{L}\), and compute the target--ligand
mismatch \(\dgm(L,\Tgt)\) and the ligand-oriented target--off-target
separation
\[
m_L(O)
=
d_{\mathrm{JS}}(\nu_{\Tgt}^{L},\nu_O^{L}).
\]
We define the geometric certificate margin as
\begin{align}
M_L(O)
&=
m_L(O)-\dgm(L,\Tgt)
\\
&=
d_{\mathrm{JS}}(\nu_{\Tgt}^{L},\nu_O^{L})
-
\dgm(L,\Tgt).
\end{align}
When \(M_L(O)>0\), the triangle-inequality lower bound
\[
\dgm(L,O)
\ge
[M_L(O)]_+
\]
is nonzero.
Thus, \(M_L(O)>0\) measures activation of the geometric lower-bound term.

Figure~\ref{fig:certificate_margin_ecdf} reports the empirical CDF of
\(M_L(O)\) over the filtered usable sampled pairs.
We find that \(65.96\%\) of pairs have positive margin, showing that the
geometric certificate is activated for a substantial fraction of evaluated
target--off-target pairs.
This indicates that the triangle-inequality certificate yields a nonzero
ligand--off-target mismatch lower bound in many evaluated cases.

Table~\ref{tab:certificate_nonvacuity_stats} summarizes the resulting
geometric lower-bound statistics.
The positive activation rate shows that the conservative geometric
lower-bound term is not purely vacuous on this benchmark.

\begin{figure}[t]
\centering
\includegraphics[
    width=0.7\columnwidth,
    clip,
    keepaspectratio
]{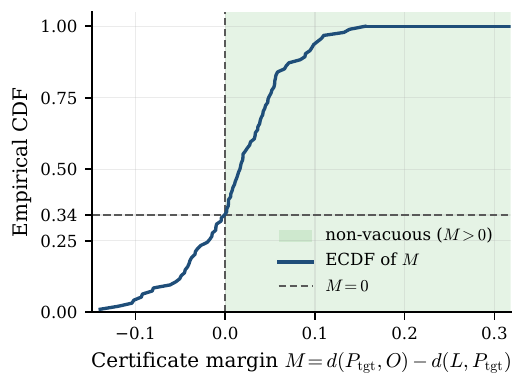}
\caption{
Empirical activation of the geometric specificity certificate.
The plot shows the empirical CDF of the certificate margin
\(M_L(O)=m_L(O)-\dgm(L,\Tgt)\), where
\(m_L(O)=d_{\mathrm{JS}}(\nu_{\Tgt}^{L},\nu_O^{L})\) is the
ligand-oriented target--off-target pocket separation.
The shaded region indicates \(M_L(O)>0\), where the triangle-inequality
lower bound on ligand--off-target mismatch is nonzero.
Overall, \(65.96\%\) of filtered usable pairs fall in this activated
region.
}
\label{fig:certificate_margin_ecdf}
\end{figure}

\begin{table}[t]
\centering
\caption{
Geometric activation statistics of the specificity certificate on the
filtered usable target--off-target pairs.
Here \(M=M_L(O)\) and \(\mathrm{LB}=[M]_+\) denotes the
triangle-inequality lower bound on ligand--off-target geometric mismatch.
}
\label{tab:certificate_nonvacuity_stats}
\scriptsize
\setlength{\tabcolsep}{4.0pt}
\renewcommand{\arraystretch}{1.12}
\begin{tabular}{lccccc}
\toprule
\textbf{Metric} & \textbf{Avg.} & \textbf{Std.} & \textbf{Q1} & \textbf{Med.} & \textbf{Q3} \\
\midrule
\(\Pr[M>0]\) & \(65.96\%\) & -- & -- & -- & -- \\
Margin \(M\) & 0.014 & 0.057 & -0.017 & 0.017 & 0.049 \\
Lower bound \(\mathrm{LB}\) & 0.031 & 0.037 & 0.000 & 0.017 & 0.049 \\
\bottomrule
\end{tabular}
\end{table}

\subsection{Additional Validation with Resampled Off-Targets}
\label{app:additional_offtarget_results}

To further validate the robustness of the main lead-optimization results,
we repeat the off-target evaluation under a different evaluation setting.
Specifically, for each target pocket, we construct a new
\(\Off_{\mathrm{eval}}\) by randomly sampling a different set of 10
off-target pockets from the remaining CrossDocked2020 test pockets,
distinct from those used in Table~\ref{tab:main}.
We also relax the AutoDock Vina timeout from 300 seconds to 1800 seconds
per docking run.
This setting is conservative in the sense that it reduces the influence
of docking timeouts and provides more complete target/off-target docking
evaluations.
We additionally report the docking success rate, defined as the fraction
of attempted Vina dockings that return a finite score under the timeout
policy.

Table~\ref{tab:supple_main} reports the full evaluation under this
resampled off-target set and relaxed docking budget.
The \(\widehat{\mathrm{Spec}}\) block is the primary metric in this
analysis, and its second line reports the relative improvement over the
initial lead.
SurfSpec achieves the largest empirical specificity improvement, including
the best average score and the best specificity success rates at all
evaluated thresholds.
It also obtains the best average pocket occupancy, zero clash rate, and
full docking success, while maintaining competitive target-pocket docking
performance.
Although the DiffSBDD size-extension baseline achieves the lowest
geometric mismatch, it produces substantially higher clash rates, lower
docking success, and weaker empirical specificity.
These results support that SurfSpec improves target-over-off-target
preference through controlled surface-directed growth rather than through
naive ligand expansion or docking-failure artifacts.

Table~\ref{tab:spec_seed42} further evaluates specificity under a new
random seed for ligand generation and off-target sampling.
Even under this independent seed, SurfSpec achieves the best empirical
specificity across the average score and all specificity success rates.
This additional specificity-focused comparison supports that the observed
improvement is robust to the sampled ligands and off-target set, rather
than being driven by a particular random seed.


\begin{table*}[t]
    \centering
    \small
    \setlength{\tabcolsep}{2.0pt}
    \renewcommand{\arraystretch}{1.08}

    \definecolor{impblue}{RGB}{0,82,155}
    \definecolor{impred}{RGB}{180,35,35}
    \definecolor{impgray}{RGB}{100,100,100}

    \newcommand{\posimp}[1]{{\scriptsize\textcolor{impblue}{#1}}}
    \newcommand{\negimp}[1]{{\scriptsize\textcolor{impred}{#1}}}
    \newcommand{\zeroimp}[1]{{\scriptsize\textcolor{impgray}{#1}}}
    \newcommand{\bestimp}[1]{{\scriptsize\bfseries\textcolor{impblue}{#1}}}

    \newcommand{\specval}[2]{%
    \begin{tabular}[c]{@{}c@{}}
    #1\\[-1.5pt]
    #2
    \end{tabular}%
    }

    \resizebox{\textwidth}{!}{%
    \begin{tabular}{
        l
        c
        @{\hspace{3pt}}
        >{\columncolor{blue!4}}c
        >{\columncolor{blue!4}}c
        >{\columncolor{blue!4}}c
        >{\columncolor{blue!4}}c
        >{\columncolor{blue!4}}c
        @{\hspace{5pt}}
        ccc
        @{\hspace{5pt}}
        ccc
        @{\hspace{5pt}}
        ccc
        @{\hspace{3pt}}
        c
        c
        c
    }
    \toprule
    \multirow{2}{*}{Method}
    & \multicolumn{1}{c}{Off-Target}
    & \multicolumn{5}{c}{\textbf{Primary Metric: }$\widehat{\mathrm{Spec}}$ $\uparrow$}
    & \multicolumn{3}{c}{Geometric Mismatch $\downarrow$}
    & \multicolumn{3}{c}{Occupancy $\uparrow$}
    & \multicolumn{3}{c}{Vina Dock $\downarrow$}
    & \multirow{2}{*}{\shortstack{Clash\\Rate $\downarrow$}}
    & \multirow{2}{*}{\#\,Atoms}
    & \multirow{2}{*}{\shortstack{Dock\\Success $\uparrow$}} \\
    \cmidrule(lr){3-7}
    \cmidrule(lr){8-10}
    \cmidrule(lr){11-13}
    \cmidrule(lr){14-16}
    & Agnostic
    & Avg. & Std. & $>{0.2}$ & $>{0.4}$ & $>{0.6}$
    & Avg. & Std. & Med.
    & Avg. & Q1 & Q3
    & Avg. & Q1 & Q3
    &  &  &  \\
    \midrule

    Initial Lead
    & O
    & -0.81 & 1.43 & 0.15 & 0.14 & 0.11
    & 0.540 & 0.051 & 0.536
    & 0.26 & 0.22 & 0.30
    & -7.17 & -8.67 & -5.73
    & 0.00 & 22.75 & 1.00 \\
    \hline

    Delete
    & O
    & \specval{\underline{-0.81}}{\zeroimp{\(+0\%\)}} & 1.54
    & \specval{\underline{0.19}}{\posimp{\(+27\%\)}}
    & \specval{0.14}{\zeroimp{\(+0\%\)}}
    & \specval{0.13}{\posimp{\(+18\%\)}}
    & 0.539 & 0.052 & 0.535
    & 0.26 & 0.22 & 0.30
    & -7.23 & -8.65 & -5.73
    & \textbf{0.00} & 22.85 & 1.00 \\

    PMDM
    & O
    & \specval{-1.10}{\negimp{\(-36\%\)}} & 2.03
    & \specval{0.15}{\zeroimp{\(+0\%\)}}
    & \specval{0.12}{\negimp{\(-14\%\)}}
    & \specval{0.12}{\posimp{\(+9\%\)}}
    & 0.501 & 0.056 & 0.489
    & \underline{0.33} & 0.27 & 0.38
    & -6.47 & -8.87 & -5.10
    & 0.16 & 31.00 & 0.92 \\

    DecompOpt
    & O
    & \specval{-0.93}{\negimp{\(-14\%\)}} & 1.43
    & \specval{0.13}{\negimp{\(-13\%\)}}
    & \specval{0.12}{\negimp{\(-14\%\)}}
    & \specval{0.11}{\zeroimp{\(+0\%\)}}
    & 0.542 & 0.053 & 0.536
    & 0.25 & 0.22 & 0.29
    & -6.08 & -8.26 & -4.49
    & 0.02 & \underline{22.75} & 0.91 \\

    DiffShape
    & O
    & \specval{-0.88}{\negimp{\(-9\%\)}} & 1.58
    & \specval{0.13}{\negimp{\(-13\%\)}}
    & \specval{0.13}{\negimp{\(-7\%\)}}
    & \specval{0.12}{\posimp{\(+9\%\)}}
    & 0.585 & 0.115 & 0.555
    & 0.14 & 0.00 & 0.25
    & -6.90 & -8.33 & -5.20
    & 0.11 & \textbf{21.91} & 0.99 \\

    ActivityDiff
    & X
    & \specval{-0.96}{\negimp{\(-18\%\)}} & 2.15
    & \specval{0.13}{\negimp{\(-13\%\)}}
    & \specval{0.11}{\negimp{\(-21\%\)}}
    & \specval{0.10}{\negimp{\(-9\%\)}}
    & 0.538 & 0.054 & 0.529
    & 0.27 & 0.23 & 0.31
    & -7.07 & -8.35 & -5.39
    & 0.11 & 22.89 & 1.00 \\
    \hline

    PMDM--MS2
    & O
    & \specval{-1.38}{\negimp{\(-70\%\)}} & 2.19
    & \specval{0.13}{\negimp{\(-13\%\)}}
    & \specval{0.13}{\negimp{\(-7\%\)}}
    & \specval{0.12}{\posimp{\(+9\%\)}}
    & 0.499 & 0.063 & 0.482
    & 0.31 & 0.26 & 0.37
    & \textbf{-8.95} & -10.04 & \textbf{-7.01}
    & 0.25 & 33.42 & 0.91 \\

    PMDM--MS3
    & O
    & \specval{-2.05}{\negimp{\(-152\%\)}} & 2.93
    & \specval{0.07}{\negimp{\(-53\%\)}}
    & \specval{0.07}{\negimp{\(-50\%\)}}
    & \specval{0.06}{\negimp{\(-45\%\)}}
    & 0.496 & 0.069 & 0.480
    & 0.31 & 0.26 & 0.37
    & -8.38 & \textbf{-10.51} & -6.43
    & 0.31 & 35.83 & 0.82 \\

    PMDM--MS+ES
    & O
    & \specval{-1.02}{\negimp{\(-26\%\)}} & 2.08
    & \specval{\underline{0.19}}{\posimp{\(+27\%\)}}
    & \specval{\underline{0.17}}{\posimp{\(+21\%\)}}
    & \specval{\underline{0.14}}{\posimp{\(+27\%\)}}
    & 0.500 & 0.057 & 0.489
    & \underline{0.33} & \underline{0.28} & 0.38
    & -8.61 & -10.14 & \underline{-6.56}
    & 0.16 & 31.37 & 0.87 \\

    DecompOpt--MS2
    & O
    & \specval{-1.03}{\negimp{\(-27\%\)}} & 1.33
    & \specval{0.11}{\negimp{\(-27\%\)}}
    & \specval{0.11}{\negimp{\(-21\%\)}}
    & \specval{0.08}{\negimp{\(-27\%\)}}
    & 0.542 & 0.053 & 0.538
    & 0.25 & 0.22 & 0.28
    & -7.08 & -8.79 & -5.62
    & 0.03 & \underline{22.75} & 0.96 \\

    DecompOpt--MS3
    & O
    & \specval{-0.99}{\negimp{\(-22\%\)}} & 1.39
    & \specval{0.12}{\negimp{\(-20\%\)}}
    & \specval{0.09}{\negimp{\(-36\%\)}}
    & \specval{0.08}{\negimp{\(-27\%\)}}
    & 0.541 & 0.053 & 0.537
    & 0.25 & 0.21 & 0.28
    & -7.00 & -8.57 & -5.72
    & 0.04 & 22.76 & 0.97 \\

    DecompOpt--MS+ES
    & O
    & \specval{-0.83}{\negimp{\(-2\%\)}} & 1.59
    & \specval{0.17}{\posimp{\(+13\%\)}}
    & \specval{0.14}{\zeroimp{\(+0\%\)}}
    & \specval{\underline{0.14}}{\posimp{\(+27\%\)}}
    & 0.541 & 0.052 & 0.534
    & 0.25 & 0.22 & 0.28
    & -7.40 & -8.91 & -5.95
    & 0.03 & \underline{22.75} & 0.95 \\
    \hline

    DiffSBDD--SizeExt(+20)
    & O
    & \specval{-1.51}{\negimp{\(-86\%\)}} & 2.24
    & \specval{0.14}{\negimp{\(-7\%\)}}
    & \specval{0.11}{\negimp{\(-21\%\)}}
    & \specval{0.10}{\negimp{\(-9\%\)}}
    & \underline{0.480} & 0.037 & 0.476
    & 0.28 & 0.19 & 0.38
    & -7.57 & -8.25 & -6.37
    & 0.45 & 35.88 & 0.93 \\

    DiffSBDD--SizeExt(+30)
    & O
    & \specval{-2.20}{\negimp{\(-172\%\)}} & 3.76
    & \specval{0.08}{\negimp{\(-47\%\)}}
    & \specval{0.07}{\negimp{\(-50\%\)}}
    & \specval{0.07}{\negimp{\(-36\%\)}}
    & \textbf{0.468} & 0.046 & \textbf{0.461}
    & 0.32 & 0.23 & \textbf{0.42}
    & -7.14 & -8.40 & -5.44
    & 0.62 & 41.58 & 0.81 \\
    \hline

    \rowcolor{blue!10}
    Ours
    & O
    & \specval{\textbf{-0.37}}{\bestimp{\(+54\%\)}} & 2.54
    & \specval{\textbf{0.31}}{\bestimp{\(+107\%\)}}
    & \specval{\textbf{0.29}}{\bestimp{\(+107\%\)}}
    & \specval{\textbf{0.25}}{\bestimp{\(+127\%\)}}
    & \underline{0.480} & 0.050 & \underline{0.467}
    & \textbf{0.36} & \textbf{0.30} & \underline{0.41}
    & \underline{-8.63} & \underline{-10.43} & -6.31
    & \textbf{0.00} & 34.99 & 1.00 \\
    \bottomrule
    \end{tabular}%
    }

    \caption{
    Off-target-agnostic lead optimization analysis.
    The \(\widehat{\mathrm{Spec}}\) block is the primary metric for
    specificity enhancement; its second line reports the relative
    improvement over the initial lead,
    \(100\times(x-x_0)/|x_0|\), except for the standard deviation.
    Blue and red percentages indicate positive and negative improvement,
    respectively, and bold percentages indicate the largest improvement
    in each specificity column.
    SurfSpec achieves the largest empirical specificity improvement,
    demonstrating strong target-over-off-target preference gains without
    off-target access during optimization.
    The geometric mismatch results provide supporting evidence that
    SurfSpec attains this gain through controlled target-pocket geometric
    fitting rather than naive ligand expansion.
    Bold indicates the unique best value; underline marks all
    second-best values when the best value is unique. Docking Success is the fraction of attempted Vina docks that returned a finite score under the 1800\,s hard-timeout policy (timeouts and missing scores count as failures).
    }
    \label{tab:supple_main}
\end{table*}

\begin{table}[t]
    \centering
    \small
    \setlength{\tabcolsep}{5.0pt}
    \renewcommand{\arraystretch}{1.1}
    \begin{tabular}{lccccc}
    \toprule
    \multirow{2}{*}{Method}
    & \multicolumn{5}{c}{\textbf{Primary Metric: }$\widehat{\mathrm{Spec}}$ $\uparrow$} \\
    \cmidrule(lr){2-6}
    & Avg. & Std. & $>{0.2}$ & $>{0.4}$ & $>{0.6}$ \\
    \midrule
    Delete
    & \underline{-0.82} & 1.59 & \underline{0.19} & 0.13 & 0.12 \\
    PMDM--MS+ES
    & -1.02 & 1.86 & \underline{0.19} & 0.14 & \underline{0.14} \\
    DecompOpt--MS+ES
    & -0.93 & 1.36 & 0.17 & \underline{0.15} & \underline{0.14} \\
    \hline
    \rowcolor{blue!8}
    Ours
    & \textbf{-0.52} & 2.21 & \textbf{0.29} & \textbf{0.27} & \textbf{0.25} \\
    \bottomrule
    \end{tabular}
    \caption{
    Specificity-focused comparison under a second random seed for ligand
    generation and off-target sampling, complementing
    Tables~\ref{tab:main} and~\ref{tab:supple_main}.
    Docking is evaluated with a relaxed 1800\,s timeout.
    SurfSpec achieves the best empirical specificity across the average
    score and all thresholded success rates, indicating that its
    specificity improvement is not driven by a particular random seed.
    Bold indicates the unique best value; underline marks all second-best
    values when the best value is unique.
    }
    \label{tab:spec_seed42}
\end{table}

\subsection{Fine-Grained Empirical Specificity Distribution}
\begin{figure}[t]
\centering
\includegraphics[
    width=\columnwidth,
    clip,
    keepaspectratio
]{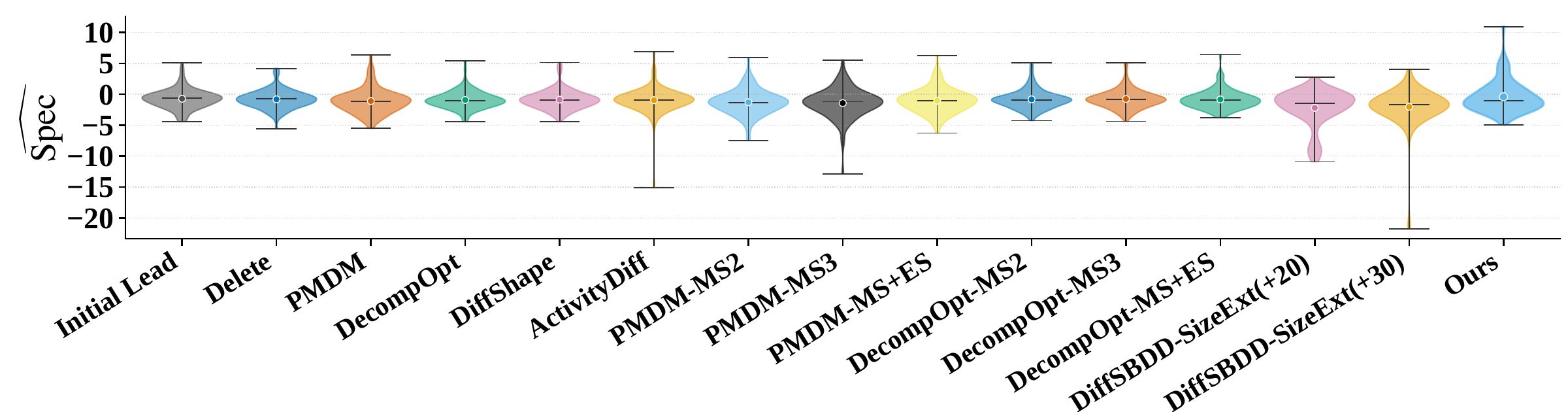}
\caption{
Fine-grained distribution of empirical specificity scores on CrossDocked2020.
Each violin plot shows the per-sample distribution of
\(\widehat{\Spec}\) for one lead optimization method, where
\(\widehat{\Spec}\) measures the target-pocket preference over the
strongest evaluated off-target pocket.
The inner marker indicates the central tendency.
SurfSpec shifts the specificity distribution upward compared with
target-only baselines, indicating more consistent improvement in
target-over-off-target preference.
}
\label{fig:specificity_distribution}
\end{figure}

In addition to the aggregate statistics reported in
Table~\ref{tab:main}, we visualize the full per-sample distribution of
empirical specificity scores.
Figure~\ref{fig:specificity_distribution} shows the distribution of
\(\widehat{\Spec}\) across generated ligands for each method.
SurfSpec produces an upward-shifted distribution relative to the initial
lead and target-only baselines, indicating that the improvement in
specificity is not driven only by a small number of outlier samples.
This supports that target-surface-directed growth provides more
consistent target-over-off-target preference across the benchmark.

\subsection{Qualitative Results}
\label{app:qualitative_comparison}

Figure~\ref{fig:qual_process_comparison} qualitatively compares optimized
ligands on three representative target pockets.
SurfSpec expands ligands toward under-filled pocket regions while
maintaining plausible pocket-bound conformations.
The inset values report the target-pocket geometric mismatch
\(d_{\mathrm{gm}}\), where lower values indicate better surface
complementarity.
These examples support that SurfSpec achieves controlled surface-directed
growth with low geometric mismatch.

\begin{figure*}[t]  
\centering
\begingroup

\setlength{\tabcolsep}{0pt}
\renewcommand{\arraystretch}{1.0}
\setlength{\fboxsep}{0.5pt}
\setlength{\fboxrule}{0.2pt}

\newcommand{\imgw}{0.18\textwidth}
\newcommand{\rowstyle}{\scriptsize\bfseries}
\newcommand{\colstyle}{\scriptsize\bfseries}
\newcommand{\valstyle}{\tiny\bfseries}

\newcommand{\figimg}[2]{%
  \adjustbox{valign=c}{%
    \begin{overpic}[
      width=\imgw,
      trim=2pt 2pt 2pt 2pt,
      clip
    ]{#1}
      \put(74,4){%
        \fcolorbox{black}{white}{\valstyle #2}%
      }
    \end{overpic}%
  }%
}

\newcommand{\rowlab}[1]{%
  \adjustbox{valign=c}{%
    \makebox[0.07\textwidth][l]{\rowstyle #1}%
  }%
}

\newcommand{\collab}[1]{%
  {\colstyle #1}%
}

\makebox[\textwidth][c]{%
\begin{tabular}{
@{}l
c@{\hspace{2pt}}
c@{\hspace{2pt}}
c@{\hspace{2pt}}
c@{\hspace{2pt}}
c@{}
}

&
\collab{Lead}
&
\collab{PMDM}
&
\collab{DecompOpt}
&
\collab{ActivityDiff}
&
\collab{Ours}
\\[1pt]

\rowlab{ACE}
&
\figimg{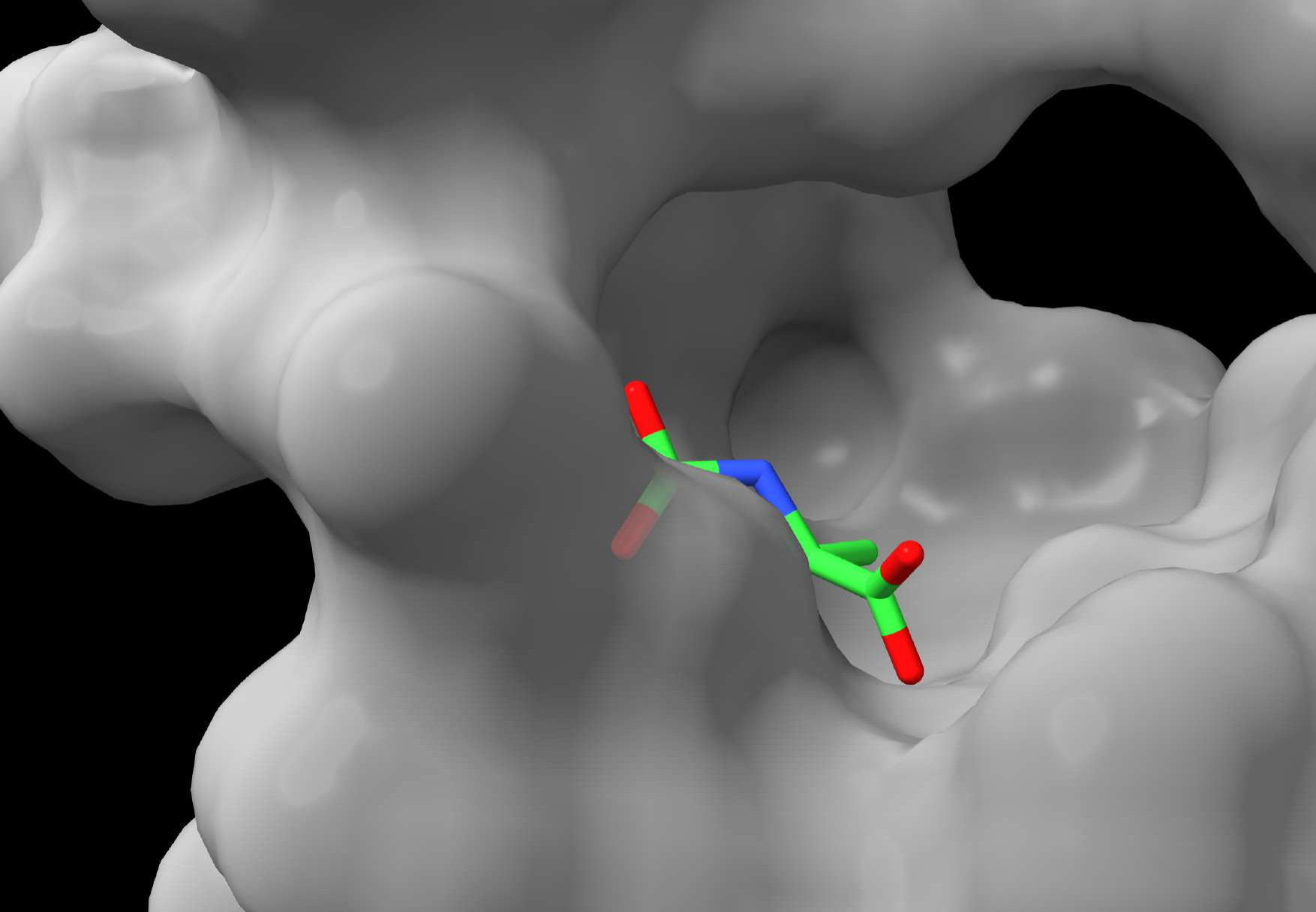}{0.631}
&
\figimg{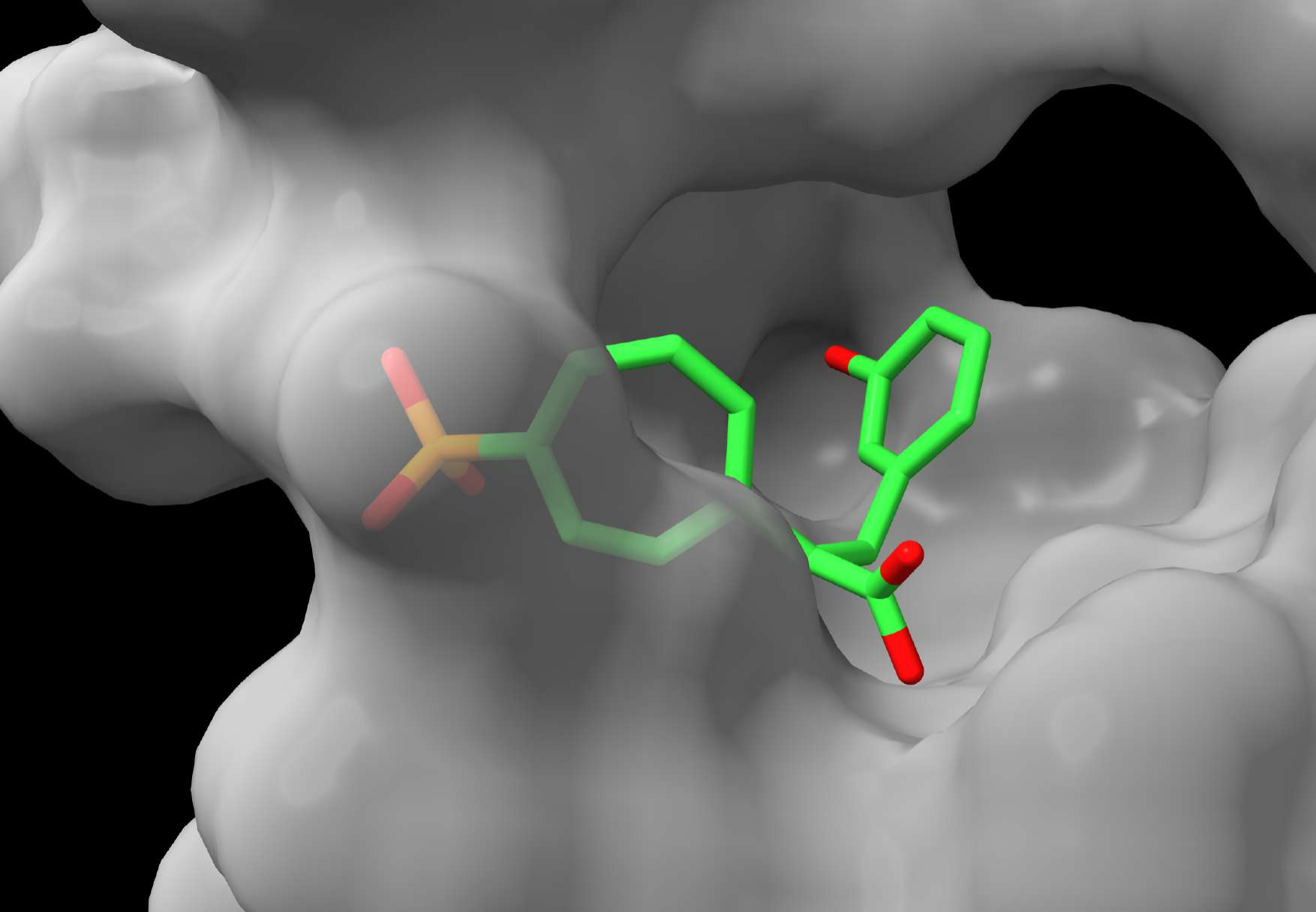}{0.542}
&
\figimg{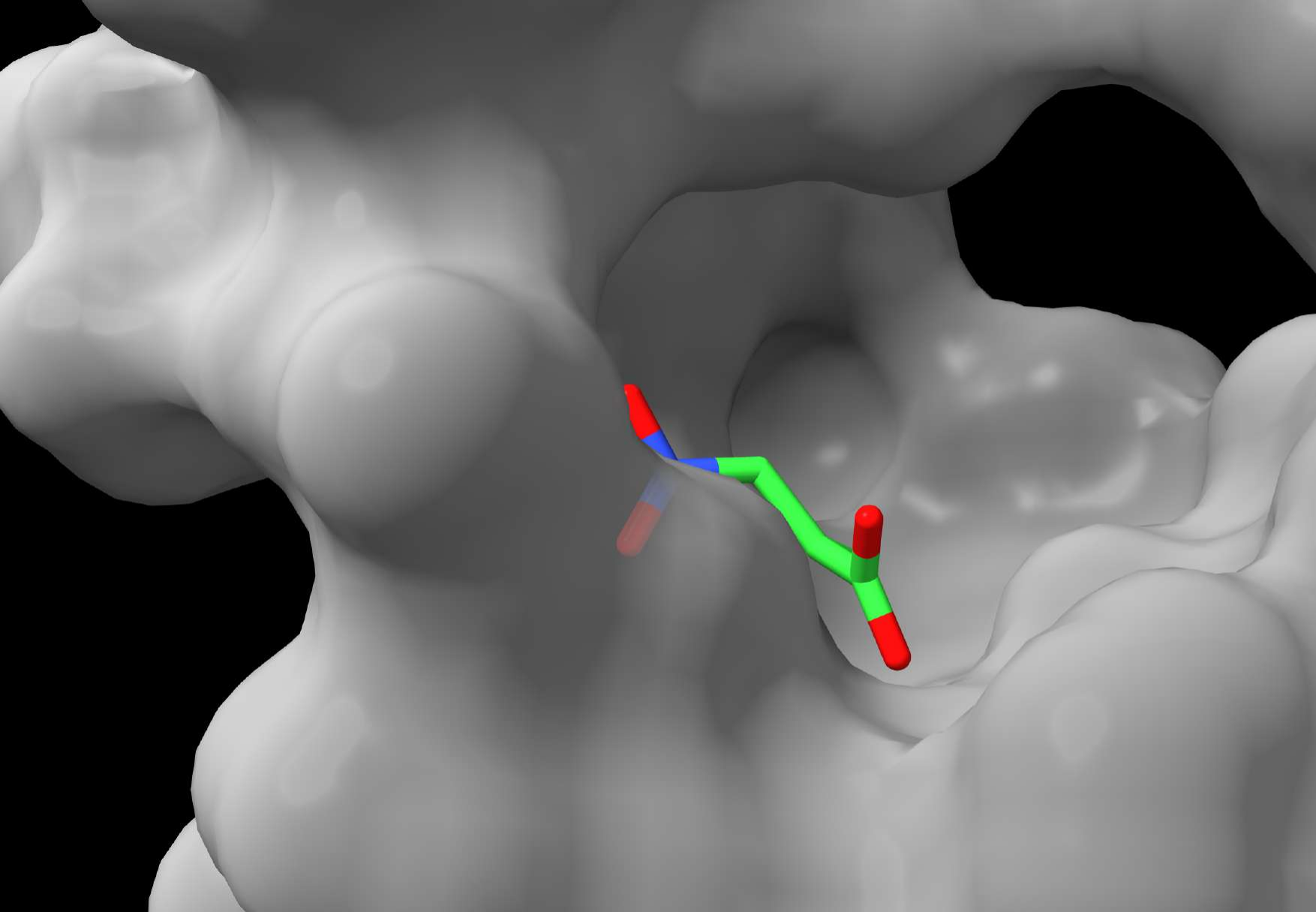}{0.634}
&
\figimg{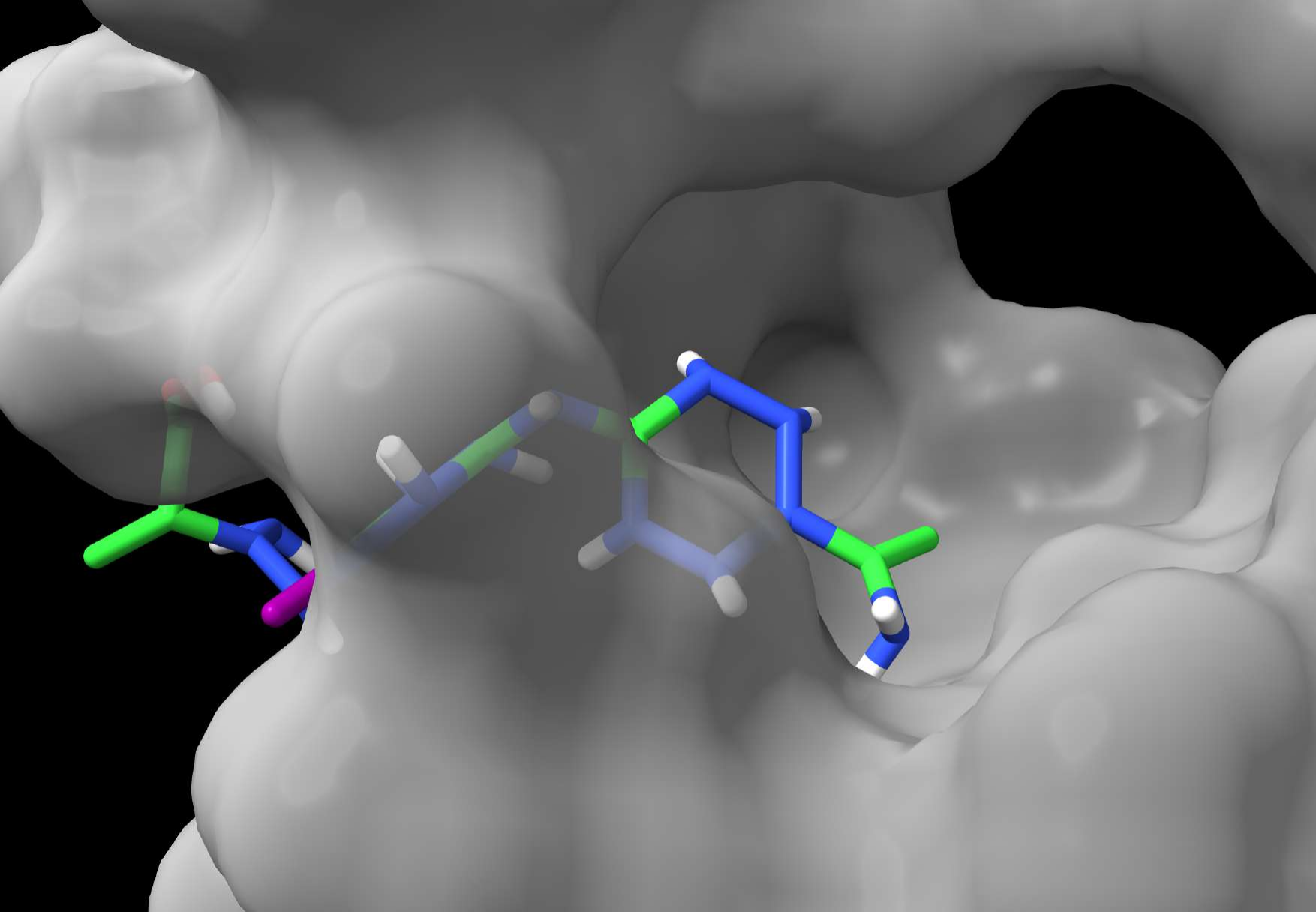}{0.626}
&
\figimg{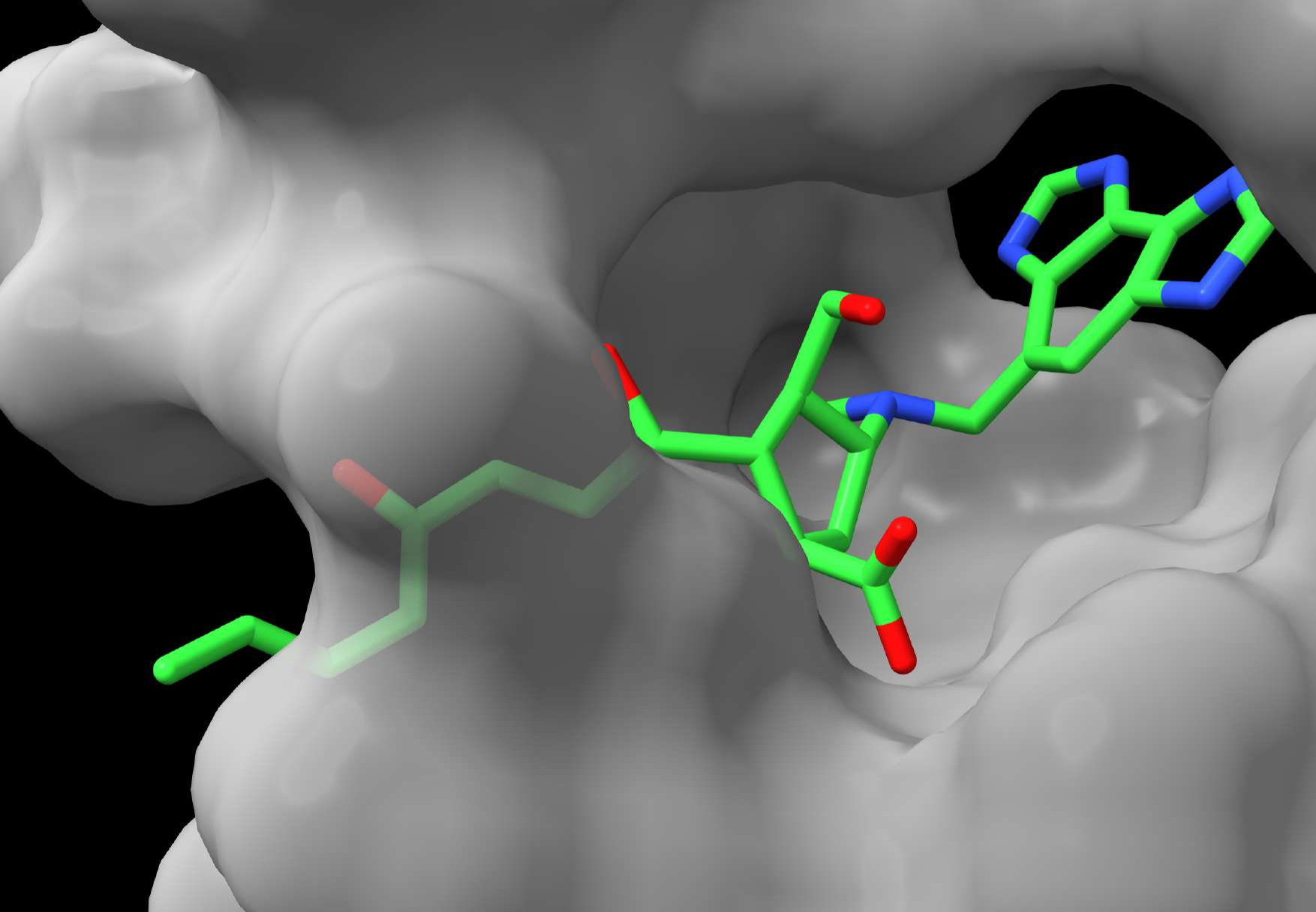}{0.474}
\\[2pt]

\rowlab{ATS5}
&
\figimg{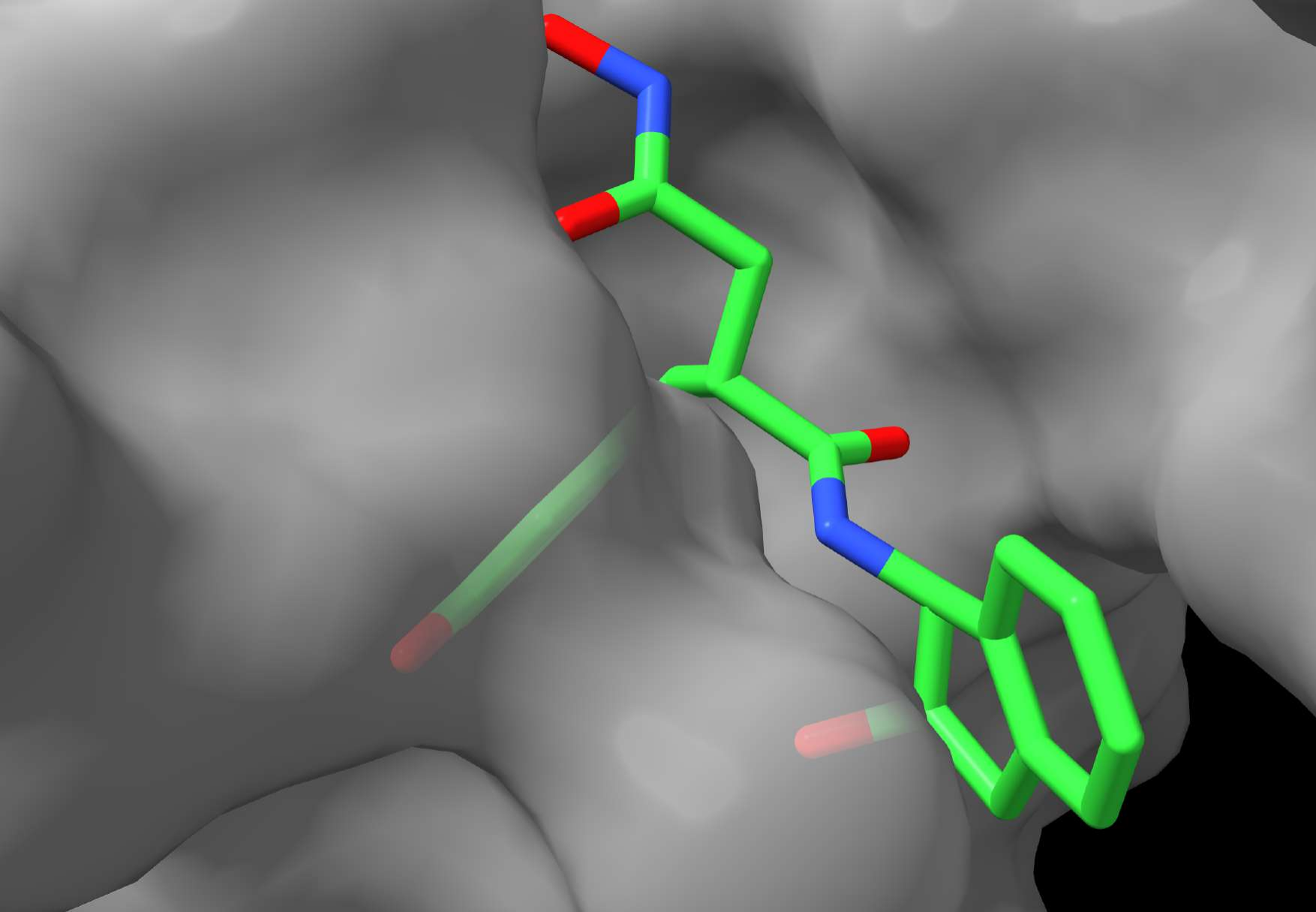}{0.485}
&
\figimg{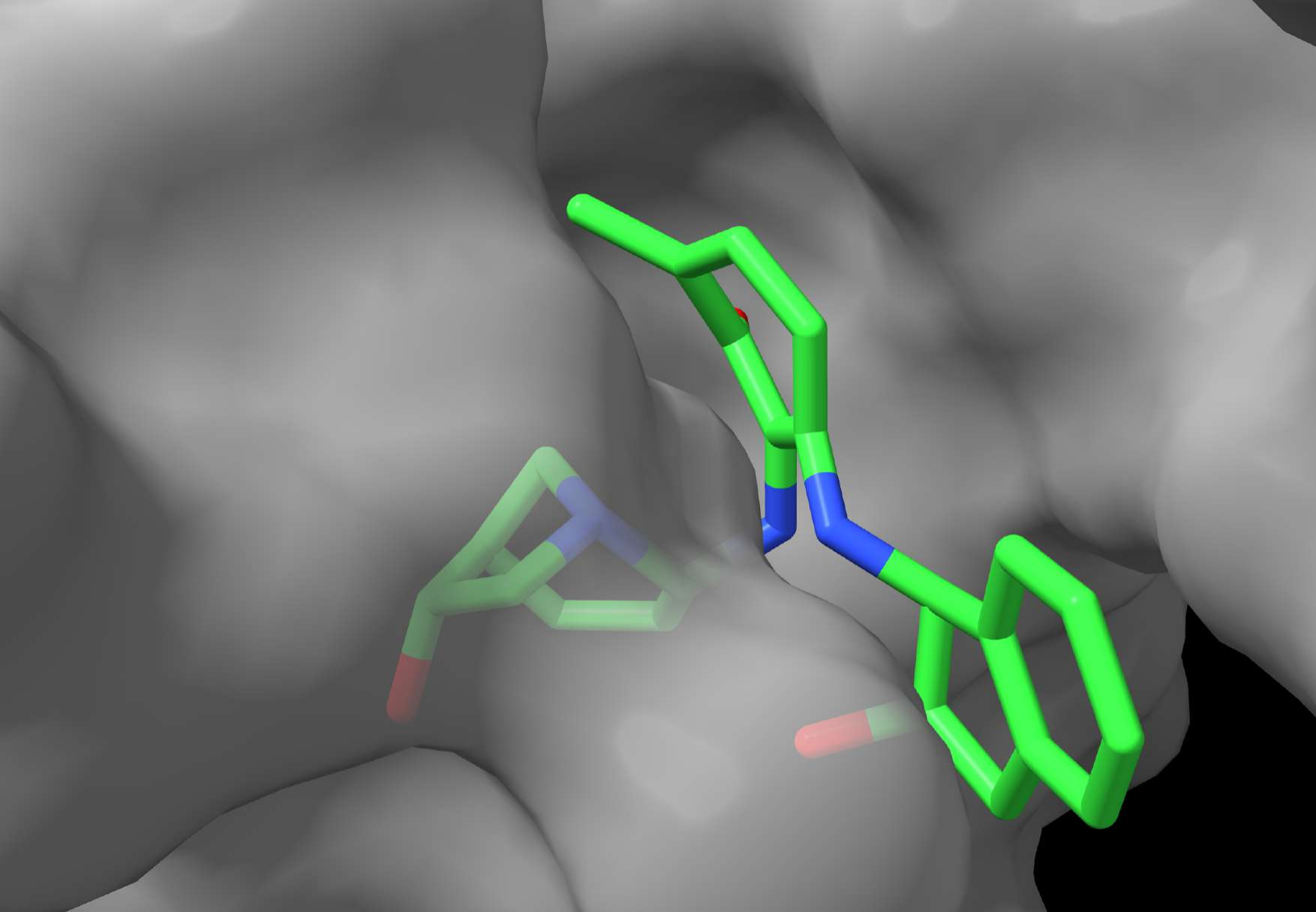}{0.447}
&
\figimg{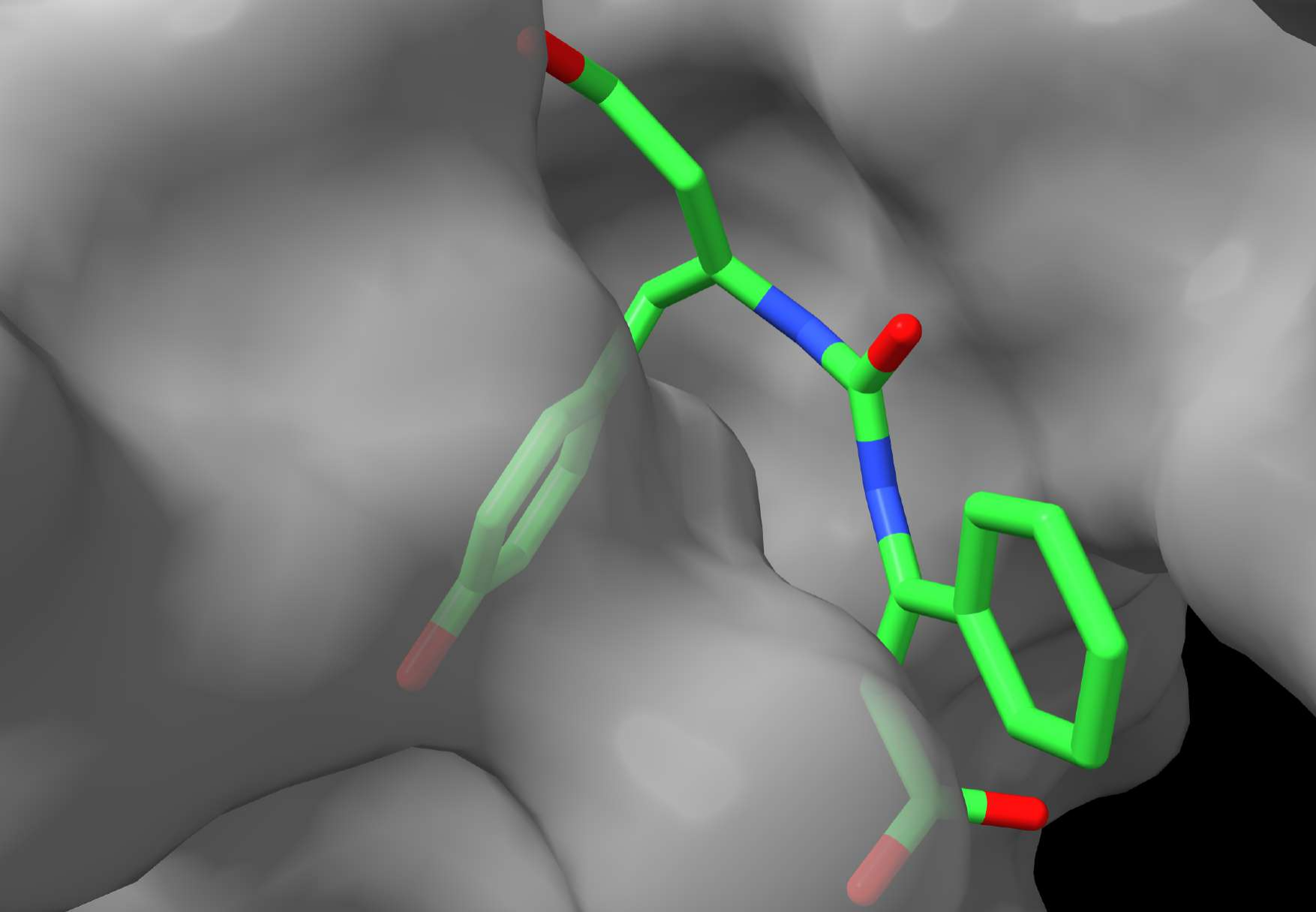}{0.474}
&
\figimg{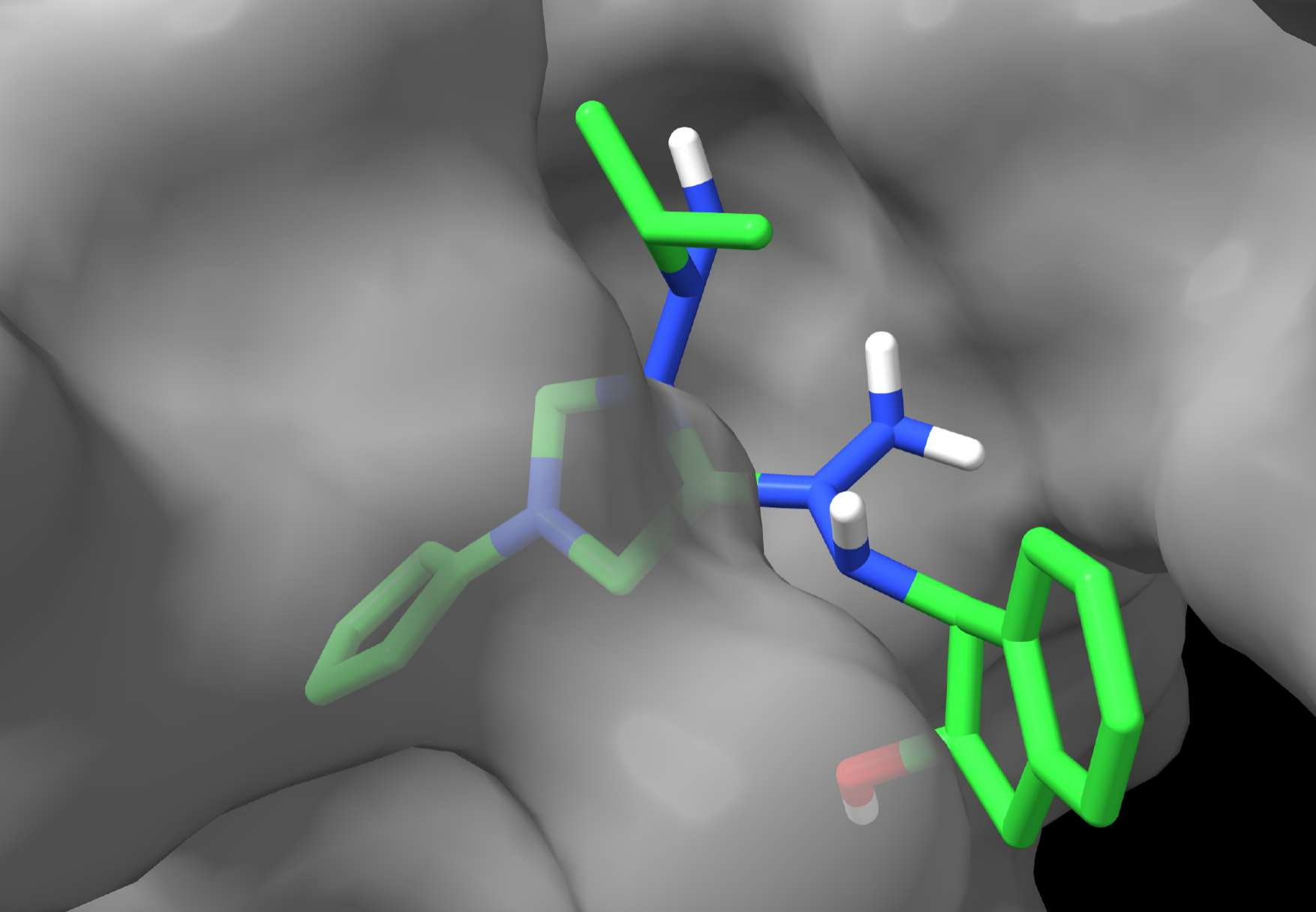}{0.484}
&
\figimg{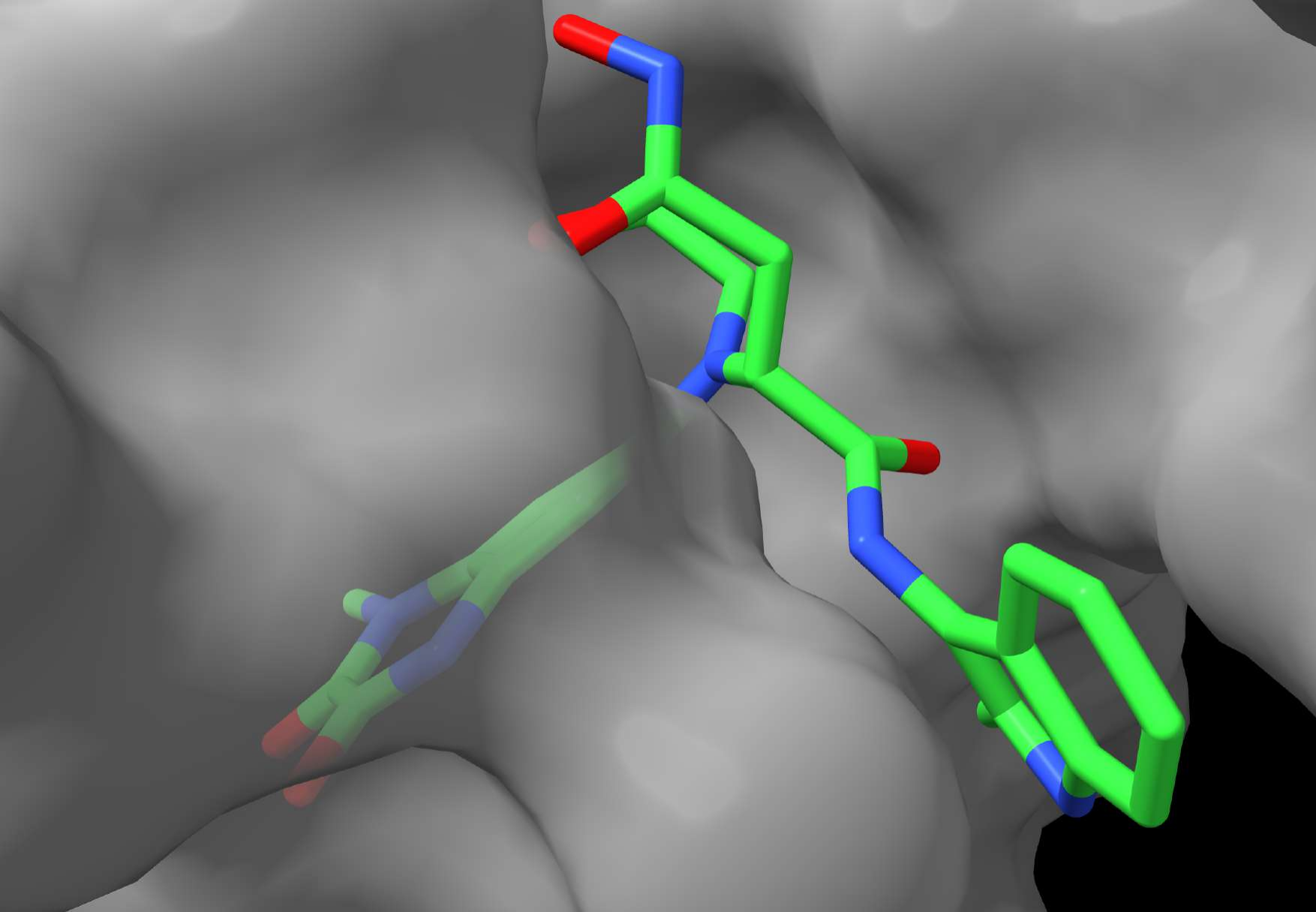}{0.452}
\\[2pt]

\rowlab{BAPA}
&
\figimg{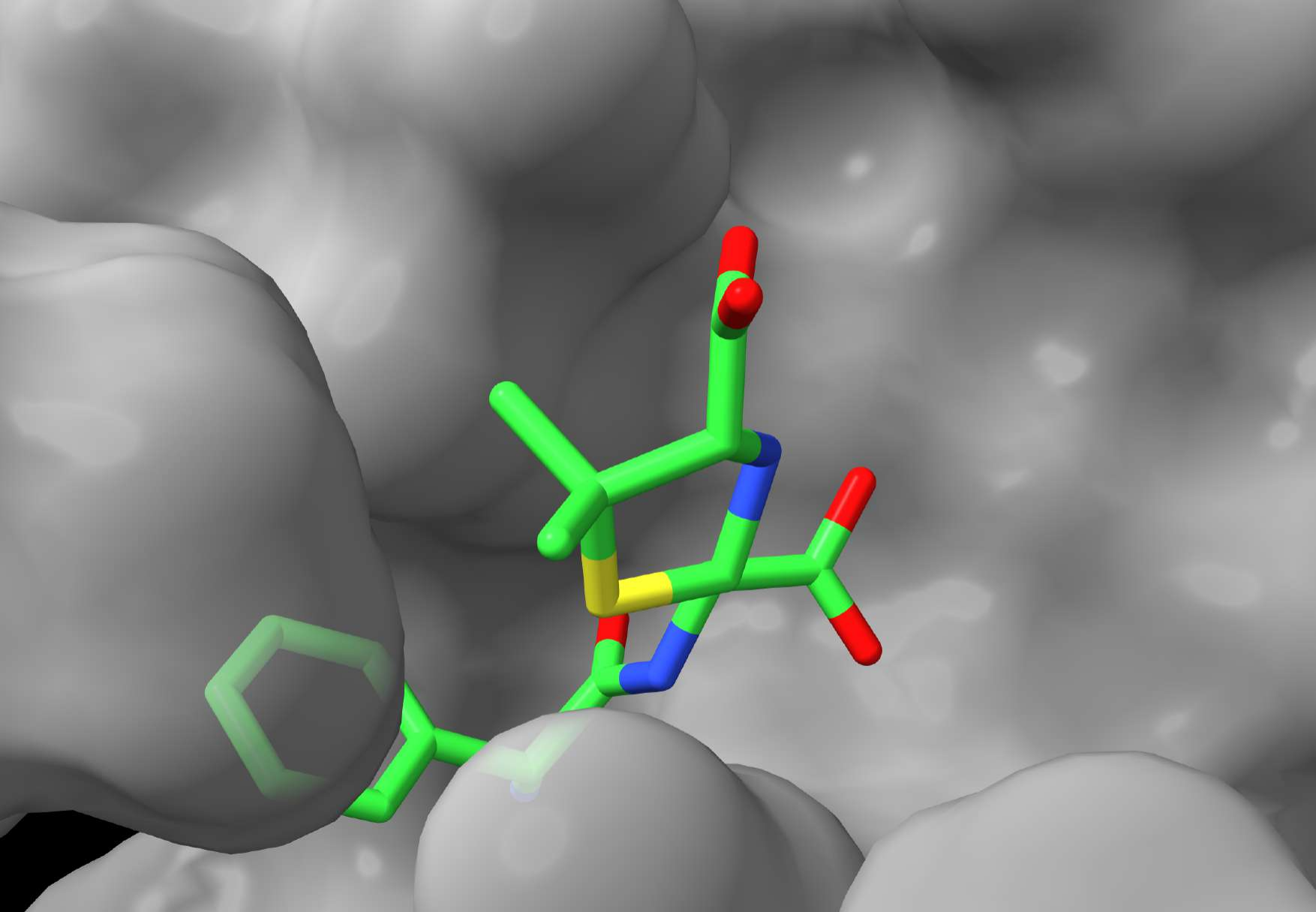}{0.500}
&
\figimg{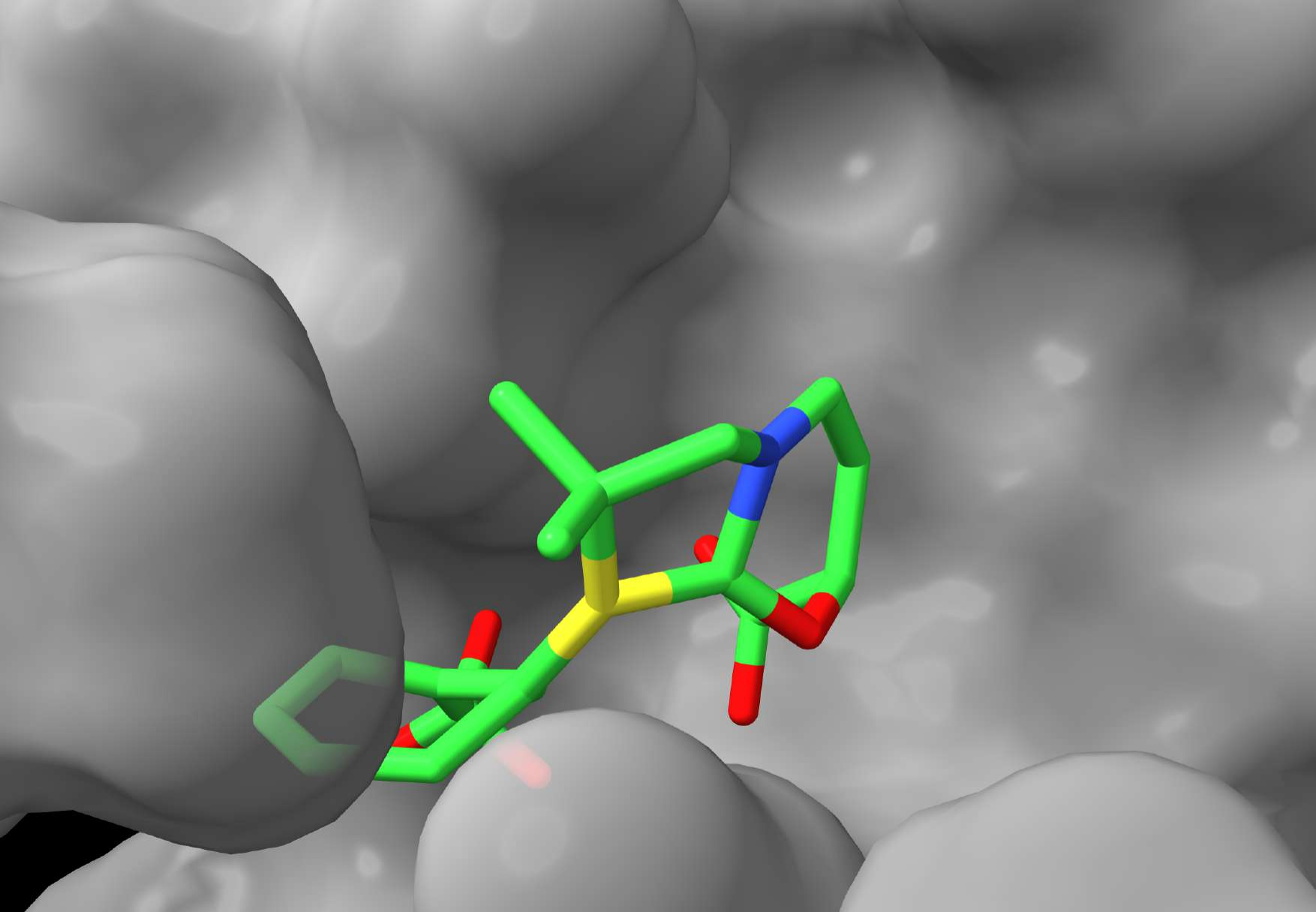}{0.464}
&
\figimg{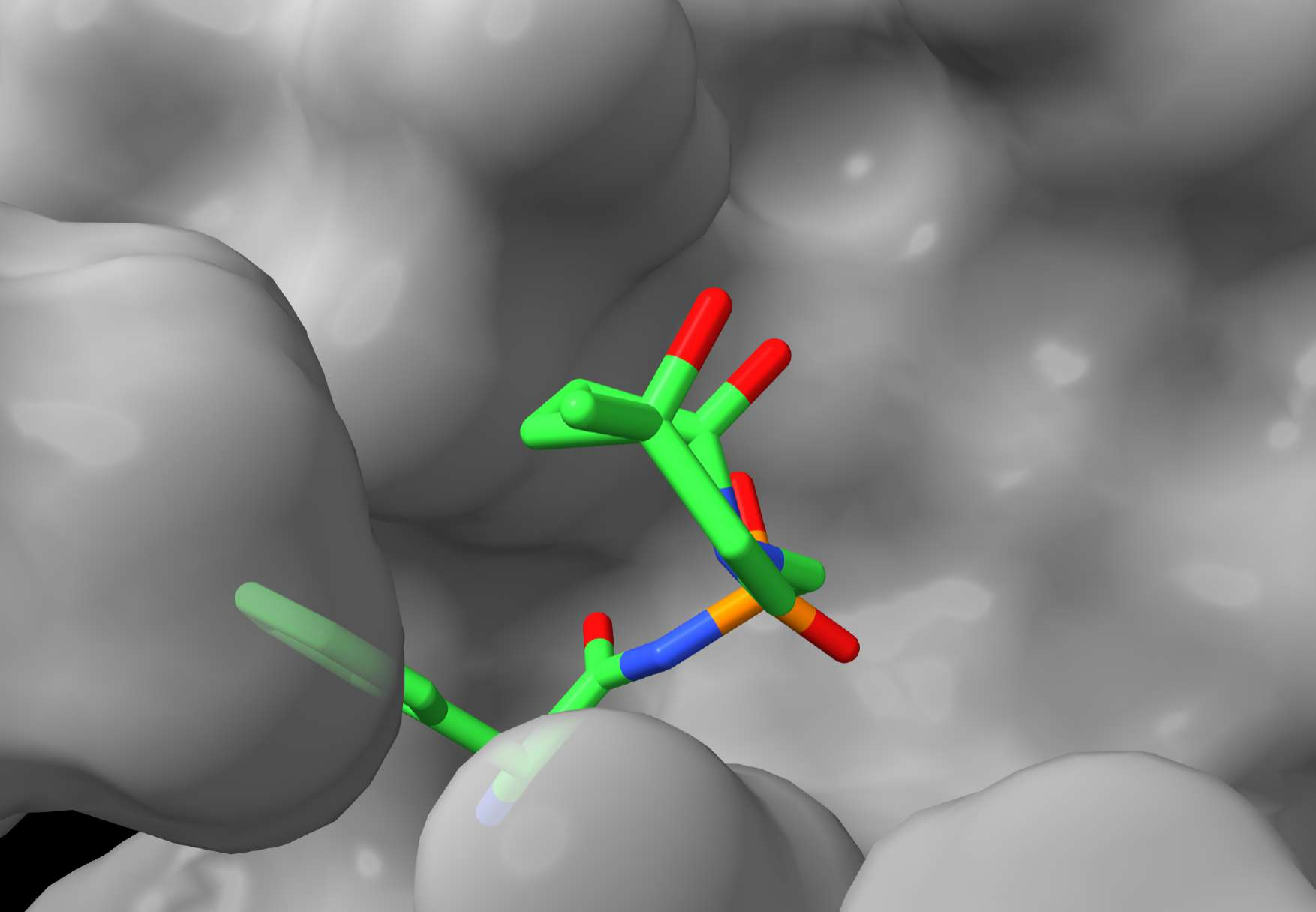}{0.508}
&
\figimg{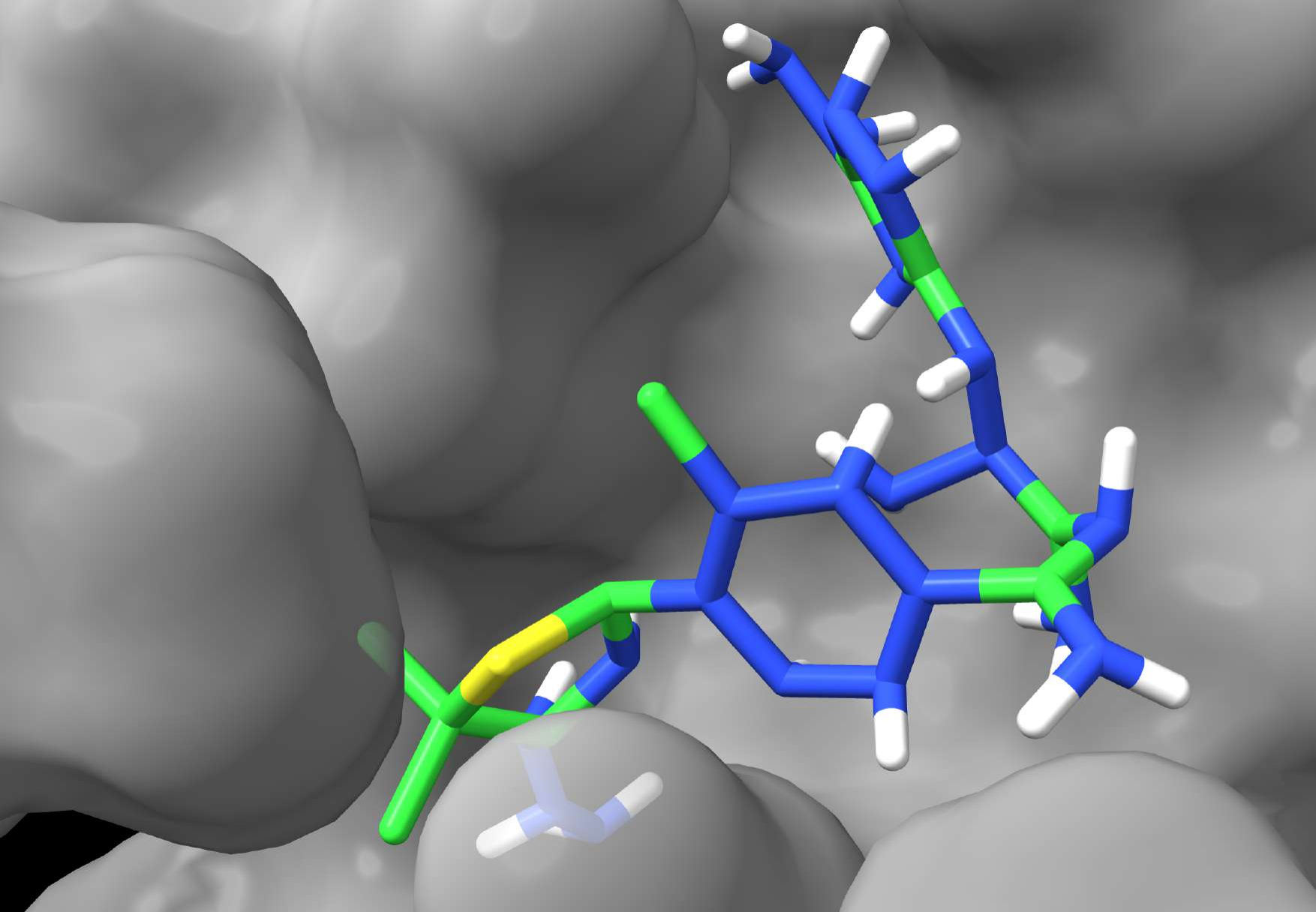}{0.514}
&
\figimg{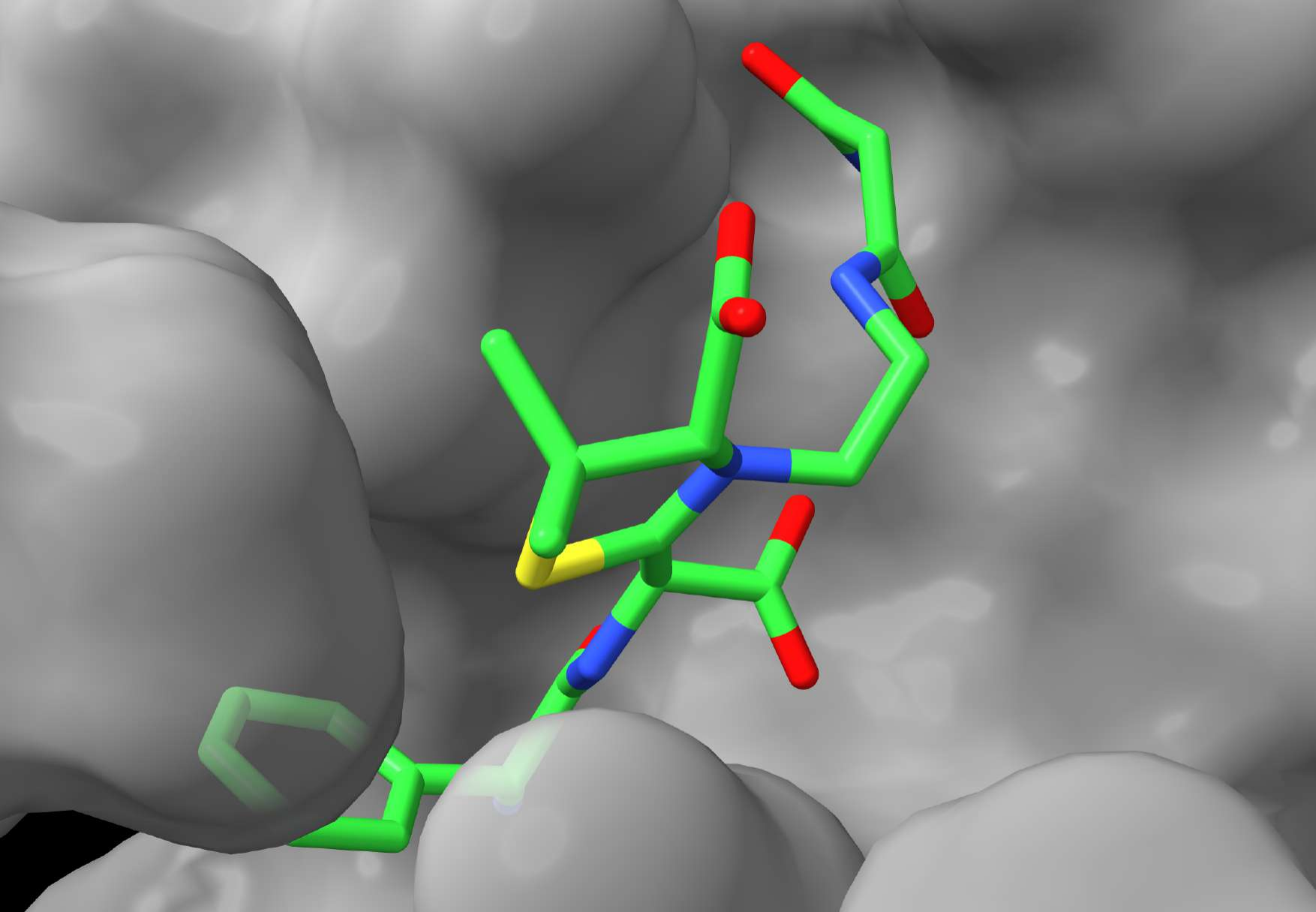}{0.461}

\end{tabular}
}

\caption{
Qualitative comparison of lead optimization results on three representative
target pockets.
Rows correspond to target pockets, and columns correspond to the initial
lead and optimized ligands generated by different methods.
The gray surface denotes the target pocket, and the inset value reports
the target-pocket geometric mismatch \(d_{\mathrm{gm}}\), where lower
values indicate better surface complementarity.
SurfSpec achieves low geometric mismatch across the shown examples while
maintaining plausible pocket-bound conformations.
}
\label{fig:qual_process_comparison}

\endgroup
\end{figure*}


\begin{figure}[t]
\centering
\includegraphics[
    width=0.5\columnwidth,
    clip,
    keepaspectratio
]{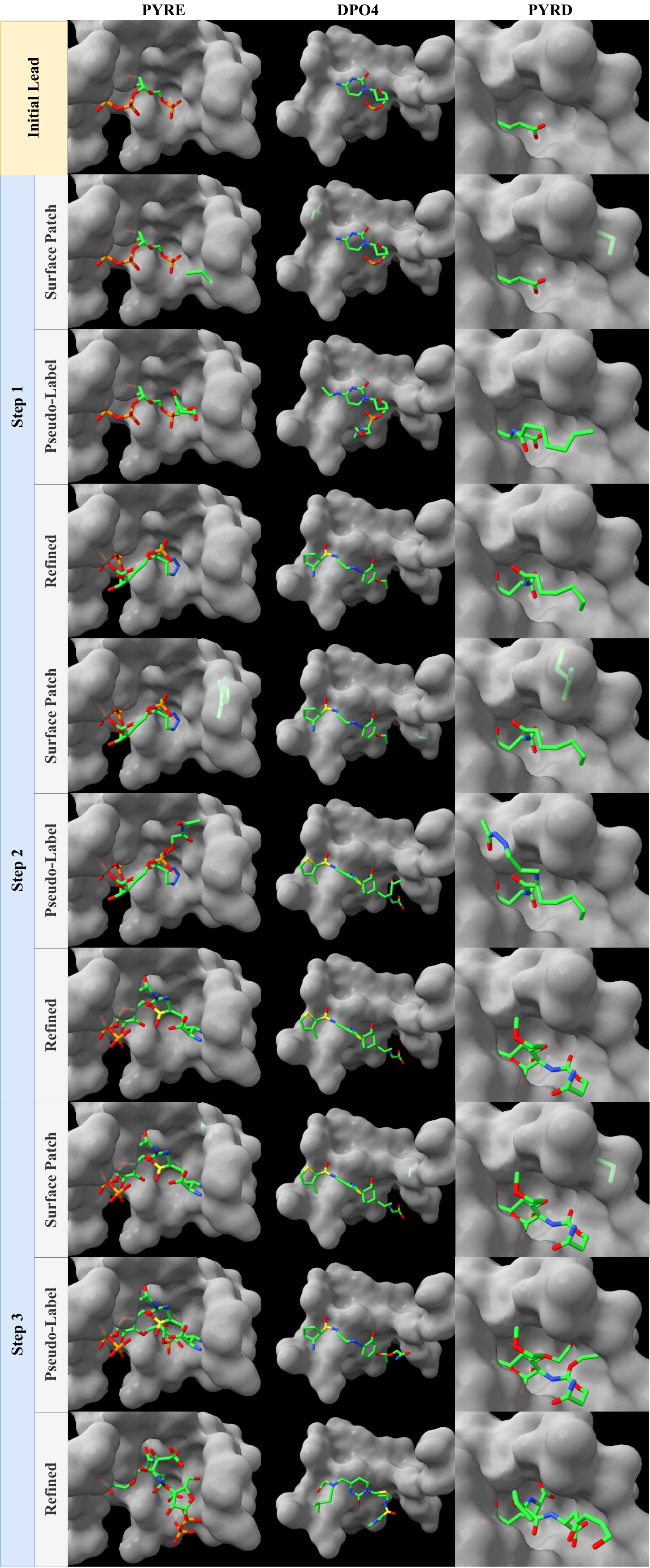}
\caption{
Qualitative visualization of SurfSpec ligand growth on representative
CrossDocked2020 complexes.
Columns show different target pockets, and rows show the initial lead and
the intermediate states across three growth iterations.
At each step, SurfSpec first selects an under-occupied target-surface
patch, constructs a surface-directed pseudo-label toward that patch, and
then refines the pseudo-label under the pocket-conditioned ligand prior.
The gray surface denotes the target pocket, the green sticks denote the
current ligand, and the pale green sticks indicate the selected
surface-directed growth region.
SurfSpec progressively expands the ligand toward uncovered pocket regions
while maintaining plausible pocket-bound conformations after refinement.
}
\label{fig:ligand_growth.pdf}
\end{figure}

\end{document}